\documentclass[%
 reprint,
superscriptaddress,
 amsmath,amssymb,
 aps,
 pra,
]{revtex4-2}

\usepackage[T1]{fontenc}

\usepackage{graphicx}
\usepackage{dcolumn}
\usepackage{bm}
\usepackage{comment}
\usepackage[colorlinks=true]{hyperref}
\graphicspath{{manuscript/}}

\usepackage{times}

\usepackage{mathtools}
\usepackage{amsmath}
\usepackage{amssymb}
\allowdisplaybreaks

\usepackage{amsthm}
\newtheorem{theorem}{Theorem}

\newtheorem{lemma}[theorem]{Lemma}

\newtheorem{definition}{Definition}
\newtheorem{proposition}[theorem]{Proposition}
\newtheorem*{remark}{Remark}

\newcommand{\ket}[1]{|#1\rangle}

\newcommand{\Tr}{\operatorname{Tr}}
\newcommand{\Pauli}{\mathcal{P}}

\newcommand{\E}{\mathbb E}
\newcommand{\Sym}{\operatorname{Sym}}

\newcommand{\R}{\mathbb R}

\begin{document}


\title{Enhancing Entanglement Purification with Shared Randomness}

\author{Allen Zang}
\affiliation{Pritzker School of Molecular Engineering, University of Chicago, Chicago, IL, USA}

\author{Bikun Li}
\affiliation{Pritzker School of Molecular Engineering, University of Chicago, Chicago, IL, USA}

\author{Xinan Chen}
\affiliation{Department of Electrical and Computer Engineering, University of Illinois Urbana-Champaign, Urbana, IL, USA}

\author{Eric Chitambar}
\affiliation{Department of Electrical and Computer Engineering, University of Illinois Urbana-Champaign, Urbana, IL, USA}

\author{Liang Jiang}
\affiliation{Pritzker School of Molecular Engineering, University of Chicago, Chicago, IL, USA}

\author{Martin Suchara}
\affiliation{IonQ Inc., College Park, MD, USA}

\author{Tian Zhong}
\affiliation{Pritzker School of Molecular Engineering, University of Chicago, Chicago, IL, USA}

\date{\today}

\begin{abstract}
Entanglement purification protocols (EPPs) are essential for improving entanglement fidelity to support fault-tolerant distributed quantum information processing. Practical entanglement sources are often heterogeneous and source labels may be unavailable at the EPP layer. We show that classical shared randomness, together with buffer memories, can enhance entanglement purification when source labels are unavailable, without state characterization or EPP circuit optimization. The strategy is to accumulate multiple entanglement distribution rounds and then use shared randomness to shuffle all the stored entangled states before packaging them as inputs to the EPP. For any $n$ Werner sources and any fixed $n$-to-1 bilocal Clifford EPP, we prove that accumulating and shuffling improves the expected success probability and the success-weighted output Bell fidelity over the baseline without accumulating and shuffling, for every $n$, for every finite number of accumulation rounds and in the asymptotic limit, and the improvement increases monotonically with the number of accumulation rounds. 
\end{abstract}

\maketitle

\textit{Introduction}---
Entanglement distribution quantum networks~\cite{kimble2008quantum,wehner2018quantum} are long envisioned as the backbone of future scalable distributed quantum information processing~\cite{knorzer2026distributed}. While quantum networks have multiple architectures that use technologies of different levels of realization difficulty~\cite{muralidharan2016optimal,azuma2023quantum}, entanglement purification protocols (EPPs)~\cite{bennett1996purification,deutsch1996quantum,bennett1996mixed,dur2007entanglement} are necessary for suppressing errors in quantum networks in the near term, and they will remain useful in further future as pre-processing before quantum error correction~\cite{gottesman1997stabilizer}. In realistic quantum networks, a persistent complication is that the input pairs to an EPP are almost never identical~\cite{zhou2020purification,qin2023efficient,zang2025no,zhou2025observation,zang2025entanglement,fayyaz2026purify}, due to the probabilistic nature of remote entanglement generation, inevitable decoherence of quantum memories which store entangled states, and other features of quantum network protocols~\cite{dur1999quantum,dur2003entanglement,chakraborty2019distributed,kolar2022adaptive,ghaderibaneh2022pre,zang2023entanglement,inesta2023performance,davies2024entanglement,zang2024analytical,zhan2025design}. Moreover, non-identical inputs are not merely a modeling nuisance, as recent results have shown fundamental limitations on the capability of fixed EPPs for general heterogeneous inputs~\cite{zang2025no}. 

The evidence of potential undesirable performance of EPPs for practical non-identical entangled states naturally raises a question: If two parties receive entangled pairs from heterogeneous sources and cannot reliably identify which source produced each pair, how can we improve the performance an entanglement purification protocol? Some straightforward approaches include EPP optimization~\cite{rozpkedek2018optimizing,krastanov2019optimized} which may require quantum benchmarking~\cite{martinis2015qubit,harper2020efficient,eisert2020quantum,kliesch2021theory,hashim2025practical} to obtain information of the input states, and careful scheduling of EPPs based on expected output, which also requires the knowledge of entangled states to purify. In this work, we show that shared randomness, a \textit{classical resource} which can be pre-generated locally and broadcast over classical communication channels, can enhance the performance of a fixed EPP in a surprisingly simple and inexpensive way, through symmetrization of the input state packages seen by the EPP.

The strategy, which we call accumulating and shuffling (AS), uses commonly assumed buffer quantum memories~\cite{wu2023qucomm,liu2025hardware} to accumulate entangled states distributed in multiple rounds and shared randomness to apply the same random permutation to Alice's and Bob's local memories before forming EPP input packages. Shared randomness is essential because Alice and Bob must choose matching local halves of the same entangled pairs. The result is an input-independent and resource-efficient randomized packaging primitive whose enhancement is provable, without the need for optimizing the EPP circuit or characterizing the states being distributed. 

In the following, we first formulate the operational model of the entanglement distribution scenario and explain the effect of the AS strategy therein. Then we demonstrate the mechanism of AS with a simplest scenario with two Werner souces and the bilocal CNOT EPP~\cite{bennett1996purification,deutsch1996quantum}. Finally we generalize the provable enhancement by AS to any $n$ Werner sources and any fixed $n$-to-1 bilocal Clifford entanglement purification protocol~\cite{jansen2022enumerating}. 

\begin{figure*}[t]
    \centering
    \includegraphics[width=0.9\linewidth]{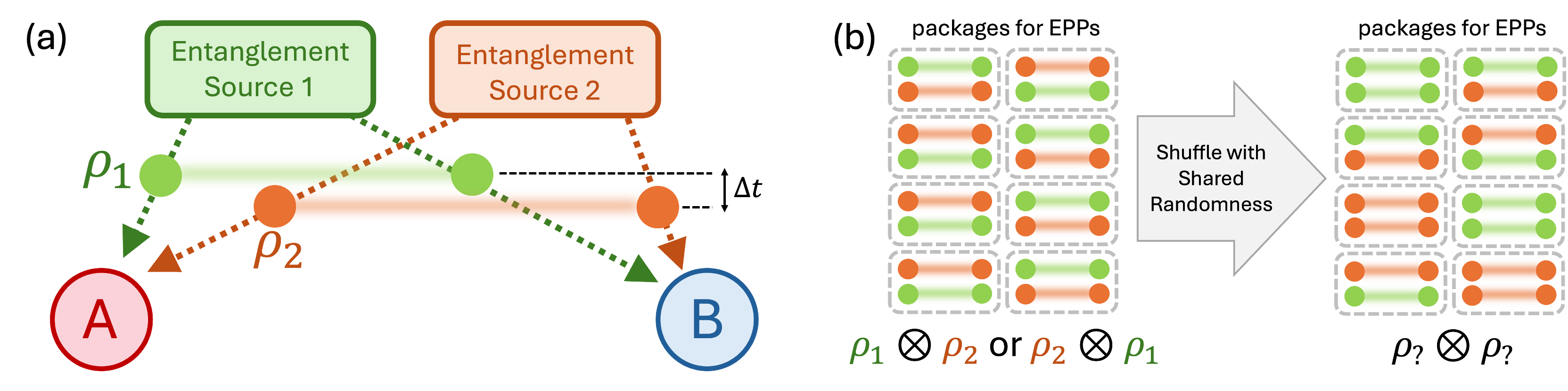}
    \caption{Two-source schematic of accumulating and shuffling. (a) Alice and Bob receive one pair from each of two non-identical sources in each distribution round, but the source labels of the received pairs are not available to the purification layer, likely due to e.g. a time jitter $\Delta t$ between the distribution of the entangled states that cannot be known beforehand. (b) With shared randomness and buffer memories, they accumulate over $m$ rounds, apply the same random shuffle to their local memories, and group the stored pairs into input packages for the EPP. The mechanism applies to general $n$ sources and $n$-to-1 EPPs.}
    \label{fig:schematics}
\end{figure*}

\textit{Operational model}---
Consider $n$ entanglement sources that, in one distribution round, provide Alice and Bob with noisy bipartite states $\rho_1,\dots,\rho_n$. The entanglement purification layer is ignorant of the identities of the sources: Both parties receive the $n$ pairs but do not know their source labels. Thus the natural non-AS baseline is not an omniscient ordering of the source states or a best-pair comparator. Instead, it should be the uniformly random source-label permutation
\begin{align}\label{eq:noas-baseline}
    \rho_{\mathrm{noAS}} = \frac{1}{n!} \sum_{\pi\in S_n} \rho_{\pi(1)}\otimes\cdots\otimes\rho_{\pi(n)},
\end{align}
where $S_n$ is the symmetric group of order $n$ and $\pi$ denotes a permutation in this group. We elaborate on the considerations and justifications of the operational assumptions in the End Matter.

With AS, Alice and Bob accumulate entangled states distributed over $m$ rounds, so that they hold $m$ copies of each source state $\rho_i$. They then use shared randomness to uniformly shuffle the full collection of $mn$ entangled pairs and divide the shuffled list into $m$ packages of size $n$. Each package is used once as an EPP input, so that the number of EPP attempts is fixed for fair comparison with the non-AS scenario, where Alice and Bob also perform $m$ times of EPP after $m$ entanglement distribution rounds.

Let $U_\pi$ denote the unitary that permutes the $n$ tensor factors of one package according to permutation $\pi\in S_n$. We can define the uniform-randomly permuted (symmetrized) $n$-partite state as $\Sym(X) = \frac{1}{n!}\sum_{\pi\in S_n}U_\pi X U_\pi^\dagger$. The effective single-package state produced by AS has the following closed form.
\begin{theorem}[Effective package state after AS]\label{thm:rho_eff}
    After accumulating $m$ copies of each $\rho_i$, uniformly shuffling all $mn$ pairs, and forming packages of size $n$, the expected state of any one package is
    \begin{align}\label{eq:rho-eff-general-main}
        \rho_{\mathrm{eff}}^{(m)} = \sum_{\substack{k_1+\cdots+k_n=n\\0\leq k_i\leq m}} \frac{\prod_{i=1}^n \binom{m}{k_i}}{\binom{mn}{n}} \Sym\left(\rho_1^{\otimes  k_1}\otimes\cdots\otimes\rho_n^{\otimes k_n}\right).
    \end{align}
    In the limit $m\to\infty$, this converges to $\rho_{\mathrm{eff}}^{(\infty)} = \bar{\rho}^{\otimes n}$ where $\bar{\rho}=\frac{1}{n}\sum_{i=1}^n\rho_i$.
\end{theorem}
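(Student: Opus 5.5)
By the symmetry of the shuffle, every package has the same marginal distribution. It therefore suffices to analyze the first package, i.e.\ the first $n$ slots of a uniformly random ordering of the $mn$ stored pairs. I will label the stored pairs by (source $i$, copy $j$) with $j\in\{1,\dots,m\}$, so that the pool is the multiset containing $m$ copies of each $\rho_i$. A uniform random permutation of the $mn$ pairs induces a uniformly random ordered $n$-tuple of distinct pair-labels in the first package. The package state, conditioned on that tuple, is the ordered tensor product of the corresponding $\rho_i$'s.

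The plan is to average in two stages.

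\textbf{Stage 1: the unordered content.} The unordered set of $n$ pair-labels in the package is a uniformly random $n$-subset of the $mn$ labels. The probability that it contains exactly $k_i$ copies of source $i$ for every $i$ is the multivariate hypergeometric weight
\[
\frac{\prod_i\binom{m}{k_i}}{\binom{mn}{n}},
\]
which is supported on $\sum_i k_i=n$ and $0\le k_i\le m$.

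\textbf{Stage 2: the order within the package.} Conditioned on the unordered subset, the order of its elements within the package is uniform over all $n!$ arrangements, and it is independent of which subset was drawn. Hence the conditional state is $\frac{1}{n!}\sum_\pi U_\pi(\rho_1^{\otimes k_1}\otimes\cdots\otimes\rho_n^{\otimes k_n})U_\pi^\dagger$, which is $\Sym(\rho_1^{\otimes k_1}\otimes\cdots\otimes\rho_n^{\otimes k_n})$. Identical copies of the same $\rho_i$ can be permuted among themselves without changing the tensor product, so this holds regardless of which specific copies were drawn. Summing over compositions $(k_1,\dots,k_n)$ with these weights gives Eq.~(\ref{eq:rho-eff-general-main}). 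Linearity of expectation justifies replacing the random state by its mixture.

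\textbf{Limit $m\to\infty$, main step.} For fixed $n$ and fixed $(k_i)$,
\[
\frac{\prod_i\binom{m}{k_i}}{\binom{mn}{n}}
\;\longrightarrow\;
\frac{\prod_i m^{k_i}/k_i!}{(mn)^n/n!}
=\frac{n!}{\prod_i k_i!}\,n^{-n}.
\]
These are exactly the multinomial probabilities of i.i.d.\ uniform source choices, as sampling without replacement approaches sampling with replacement. The sum is finite (at most $\binom{2n-1}{n}$ terms), so the limit passes inside the sum. We then need the identity
\[
\bar\rho^{\otimes n}
= n^{-n}\sum_{i_1,\dots,i_n}\rho_{i_1}\otimes\cdots\otimes\rho_{i_n}
= \sum_{(k_i)}\frac{n!}{\prod_i k_i!}\,n^{-n}\,\Sym\!\left(\textstyle\bigotimes_i\rho_i^{\otimes k_i}\right).
\]
To prove it, group the ordered index tuples by their type $(k_i)$. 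Each type class consists of $n!/\prod_i k_i!$ distinct orderings. The average over the class equals $\Sym$ of the canonical product, because the uniform average over $S_n$ covers each distinct ordering equally often, namely $\prod_i k_i!$ times. This bookkeeping of type classes versus $S_n$ orbits is the only place requiring care.

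Convergence holds in any norm, since there are finitely many terms with bounded operators.
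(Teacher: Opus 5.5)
Your proof is correct and follows the paper's occupation-number argument: the multivariate hypergeometric weight $\prod_i\binom{m}{k_i}/\binom{mn}{n}$ for the package content, the uniform in-package ordering giving $\Sym(\rho_1^{\otimes k_1}\otimes\cdots\otimes\rho_n^{\otimes k_n})$, and the law of total expectation. The paper only asserts the $m\to\infty$ limit, and your weight limit together with the type-class identity supplies it correctly; the paper's ordered form gives it more directly, since each ordered label tuple $a$ has weight $\prod_i (m)_{c_i(a)}/(mn)_n\to n^{-n}$, where $c_i(a)$ counts how often label $i$ occurs in $a$, so $\rho_{\mathrm{eff}}^{(m)}\to n^{-n}\sum_a\rho_{a_1}\otimes\cdots\otimes\rho_{a_n}=\bar{\rho}^{\otimes n}$ with no type-class bookkeeping.
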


The proof is based on sampling without replacement. For one package, we can use an occupation vector $k=(k_1,\dots,k_n)$ to record how many slots are filled by copies of each source state: $k_i$ is the number of copies of $\rho_i$ in that package. Thus $k_i\geq 0$, $\sum_i k_i=n$, and $k_i\leq m$ because only $m$ copies of each source have been accumulated according to our operational assumption. Choosing $k_i$ copies from source $i$ for every $i$ gives the multivariate hypergeometric weight $\prod_i\binom{m}{k_i}/\binom{mn}{n}$, and the assumed randomness also uniformly symmetrizes the slot order. See the Supplemental Material (Sec.~\ref{sec:eff_state})~\cite{supmat} for the complete proof, together with other equivalent proofs from different perspectives, including ordered-sampling and permutation-twirling derivations.

The relevant EPP performance follows from linearity. We use the following fact throughout the work.
\begin{lemma}\label{thm:eff_output}
    Consider an input-independent protocol with completely positive trace-non-increasing success map $\mathcal{E}_{\mathrm{succ}}$. If the effective input is $\rho=\sum_\alpha q_\alpha\rho_\alpha$, then
    \begin{align}
        p_{\mathrm{succ}}(\rho) &= \sum_\alpha q_\alpha p_{\mathrm{succ}}(\rho_\alpha),\\
        \rho_{\mathrm{succ}}(\rho) &= \frac{\sum_\alpha q_\alpha p_{\mathrm{succ}}(\rho_\alpha)\rho_{\mathrm{succ}}(\rho_\alpha)} {p_{\mathrm{succ}}(\rho)},
    \end{align}
    whenever the success probability $p_{\mathrm{succ}}(\rho)>0$.
\end{lemma}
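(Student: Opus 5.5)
The plan is to use only the linearity of the unnormalized success branch. Since the protocol is input-independent, the same map $\mathcal{E}_{\mathrm{succ}}$ acts on every input. For any state $\sigma$ in its domain I will define
\begin{align}
    p_{\mathrm{succ}}(\sigma) &= \Tr\,\mathcal{E}_{\mathrm{succ}}(\sigma),\\
    \rho_{\mathrm{succ}}(\sigma) &= \frac{\mathcal{E}_{\mathrm{succ}}(\sigma)}{p_{\mathrm{succ}}(\sigma)},
\end{align}
where the second line applies whenever $p_{\mathrm{succ}}(\sigma)>0$. This is just the standard quantum-instrument description: the success outcome corresponds to a CP trace-non-increasing branch, and the output is obtained by renormalizing that branch.

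\textbf{Step 1: success probability.} Write the effective input as a convex combination $\rho=\sum_\alpha q_\alpha\rho_\alpha$. Linearity of $\mathcal{E}_{\mathrm{succ}}$ gives $\mathcal{E}_{\mathrm{succ}}(\rho)=\sum_\alpha q_\alpha\mathcal{E}_{\mathrm{succ}}(\rho_\alpha)$. Taking the trace, which is also linear, immediately yields $p_{\mathrm{succ}}(\rho)=\sum_\alpha q_\alpha p_{\mathrm{succ}}(\rho_\alpha)$.

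\textbf{Step 2: output state.} I will normalize the same linear identity by $p_{\mathrm{succ}}(\rho)>0$. For each $\alpha$, I substitute $\mathcal{E}_{\mathrm{succ}}(\rho_\alpha)=p_{\mathrm{succ}}(\rho_\alpha)\rho_{\mathrm{succ}}(\rho_\alpha)$. This produces the stated weighted average.

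The only subtle point is a component with $p_{\mathrm{succ}}(\rho_\alpha)=0$, for which $\rho_{\mathrm{succ}}(\rho_\alpha)$ is undefined. In that case I will argue that $\mathcal{E}_{\mathrm{succ}}(\rho_\alpha)$ is positive semidefinite with zero trace, hence equal to $0$. The corresponding term therefore contributes nothing, under the convention that the product $p_{\mathrm{succ}}\rho_{\mathrm{succ}}$ is read as $\mathcal{E}_{\mathrm{succ}}(\rho_\alpha)=0$.

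\textbf{Why it matters here.} I will close by noting the physical meaning of input-independence. The EPP circuit is fixed and does not depend on which $\rho_\alpha$ was realized, so averaging over the randomness (source permutation, or the AS package content from Theorem~\ref{thm:rho_eff}) commutes with applying the map. There is no real obstacle beyond handling the zero-probability terms.
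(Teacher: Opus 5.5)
Your proposal is correct and follows the paper's argument: the paper justifies this lemma only as ``following from linearity'' (spelled out in the Supplemental Material's section on sources producing random states), treating the success branch as a linear CP trace-non-increasing map, taking the trace for $p_{\mathrm{succ}}$, and renormalizing by $p_{\mathrm{succ}}(\rho)>0$. Your explicit handling of components with $p_{\mathrm{succ}}(\rho_\alpha)=0$ is a small but welcome addition that the paper leaves implicit: there $\mathcal{E}_{\mathrm{succ}}(\rho_\alpha)$ is positive semidefinite with zero trace, hence zero.
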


\textit{Two-source example}---
We first show the mechanism in the simplest and most canonical setting with two Werner sources and the standard 2-to-1 bilocal CNOT protocol, which is equivalent to the DEJMPS protocol~\cite{deutsch1996quantum} up to local Clifford relabeling. The specific EPP is considered not only because it is the most experimentally friendly~\cite{pan2001entanglement,pan2003experimental,hu2021long,ecker2021experimental,reichle2006experimental,kalb2017entanglement,yan2022entanglement}, but also because there has been strong evidence of its optimality~\cite{rozpkedek2018optimizing,preti2022optimal,jansen2022enumerating}. A Werner state with Bell fidelity $F$ is $\rho^{(W)}(F) = F\Phi^+ + (1-F)\left(\Phi^-+\Psi^++\Psi^-\right)/3$, which can also be parameterized by visibility (Werner parameter) $w=(4F-1)/3$ as $\rho^{(W)}(w) = w\Phi^+ + (1-w)I/4$. Recall the standard Bell states $\ket{\Phi^\pm}=(\ket{00}\pm\ket{11})/\sqrt{2}$, $\ket{\Psi^\pm}=(\ket{01}\pm\ket{10})/\sqrt{2}$. We fix $\ket{\Phi^+}$ as the target Bell state, and $\Phi^\pm,\Psi^\pm$ are the projectors for the four Bell states.

For $n=2$, Theorem~\ref{thm:rho_eff} gives $\rho_{\mathrm{eff},n=2}^{(m)} = \frac{2m-2}{2m-1}\bar{\rho}\otimes\bar{\rho} + \frac{1}{2m-1}\Sym(\rho_1\otimes\rho_2)$, where $\bar{\rho}=(\rho_1+\rho_2)/2$. For Bell-diagonal inputs, the success probability and success-weighted output Bell fidelity of the bilocal CNOT protocol are invariant under exchanging the two input states. Consequently, for the Werner inputs considered here, $\Sym(\rho_1\otimes\rho_2)$ is operationally equivalent to $\rho_1\otimes\rho_2$ for these figures of merit. For this scenario we have the following result.
\begin{proposition}\label{thm:succ_improvement}
    For two Werner inputs with fidelities $F_1,F_2\in[1/2,1]$, AS improves both the DEJMPS success probability and the normalized successful output Bell fidelity over the non-AS baseline:
    \begin{align}
        p_{\mathrm{succ}}(\rho_{\mathrm{eff},n=2}^{(m)}) &\geq p_{\mathrm{succ}}(\rho_1\otimes\rho_2),\\
        F_{\mathrm{succ}}(\rho_{\mathrm{eff},n=2}^{(m)}) &\geq F_{\mathrm{succ}}(\rho_1\otimes\rho_2).
    \end{align}
    Moreover, we have monotonicity with the number of accumulation rounds $p_{\mathrm{succ}}(\rho_{\mathrm{eff},n=2}^{(m+1)}) \ge p_{\mathrm{succ}}(\rho_{\mathrm{eff},n=2}^{(m)})$ and $F_{\mathrm{succ}}(\rho_{\mathrm{eff},n=2}^{(m+1)}) \ge F_{\mathrm{succ}}(\rho_{\mathrm{eff},n=2}^{(m)})$.
\end{proposition}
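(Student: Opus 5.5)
The plan is to reduce everything to explicit bilinear polynomials in the two input parameters and then apply Lemma~\ref{thm:eff_output} to the two-term decomposition $\rho_{\mathrm{eff},n=2}^{(m)} = a_m\,\bar\rho\otimes\bar\rho + b_m\,\Sym(\rho_1\otimes\rho_2)$, where $a_m=\frac{2m-2}{2m-1}$ and $b_m=\frac{1}{2m-1}$. As noted before the proposition, $\Sym(\rho_1\otimes\rho_2)$ may be replaced by $\rho_1\otimes\rho_2$ for both figures of merit. The average $\bar\rho$ of two Werner states is again Werner, with $\bar F=(F_1+F_2)/2$, or equivalently $\bar w=(w_1+w_2)/2$.

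\textbf{Step 1: closed forms for the bilocal CNOT map.} For Bell-diagonal inputs, success means the $Z$-parities agree. The success probability is therefore $(A_1+D_1)(A_2+D_2)+(B_1+C_1)(B_2+C_2)$, where $A,D$ are the $\Phi^\pm$ weights and $B,C$ the $\Psi^\pm$ weights. For Werner inputs, $A+D=(1+w)/2$, which gives
\begin{align}
p(F_1,F_2)=\frac{1+w_1w_2}{2}.
\end{align}
The unnormalized output $\Phi^+$ weight is
\begin{align}
N(F_1,F_2)=F_1F_2+\frac{(1-F_1)(1-F_2)}{9}.
\end{align}
Both $p$ and $N$ are symmetric bilinear polynomials in $(F_1,F_2)$. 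The coefficient of $F_1F_2$ is $c_p=8/9$ in $p$ (from $w=(4F-1)/3$) and $c_N=10/9$ in $N$. Set $F_1=\bar F+\delta$ and $F_2=\bar F-\delta$. Symmetric bilinearity then gives $p(F_1,F_2)=\bar p-c_p\delta^2$ and $N(F_1,F_2)=\bar N-c_N\delta^2$, where $\bar p=p(\bar F,\bar F)$ and $\bar N=N(\bar F,\bar F)$.

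\textbf{Step 2: success probability.} By Lemma~\ref{thm:eff_output},
\begin{align}
p_{\mathrm{succ}}(\rho_{\mathrm{eff},n=2}^{(m)})=a_m\bar p+b_m p_{12},
\end{align}
with $p_{12}=p(F_1,F_2)$. Since $a_m+b_m=1$, this is a convex combination. We have $\bar p-p_{12}=c_p\delta^2\ge 0$, and $a_m$ is nondecreasing in $m$ (with $a_1=0$ recovering the baseline). Together these give both the improvement and the monotonicity in $m$.

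\textbf{Step 3: success-weighted fidelity, the main step.} Again by Lemma~\ref{thm:eff_output},
\begin{align}
F_{\mathrm{succ}}^{(m)}=\frac{a_m\bar N+b_m N_{12}}{a_m\bar p+b_m p_{12}},
\end{align}
a mediant of $\bar N/\bar p$ and $N_{12}/p_{12}$. The key inequality is therefore $\bar N/\bar p\ge N_{12}/p_{12}$. Given it, the mediant lies between the two ratios, equals the baseline at $a_m=0$, and moves monotonically toward $\bar N/\bar p$ as the ratio $a_m/b_m=2m-2$ increases. The monotonicity follows because $t\mapsto (t\bar N+N_{12})/(t\bar p+p_{12})$ is monotone, with the sign of its derivative equal to that of $\bar N p_{12}-N_{12}\bar p$.

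I expect this key inequality to be the only real obstacle. Using Step~1 it becomes
\begin{align}
(\bar N-c_N\delta^2)\,\bar p\le \bar N(\bar p-c_p\delta^2)
\iff c_N\bar p\ge c_p\bar N,
\end{align}
that is, $\tfrac{10}{9}\bar p\ge\tfrac{8}{9}\bar N$. This holds because the output $\Phi^+$ weight never exceeds the total success probability, $\bar N\le\bar p$. The case $p_{12}=0$ cannot occur since $p\ge 1/2$. I would double-check the constants $c_p$ and $c_N$ by expanding $p$ and $N$ in $F$, since the argument hinges on $c_N\ge c_p$. The hypothesis $F_i\in[1/2,1]$ appears not to be needed beyond keeping the setting physical.
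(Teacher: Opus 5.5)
Your proof is correct. Its skeleton is the same as the paper's: the convex decomposition of $\rho_{\mathrm{eff},n=2}^{(m)}$, Lemma~\ref{thm:eff_output}, and a mixing weight that increases with $m$. Your monotone mediant in $t=2m-2$ is the paper's $\beta_m$ argument in a different form.

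The genuine difference is in how you prove the key two-point inequality $F_{\mathrm{succ}}(\bar F,\bar F)\ge F_{\mathrm{succ}}(F_1,F_2)$.

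\begin{itemize}
\item The paper computes the difference as an explicit rational function, $3(F_1-F_2)^2(7-2F_1-2F_2)$ over $2(5-2F_1-2F_2+2F_1^2+2F_2^2+4F_1F_2)(5-2F_1-2F_2+8F_1F_2)$. It then checks the signs of numerator and denominator on $[1/2,1]^2$.
\item You instead use symmetric bilinearity, $f(\bar F+\delta,\bar F-\delta)=f(\bar F,\bar F)-c_f\delta^2$, to reduce everything to $c_N\bar p\ge c_p\bar N$. This follows from $\bar N\le\bar p$ together with $c_N=10/9>8/9=c_p$, and your constants check out.
\end{itemize}

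The two routes agree exactly. One has $c_N\bar p-c_p\bar N=\tfrac{2}{27}(7-2F_1-2F_2)$, and $81\,\bar p\,p_{12}$ is the product of the paper's two denominator factors. So your expression $\delta^2(c_N\bar p-c_p\bar N)/(\bar p\,p_{12})$ is precisely the paper's formula, including the $1/26$ value at $(F_1,F_2)=(1/2,1)$.

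What your route buys:
\begin{itemize}
\item It explains where the positive factor $7-2F_1-2F_2$ comes from and needs no sign analysis over the square.
\item It confirms your remark that the restriction $F_i\ge 1/2$ is not needed. For that remark, cite $p_{12}\ge 1/3$, which holds for all Werner pairs since $w_i\in[-1/3,1]$, rather than $p\ge 1/2$.
\item It carries over verbatim to any 2-to-1 protocol whose success probability and target-output weight are symmetric bilinear with $0\le c_p\le c_N$.
\end{itemize}

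The paper's computation is a more direct but less transparent verification.

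One cosmetic point: your Step 3 is titled ``success-weighted fidelity,'' but it actually proves the normalized $F_{\mathrm{succ}}$ claim. That is the right target, since it is what the proposition states.
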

The results are already clear by examining the asymptotic limit with the effective state of an input package being $\bar{\rho}\otimes\bar{\rho}$. Given Werner inputs we have the following hold for success probability difference $ p_{\mathrm{succ}}(\bar{\rho}\otimes\bar{\rho}) - p_{\mathrm{succ}}(\rho_1\otimes\rho_2) = 2(F_1-F_2)^2/9\geq 0$, and the successful fidelity difference $F_{\mathrm{succ}}(\bar{\rho}\otimes\bar{\rho})-F_{\mathrm{succ}}(\rho_1\otimes\rho_2)$ is also non-negative on $[1/2,1]^2$. Moreover, these two differences are \textit{strictly positive} except for the identical-input case ($\rho_1=\rho_2$). Quantitatively, the highest enhancement appears in the extreme case of $F_1=1/2$ and $F_2=1$, with $p_{\mathrm{succ}}(\bar{\rho}\otimes\bar{\rho})-p_{\mathrm{succ}}(\rho_1\otimes\rho_2) = 1/18$  which is a $8.3\%$ enhancement over the non-AS baseline, and $F_{\mathrm{succ}}(\bar{\rho}\otimes\bar{\rho})-F_{\mathrm{succ}}(\rho_1\otimes\rho_2) = 1/26$ which is a $5.1\%$ enhancement over the non-AS baseline. The detailed derivation can be found in the Supplemental Material (Sec.~\ref{sec:bilocal_cnot_epp_results})~\cite{supmat}. Finite $m$ enhancement immediately follows from Lemma~\ref{thm:eff_output}, because $\rho_{\mathrm{eff},n=2}^{(m)}$ is a convex mixture of the baseline package and the symmetrized package $\bar{\rho}\otimes\bar{\rho}$. Thus AS helps by partially replacing a heterogeneous input pair with the average-input pair that the EPP processes more favorably. In the following, we will see that the ``preference'' for more identical input states is in fact a very general phenomenon among EPPs, which allows us to prove the guaranteed enhancement via the AS strategy that is independent of the $n$ Werner sources, for all $n$-to-1 bilocal Clifford EPPs (biCEPs).

\textit{Mechanism and properties of biCEP}---
We now generalize from the 2-to-1 bilocal CNOT EPP to arbitrary bilocal Clifford EPPs (biCEPs).
\begin{definition}[$n$-to-1 biCEP~\cite{jansen2022enumerating}]\label{def:bicep-main}
    An $n$-to-1 bilocal Clifford entanglement purification protocol (biCEP) takes $n$ noisy Bell pairs shared by Alice and Bob. Alice applies an $n$-qubit Clifford $U$ to her halves and Bob applies the entry-wise complex conjugate $U^*$ to his halves. They then measure $n-1$ output pairs in the computational basis, compare the corresponding local outcomes by one round of classical communication, accept iff all $n-1$ measured pairs have even parity, and keep the remaining pair as the output.
\end{definition}

While the mechanism of biCEPs can be understood from different perspectives, the stabilizer-code interpretation~\cite{gottesman1997stabilizer,dur2007entanglement,goodenough2024near,shi2025stabilizer,bonilla2025constant,li2025efficient} is likely the most convenient way to analyze these protocols for this work. Let $S_0=\langle Z_2,\dots,Z_n\rangle$, i.e. a group generated by $Z_i,i=2,\dots,n$, be the stabilizer group of the trivial $[[n,1]]$ stabilizer code whose logical qubit is the first qubit, where $Z_i$ is the Pauli $Z$ operator applied on the $i$-th qubit. The Clifford $U$ transforms the trivial code to a new code $S_U = U^\dagger S_0 U = \langle g_2,\dots,g_n\rangle$ with $g_j=U^\dagger Z_j U$. Thus comparing Alice's and Bob's computational-basis measurement outcomes on qubit pair $j$ is equivalent to a bilocal stabilizer measurement of $g_{j,A}\otimes g_{j,B}^*$, where $O^*$ is the entry-wise complex conjugate of $O$.

For the tensor product of $n$ Bell-diagonal states (BDS), the joint state is equivalent to applying an $n$-qubit Pauli channel on one of the two parties that share the $n$ perfect Bell pairs $(\Phi^+)^{\otimes n}$. Without loss of generality, if a Pauli error string $E\in\mathcal P_n$ acts on Bob's side of initially perfect Bell pairs, the protocol is considered successful exactly when $E$ commutes with all elements of $S_U$, i.e. the Pauli error string belongs to the normalizer of the stabilizer group $E\in N(S_U)$, where $N(S_U)=\{h\in\mathcal{P}_n : hs=sh, \forall s\in S_U\}$ and $\mathcal{P}_n$ is the $n$-qubit Pauli group; and the accepted output is the target $\Phi^+$ exactly when the induced logical operator is trivial, i.e. $E\in S_U$. Hence a biCEP can be viewed as a process of stabilizer code-based error detection: Trivial syndrome leads to acceptance (success). In the actual biCEP circuit, we do not perform syndrome measurements with extra ancillas, but instead measure $n-1$ physical Bell pairs followed by the Clifford circuit, which can be understood as a combination of syndrome measurement and decoding. The output entangled state can then be interpreted as the ``decoded logical Bell pair''. Once accepted, errors that differ by a stabilizer will act identically on the decoded logical Bell pair, because stabilizer elements are trivial on the codespace. The output error is therefore not $E$ itself but an equivalence class in the quotient group $N(S_U)/S_U$. Explicitly, for an accepted error $E\in N(S_U)$, its logical coset is $E S_U = \{E s:s\in S_U\}$. Note that the left and right cosets are equivalent here. All Pauli operators in this set differ only by stabilizers and therefore have the same action after decoding. Modulo irrelevant global phases, the quotient $N(S_U)/S_U$ is the Pauli group of the single logical qubit of the $[[n,1]]$ code. Thus the identity coset $S_U$ gives the target output $\Phi^+$, while the other cosets correspond to logical Pauli errors on the kept pair.

For Bell diagonal state (BDS) inputs this interpretation gives closed expressions. One can always write a BDS input entangled pair as the result of applying a 1-qubit Pauli channel (statistical mixture of applying single Pauli operators) on Bob's half of $\Phi^+$, i.e. $\rho_i = \sum_{P\in\{I,X,Y,Z\}} p_i(P)(I\otimes P)\Phi^+(I\otimes P)$. For independent inputs, the joint state is $\bigotimes_{i=1}^n\rho_i = \sum_{E\in\{I,X,Y,Z\}^{\otimes n}} p(E)(I\otimes E)(\Phi^+)^{\otimes n}(I\otimes E)$. The Pauli error string $E=E_1\otimes\cdots\otimes E_n$ occurs with probability $p(E) = \prod_i p_i(E_i)$, where $E_i$ is a 1-qubit Pauli operator applied on qubit $i$. The stabilizer interpretation immediately gives
\begin{align}\label{eq:bicep-error-set-main}
    p_{\mathrm{succ}} &= \Pr(E\in N(S_U)),\\
    F_{\mathrm{succ}} &= \frac{\Pr(E\in S_U)}{\Pr(E\in N(S_U))}.
\end{align}

Equivalently, and more useful for the later AS theorem, the success probability and the target-output numerator can be written as averages of Pauli correlation, i.e. the expectation value of conjugate bilocal Pauli operators $\langle P_A\otimes P_B^*\rangle$. For an $n$-qubit Pauli operator $t$, the Bell-error state $(I\otimes E)(\Phi^+)^{\otimes n}(I\otimes E)$ is an eigenstate of $t_A\otimes t_B^*$ with eigenvalue $+1$ if $[E,t]=0$ and eigenvalue $-1$ if $\{E,t\}=0$, up to the irrelevant global phases of Pauli operators. Therefore, each bilocal Pauli observable tests a commutation sign of the error and the check. The success event is the condition that $E$ commutes with every generator $g_j$ of $S_U$, corresponding to the trivial-syndrome projector $\prod_{j=2}^n(I+g_{j,A}\otimes g_{j,B}^*)/2=|S_U|^{-1}\sum_{s\in S_U}s_A\otimes s_B^*$. Then the success probability given a general $n$-pair joint input state $\rho$ is $p_{\mathrm{succ}} = \Pr(E\in N(S_U)) = |S_U|^{-1}\sum_{s\in S_U}\Tr[(s_A\otimes s_B^*)\rho]$. The numerator of the successful output fidelity is $p_{\mathrm{succ}}F_{\mathrm{succ}} = \Pr(E\in S_U)$. This is represented by averaging over the full normalizer, because for a stabilizer code $S_U$ is the set of Pauli operators that commute with every element of $N(S_U)$. If $E\in S_U$ then $[E,h]=0$ for all $h\in N(S_U)$, so every $h_A\otimes h_B^*$ correlation is $+1$ and so is the average over all $h\in N(S_U)$. If $E\notin S_U$, the $h_A\otimes h_B^*$ correlation $\pm 1$ will exist in pairs and cancel out in the average. Therefore, the numerator of successful output fidelity for a general joint input state $\rho$ is evaluated as $|N(S_U)|^{-1}\sum_{h\in N(S_U)}\Tr[(h_A\otimes h_B^*)\rho]$.

We now define the single-pair correlations that evaluate these averages. For a single BDS input, define $\lambda_i(P) = \Tr\!\left[(P_A\otimes P_B^*)\rho_i\right]$ for $P\in\{I,X,Y,Z\}$, where the subscripts denote the party that the Pauli operator applies on. Then $\lambda_i(I)=1$, $\lambda_i(X)=p_i(I)+p_i(X)-p_i(Y)-p_i(Z)$, $\lambda_i(Y)=p_i(I)-p_i(X)+p_i(Y)-p_i(Z)$, and $\lambda_i(Z)=p_i(I)-p_i(X)-p_i(Y)+p_i(Z)$. The signs in $\lambda_i(X),\lambda_i(Y),\lambda_i(Z)$ are obtained by evaluating the bilocal Pauli observable $P_A\otimes P_B^*$ on the four Bell states produced by Pauli errors on one half of $\Phi^+$. For example, $X_A\otimes X_B^*$ has eigenvalue $+1$ on the $I$- and $X$-error Bell states and eigenvalue $-1$ on the $Y$- and $Z$-error Bell states, giving the sign pattern in $\lambda_i(X)$. This is also known as the the Walsh-Hadamard transform~\cite{flammia2020efficient} which connects the Pauli error rates $p_i(P)$ and the Pauli eigenvalues $\lambda_i(P)$. For an $n$-qubit Pauli operator $t=t_1\otimes\cdots\otimes t_n$, independence gives $\Tr[(t_A\otimes t_B^*)\bigotimes_i\rho_i]=\prod_i\lambda_i(t_i)$.

Taking expectation values of the stabilizer and normalizer averages, and using the product form of the BDS inputs, gives
\begin{align}
    p_{\mathrm{succ}} &= \frac{1}{|S_U|} \sum_{s\in S_U} \prod_{i=1}^n \lambda_i(s_i), \label{eq:bicep-main-succ}\\
    p_{\mathrm{succ}}F_{\mathrm{succ}} &= \frac{1}{|N(S_U)|} \sum_{h\in N(S_U)} \prod_{i=1}^n \lambda_i(h_i). \label{eq:bicep-main-weighted}
\end{align}
We call $p_{\mathrm{succ}}F_{\mathrm{succ}}$ the success-weighted output Bell fidelity, and may use $\tilde{F}_{\mathrm{succ}}$ to denote it.

For an entanglement source that distributes Werner state $\rho^{(W)}(F_i)$, set $q_i=(1-F_i)/3$ and we have $p_i(I)=F_i$, $p_i(X)=p_i(Y)=p_i(Z)=q_i$. Therefore, $\lambda_i(X) = \lambda_i(Y) = \lambda_i(Z) = F_i-q_i = w_i$, where $w_i$ is the visibility (Werner parameter) of source $i$ as previously defined, while $\lambda_i(I)=1$. These Pauli eigenvalues correspond to a 1-qubit depolarizing channel, as a Werner state is effectively the result of applying a 1-qubit depolarizing channel to one party of $|\Phi^+\rangle$. We also define $w=(w_1,\dots,w_n)$ as the visibility vector for the $n$ Werner sources. It is safe to take $0\leq w_i\leq 1$, which is equivalent to $F_i\in[1/4,1]$, because in practice we deal with entangled Werner states with $F_i\geq 1/2$, which makes non-negative visibility automatic. Since each factor in Eqs.~\eqref{eq:bicep-main-succ} and~\eqref{eq:bicep-main-weighted} is either $1$ or one of the $w_i$, both figures of merit are multi-affine polynomials in $w$ with non-negative coefficients.

\textit{General biCEP theorem}---
Then we have the general theorem for how AS enhance the performance of arbitrary $n$-to-1 biCEP with arbitrary Werner states as input.
\begin{theorem}[AS enhancement for biCEPs with Werner sources]\label{thm:bicep-as-main}
    Fix any $n$-to-1 biCEP and any $n$ Werner sources with visibilities $0\leq w_i\leq 1$. Compare the non-AS random-permutation baseline of Eq.~\eqref{eq:noas-baseline} with AS after any finite number $m\geq 1$ of distribution rounds. Then the expected per-package success probability and success-weighted output Bell fidelity satisfy
    \begin{align}
        p_{\mathrm{succ,AS}}^{(m)} &\geq p_{\mathrm{succ,noAS}},\\
        \tilde{F}_{\mathrm{succ,AS}}^{(m)} &\geq \tilde{F}_{\mathrm{succ,noAS}}.
    \end{align}
    Moreover, we have monotonicity with the number of accumulation rounds $p_{\mathrm{succ,AS}}^{(m+1)} \ge p_{\mathrm{succ,AS}}^{(m)}$ and $\tilde{F}_{\mathrm{succ,AS}}^{(m+1)} \ge \tilde{F}_{\mathrm{succ,AS}}^{(m)}$.
\end{theorem}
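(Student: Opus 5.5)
The plan is to reduce both figures of merit to expectations of products of slot visibilities, and then to compare sampling schemes. By Eqs.~\eqref{eq:bicep-main-succ} and~\eqref{eq:bicep-main-weighted}, for a product of Werner inputs with slot visibilities $x=(x_1,\dots,x_n)$, each of $p_{\mathrm{succ}}$ and $\tilde F_{\mathrm{succ}}$ has the form $f(x)=\sum_{T\subseteq[n]}c_T\prod_{i\in T}x_i$ with $c_T\ge 0$. Here $c_T$ is the fraction of elements of $S_U$ (resp.\ $N(S_U)$) whose support is exactly $T$.

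By Lemma~\ref{thm:eff_output} and Theorem~\ref{thm:rho_eff}, the AS figure of merit is $\E[f(x_{\sigma})]$, where $x_\sigma$ is the visibility vector of a uniformly random ordered $n$-sample drawn without replacement from the multiset $W_m$. This multiset contains each $w_i$ exactly $m$ times. The baseline of Eq.~\eqref{eq:noas-baseline} is exactly the case $m=1$. By linearity and exchangeability of the slots,
\begin{align}
\E[f(x_\sigma)]=\sum_{j=0}^n C_j\, M_j^{(m)},\qquad C_j=\sum_{|T|=j}c_T\ge 0,\qquad M_j^{(m)}=\frac{e_j(W_m)}{\binom{mn}{j}},
\end{align}
where $e_j(W_m)$ is the $j$-th elementary symmetric polynomial of the $mn$ numbers in $W_m$. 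So $M_j^{(m)}$ is the mean product of $j$ draws without replacement from $W_m$. Both claims of the theorem (improvement over baseline, and monotonicity) then follow from one lemma: $M_j^{(m+1)}\ge M_j^{(m)}$ for all $j$, $m$ and all $w\in[0,1]^n$. Chaining this lemma from $m=1$ gives the comparison with the baseline. As a consistency check, the limit is $M_j^{(\infty)}=\bar w^{\,j}$, matching $\bar\rho^{\otimes n}$.

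To prove the lemma, I would first expand by occupation vectors. This gives $M_j^{(m)}=\sum_{k}P_m(k)\prod_i w_i^{k_i}$, with multivariate hypergeometric weights $P_m(k)=\prod_i\binom{m}{k_i}/\binom{mn}{j}$ over $\sum_i k_i=j$. Grouping by the sorted partition $\lambda$ of $k$ turns this into a mixture of Muirhead means $\overline{M}_\lambda(w)$, the symmetrized monomial averages. Muirhead's inequality says $\overline{M}_\lambda\ge\overline{M}_\mu$ whenever $\lambda\succeq\mu$. Increasing $m$ relaxes the collision constraint $k_i\le m$ and lowers the relative penalty on repeated indices. Hence the law of $\lambda$ under $P_{m+1}$ should dominate the law under $P_m$ in a majorization-compatible stochastic order.

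I would try to make this precise with a coupling or Markov kernel that only moves mass upward in majorization. A cleaner alternative is a direct algebraic route: write $M_j^{(m)}$ as a polynomial in the power sums $p_r=\sum_i w_i^r$ with $m$-dependent coefficients. Then I would express $M_j^{(m+1)}-M_j^{(m)}$ as a nonnegative combination of terms of the form $(w_a-w_b)^2\cdot(\text{polynomial with nonnegative coefficients})$. This generalizes the $n=2$ identity $2(F_1-F_2)^2/9$ seen in Proposition~\ref{thm:succ_improvement}.

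The main obstacle is exactly this monotonicity of the without-replacement product means in $m$. The reduction to it is routine. I would first verify the $j=2$ case explicitly, $M_2^{(m)}=\bar w^2-\frac{\mathrm{Var}(w)}{mn-1}$, which is visibly increasing in $m$. I would then try induction on $j$ by conditioning on the first draw, using that the remaining draws come from $W_m$ minus one element, which is again a similar pool. Nonnegativity of the $w_i$ is what makes Muirhead and the coefficient-positivity arguments applicable.
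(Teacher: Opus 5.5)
Your reduction is correct and matches the paper's. Both figures of merit become $\sum_j C_j M_j^{(m)}$ with $C_j\ge 0$ and $M_j^{(m)}=e_j(w^{(m)})/\binom{mn}{j}$, and the baseline is the case $m=1$. The gap is that the lemma $M_j^{(m+1)}\ge M_j^{(m)}$, on which both claims rest in your plan, is never proved. None of your three routes closes as stated:

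\begin{itemize}
\item The Muirhead route needs the law $Q_{m+1}$ of the sorted occupation vector to dominate $Q_m$ in the dominance order, i.e.\ a monotone coupling, or equivalently at least as much mass on every up-set. For $j\ge 6$ that order is only partial; for example $(3,1,1,1)$ and $(2,2,2)$ are incomparable. So the heuristic that larger $m$ ``relaxes collisions'' does not deliver this, and you give no coupling or kernel.
\item The induction on $j$ by conditioning on the first draw breaks for two reasons. The residual pool ($W_m$ minus one entry) is not of the form $W_{m'}$. Also, the conditional means are not monotone in $m$ pointwise. Already for $j=2$, the mean of the second draw given a first draw of value $w_i$ is $\bar w+(\bar w-w_i)/(mn-1)$, which \emph{decreases} in $m$ when $w_i<\bar w$. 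Only after weighting by $w_i$ and averaging do you get the monotone $\bar w^2-\mathrm{Var}(w)/(mn-1)$, so the inductive hypothesis does not transfer.
\item The power-sum/sum-of-squares representation is only a hope.
\end{itemize}

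The missing idea is the paper's layer-cake argument. Since the slot values are nonnegative, $\E_q[\prod_{j=1}^r Y_j]=\int_{[0,\infty)^r}\Pr_q(Y_j\ge t_j\ \forall j)\,dt$, so it suffices to compare threshold-event probabilities for each fixed threshold vector. By exchangeability you may sort $t_1\ge\cdots\ge t_r$. Then the sets $B_j=\{i:w_i\ge t_j\}$ are nested, with sizes $s_j$. Counting valid ordered selections slot by slot gives the exact product $\prod_{j=1}^r\max\{qs_j-(j-1),0\}/(qn-(j-1))$. Each factor is nondecreasing in $q$: this is trivial in the zero case, and otherwise $(qs-c)/(qn-c)$ has $q$-derivative $c(n-s)/(qn-c)^2\ge 0$. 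Integrating yields $M_r^{(m)}\ge M_r^{(m-1)}$ with no partition-order machinery.

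Separately, your own Muirhead observation already proves the baseline comparison without any stochastic-order claim. At $m=1$ only $0/1$ occupation vectors occur, so $M_j^{(1)}=\overline{M}_{(1^j)}(w)$. Every partition of $j\le n$ majorizes $(1^j)$, so the convex mixture $M_j^{(m)}=\sum_\lambda Q_m(\lambda)\overline{M}_\lambda(w)$ is at least $M_j^{(1)}$. The paper instead uses Maclaurin's inequality plus Vandermonde's identity for this step. Only the adjacent-$m$ monotonicity needs the missing argument.
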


The AS enhancement also holds in the asymptotic limit $m\to\infty$, which has a clean majorization picture. For $x,y\in\mathbb R^n$, write $x^\downarrow$ for the decreasing rearrangement of $x$, where $x^\downarrow_i$ is the $i$-th largest entry of $x$. We say that $x$ is majorized by $y$, written $x\prec y$, when $\sum_{i=1}^r x_i^\downarrow\leq\sum_{i=1}^r y_i^\downarrow$ for $r=1,\dots,n-1$ and $\sum_i x_i=\sum_i y_i$. In the limit $m\to\infty$, Theorem~\ref{thm:rho_eff} means that the Werner visibility vector $w$ becomes $w_\infty = (\bar{w},\dots,\bar{w})$ with $\bar{w}=\frac{1}{n}\sum_i w_i$, and $w_\infty\prec w$ for every valid visibility vector $w$. Given the averaging over the random source-label permutation in the non-AS baseline, Eqs.~\eqref{eq:bicep-main-succ} and~\eqref{eq:bicep-main-weighted} become \textit{symmetric} multi-affine polynomials with non-negative coefficients, hence non-negative linear combinations of elementary symmetric polynomials. Here the elementary symmetric polynomial of degree $r$ is $e_r(w_1,\dots,w_n) = \sum_{\substack{J\subseteq \{1,\dots,n\}, |J|=r}} \prod_{j\in J} w_j$ while for degree zero $e_0=1$. For instance, $e_2(w_1,w_2,w_3)=w_1w_2+w_1w_3+w_2w_3$. As we explicitly prove in the Supplemental Material (Sec.~\ref{sec:biCEP_AS})~\cite{supmat}, non-negative linear combinations of elementary symmetric polynomials are Schur-concave on the non-negative orthant $w\in \R_+^n$~\cite{marshall1979inequalities}, so these symmetrized figures of merit increase when $w$ is replaced by the uniform vector $w_\infty$.

For finite $m$, we can expand the symmetrized biCEP figure of merit expressions into visibility monomials and compare each monomial degree separately. A degree-$r$ visibility monomial is a product of $r$ visibilities, arising from a Pauli term whose non-identity support has size $r$. In the non-AS random-permutation baseline, the $r$ source labels are uniform randomly chosen from the $n$ sources, so the average contribution of such a monomial is the elementary symmetric mean $e_r(w)/\binom{n}{r}$. Under AS, the physical pairs in the shuffled package are sampled without replacement from the $mn$ accumulated pairs corresponding to the accumulated visibility vector $w^{(m)} = (\underbrace{w_1,\dots,w_1}_{m}, \dots, \underbrace{w_n,\dots,w_n}_{m})$. Hence the AS average is $e_r(w^{(m)})/\binom{mn}{r}$. Maclaurin's inequality and Vandermonde's identity imply $e_r(w^{(m)})/\binom{mn}{r} \ge e_r(w)/\binom{n}{r}$ for $r=0,\dots,n$, so every degree-$r$ contribution is no smaller after AS. Since the symmetrized biCEP expressions of success probability and success-weighted output fidelity have only non-negative coefficients, adding these inequalities proves the finite-$m$ comparison. See the Supplemental Material (Sec.~\ref{sec:biCEP_AS})~\cite{supmat} for the full proof, and also for the proof of monotonicity.

\textit{Conclusion and discussion}---
We have shown that shared randomness can enhance entanglement purification through a symmetrization mechanism. Accumulating and shuffling (AS) requires only buffer memories and classical shared randomness, yet for \textit{arbitrary} $n$ Werner sources and \textit{arbitrary} fixed $n$-to-1 biCEPs it guarantees enhancement of success probability and success-weighted Bell fidelity over the non-AS baseline. The AS strategy is exceptionally simple and inexpensive, as it does not require characterization of received entangled states or modification of the EPP circuit itself. The results also establish a novel mechanism and instance of how classical randomness can enhance the performance of a quantum task and offer insightful implications. 

While we focus on Werner sources in the main text to provide clean proofs of enhancement induced by the AS strategy, the AS enhancement works beyond Werner sources. In the End Matter, we demonstrate the tolerance to error biases in the input entangled states using the bilocal CNOT EPP as example, which also highlights the importance of input states' structure. In practice, we will inevitably face memory decoherence whenever we need to buffer quantum states. Also in the End Matter, we demonstrate the robustness to memory decoherence of the AS strategy, considering memory depolarization, Werner sources and the bilocal CNOT EPP as example. Moreover, even though we primarily assume that the entanglement sources will each distribute a fixed density operator, we prove in the Supplemental Material (Sec.~\ref{sec:source_ensemble})~\cite{supmat} that our analysis framework using the effective state of each EPP input package can apply to the more general scenario where each source produces a density matrix drawn from some probability distribution in one entanglement distribution round.

Theorem~\ref{thm:bicep-as-main} addresses success-weighted fidelity rather than normalized successful output fidelity intentionally, because normalized successful output fidelity is \textit{not universally} improved for arbitrary biCEPs with AS, even though this is the case for the bilocal CNOT EPP as shown in Proposition~\ref{thm:succ_improvement}. For example, as we remark in the Supplemental Material (Sec.~\ref{sec:biCEP_AS})~\cite{supmat}, there exist combinations of 5 Werner sources such that AS will lower $F_{\mathrm{succ}}$ for the $[[5,1,3]]$-code EPP. 

Furthermore, this work opens up broad opportunities for future research. For instance, it will be exciting to elucidate the role of shared randomness and quantum memories in the discovered performance enhancement from resource theory perspectives~\cite{forster2009distilling,buscemi2012all,rosset2018resource,chitambar2019quantum,schmid2020type}. It is also interesting to explore potential connections between the AS strategy and activation~\cite{brandao2008correlated}, renormalization~\cite{waeldchen2016renormalizing}, and scrambling~\cite{gu2026constant}, and whether the AS strategy can be derandomized.

\textit{Acknowledgments}---
We thank Senrui Chen, Henry Minzhao Liu, Shifan Xu, and Han Zheng for inspiring discussions. Part of this research was performed while the authors were visiting the Institute for Mathematical and Statistical Innovation (IMSI), which is supported by the National Science Foundation (Grant No. DMS-1929348).
We acknowledge support from the ARO MURI (W911NF-21-1-0325), NSF (ERC-1941583). This material is based upon work supported by the U.S. Department of Energy, Office of Science, National Quantum Information Science Research Centers and Advanced Scientific Computing Research (ASCR) program under contract number DE-AC02-06CH11357 as part of the InterQnet quantum networking project. We also acknowledge support from the NSF Quantum Leap Challenge Institute for Hybrid Quantum Architectures and Networks (NSF Grant No. 2016136), and the Marshall and Arlene Bennett Family Research Program.

\bibliography{references}

\section*{End Matter}

\begin{figure*}[t]
    \centering
    \includegraphics[width=0.95\linewidth]{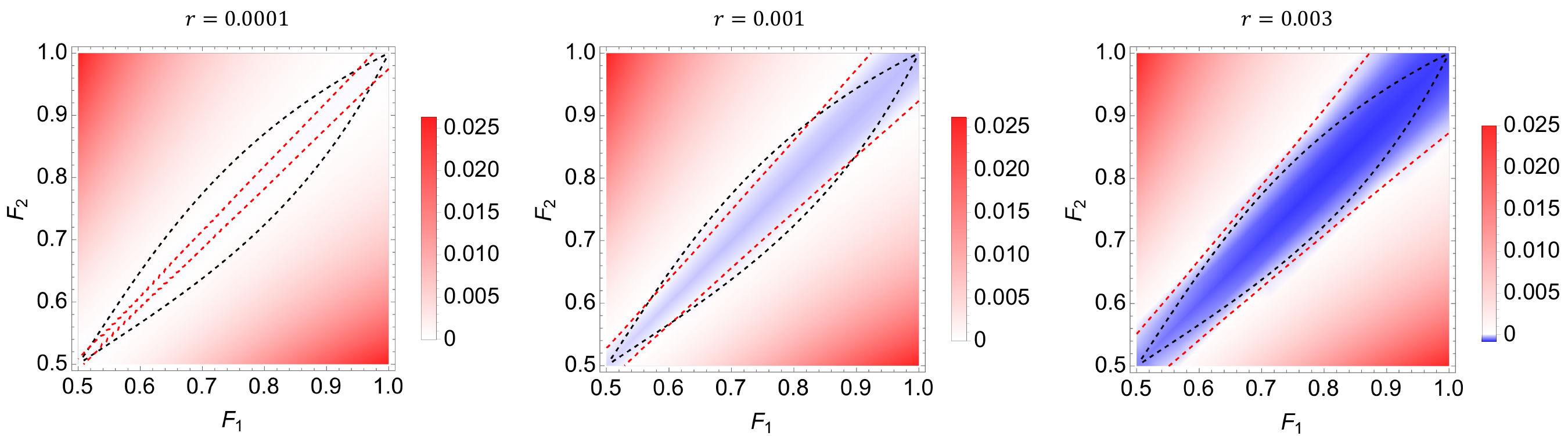}
    \caption{Difference between $F_{\mathrm{succ}}(\rho_{\mathrm{eff},n=2}^{(2)})$ and $F_{\mathrm{succ}}\left(F_1,F_2\right)$ for $r=0.0001,0.001,0.003$. White color is fixed to represent zero value, while blue color denotes negative value and red color corresponds to positive value. The red dashed lines illustrate the boundaries where the difference is exactly zero. The black dashed lines illustrate the boundaries where $F_{\mathrm{succ}}\left(F_1,F_2\right)-\max\{F_1,F_2\}$ is exactly zero.}
    \label{fig:m=2vsNoAS_fid}
\end{figure*}

\textit{Considerations and justifications of the operational assumptions}---
Our operational assumptions of the considered entanglement distribution scenario are mainly two-fold: generally non-identical (heterogeneous) entanglement sources, and the unavailability of the source label of each entangled state to both parties that receive the entangled states. 

The first assumption of heterogeneity is natural, since in practice even the nominally identical devices and processes that establish remote entanglement will still almost always be subject to differences. Moreover, it is also possible that the entanglement sources or entanglement generation processes may undergo temporal drifts~\cite{proctor2020detecting}. This may not only make sources different from each other at a specific time point, but can also make a specific source different from its past self. The potential temporal drift further motivates and justifies the consideration of input-independent strategies to enhance EPP performance, such as our AS strategy. This is because even though there may exist efficient methods to extract useful information of the entangled states~\cite{haah2016sample,huang2020predicting}, the resource cost (measured entangled states) of such methods is not one-time but recurrent.

For the assumption of not knowing which source each entangled state is from, we may look into quantum network architectures~\cite{kozlowski2023rfc}. Even though quantum network architecture has not yet come to a global consensus, it is already a well-established idea to separate the physical implementation layer from higher layers via abstraction. The information of which source each entangled state comes from will be classical metadata that needs additional maintenance~\cite{kozlowski2023rfc}. In general, if the entangled states distributed from the entanglement sources have qualities (not necessarily identical) that are above a certain minimum requirement, they can be simply accepted and utilized for subsequent manipulation or applications, without being distinguishable to the receivers~\cite{dahlberg2019link,pompili2022experimental,kozlowski2023rfc}. In addition, the which-source information may serve as side-channel information that can potentially affect security analyses for some security and cryptography application scenarios~\cite{huang2018quantum}.

\textit{Arbitrary Bell-diagonal average input state for bilocal CNOT EPP}---
We have assumed ignorance of the source labels for each state in each entanglement distribution round. Therefore, even if the two parties can perform state characterization they can at most know the average input state (AIS) $\bar{\rho}$. Generally, $\rho_1$, $\rho_2$ and $\bar{\rho}$ are not necessarily in Werner form, and the knowledge of the principal error component of $\bar{\rho}$ could potentially help us further improve the EPP performance: We may use local operations and classical communication (LOCC)~\cite{chitambar2014everything} to permute the error components so that the principal errors component of the AIS is detectable by the EPP. We thus consider an AIS $\bar{\rho}$ characterized by four density matrix diagonal elements in the Bell basis $(\bar{\nu}_1,\bar{\nu}_2,\bar{\nu}_3,\bar{\nu}_4)$, s.t. $\bar{\nu}_j\geq 0$ and $\sum_{j=1}^4\bar{\nu}_j=1$ while $\bar{\nu}_2 \leq \bar{\nu}_3,\bar{\nu}_4 \leq \bar{\nu}_1$, and $\bar{\nu}_1>1/2$ so that the AIS is entangled.

As the DEJMPS protocol cannot detect errors corresponding to $\Phi^-$, we first require the allowed input states to have identical bias of the undetectable error $a$ as the AIS. Explicitly, we consider $\bar{\rho}=(\bar{F},a(1-\bar{F}),b(1-\bar{F}),(1-a-b)(1-\bar{F})) = (\rho_1+\rho_2)/2$ with $a\in[0,1/3]$, while $\rho_1=(F_1,a(1-F_1),b_1(1-F_1),(1-a-b_1)(1-F_1))$. Then we have the following on AS enhancement, whose proof can be found in the Supplemental Material (Sec.~\ref{sec:bilocal_cnot_epp_results})~\cite{supmat}.
\begin{proposition}\label{thm:ais_bias}
    Let $\bar{\rho}=(\bar{F},a(1-\bar{F}),b(1-\bar{F}),(1-a-b)(1-\bar{F}))$ with $a\in[0,1/3]$, and let $\rho_1=(F_1,a(1-F_1),b_1(1-F_1),(1-a-b_1)(1-F_1))$ with $\rho_2=2\bar{\rho}-\rho_1$. For the DEJMPS protocol we have
    \begin{align}
        p_{\mathrm{succ}}(\rho_{\mathrm{eff},n=2}^{(m)}) &\geq p_{\mathrm{succ}}(\rho_1\otimes\rho_2),\\
        E_{\mathrm{succ}}(\rho_{\mathrm{eff},n=2}^{(m)}) &\geq E_{\mathrm{succ}}(\rho_1\otimes\rho_2)
    \end{align}
    for all $\bar{F}\in[1/2,1]$ if $a\in[2-\sqrt{3},1/3]$, where $E_{\mathrm{succ}}$ is any entanglement quality metric that is monotone increasing with Bell fidelity for the successful output BDS.
\end{proposition}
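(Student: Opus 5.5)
The plan is to reduce everything to the asymptotic comparison between $\bar\rho\otimes\bar\rho$ and $\rho_1\otimes\rho_2$, and then to settle that comparison with explicit DEJMPS formulas. The key point of those formulas is that, once the $\Phi^-$ bias $a$ is shared, they depend on $\rho_1,\rho_2$ only through $F_1,F_2$.

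\emph{Reduction to $m=\infty$.} By Theorem~\ref{thm:rho_eff} with $n=2$,
\begin{align}
\rho_{\mathrm{eff},n=2}^{(m)}=(1-\tfrac{1}{2m-1})\,\bar\rho\otimes\bar\rho+\tfrac{1}{2m-1}\Sym(\rho_1\otimes\rho_2).
\end{align}
For Bell-diagonal inputs the DEJMPS figures of merit are invariant under swapping the two inputs, so $\Sym(\rho_1\otimes\rho_2)$ may be replaced by $\rho_1\otimes\rho_2$.

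Lemma~\ref{thm:eff_output} then gives two facts:
\begin{itemize}
\item $p_{\mathrm{succ}}$ is the convex combination of $p_{\mathrm{succ}}(\bar\rho^{\otimes2})$ and $p_{\mathrm{succ}}(\rho_1\otimes\rho_2)$.
\item The successful output state is a convex mixture of the two conditional outputs, with weights proportional to $q_\alpha p_{\mathrm{succ}}(\rho_\alpha)$.
\end{itemize}
Both conditional outputs are Bell diagonal, and fidelity is linear, so $F_{\mathrm{succ}}$ of the mixture lies between the two conditional fidelities.

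It is therefore enough to show $p_{\mathrm{succ}}(\bar\rho^{\otimes2})\ge p_{\mathrm{succ}}(\rho_1\otimes\rho_2)$ and $F_{\mathrm{succ}}(\bar\rho^{\otimes2})\ge F_{\mathrm{succ}}(\rho_1\otimes\rho_2)$. Given these, every finite $m$ also wins. Moreover the weight on $\bar\rho^{\otimes2}$ increases with $m$, and this increases the weighted average in both quantities, which gives monotonicity. Monotonicity of $E_{\mathrm{succ}}$ in Bell fidelity then transfers the fidelity statement to $E_{\mathrm{succ}}$.

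\emph{Explicit formulas.} Write the Bell-diagonal weights as $(A,B,C,D)$ on $(\Phi^+,\Phi^-,\Psi^+,\Psi^-)$. The bilocal CNOT detects bit-flip-type errors ($\Psi^\pm$) but not $\Phi^-$. Hence
\begin{align}
p_{\mathrm{succ}}&=u_1u_2+(1-u_1)(1-u_2), & u_i&=A_i+B_i,\\
p_{\mathrm{succ}}F_{\mathrm{succ}}&=A_1A_2+B_1B_2 .
\end{align}
With the shared bias we have $u_i=a+(1-a)F_i$ and $B_i=a(1-F_i)$, so $b,b_1$ drop out entirely.

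Set $F_{1,2}=\bar F\pm\delta$. Since $u_i$ and $B_i$ are affine in $F_i$, expanding the products gives
\begin{align}
p_{\mathrm{succ}}(\bar\rho^{\otimes 2})-p_{\mathrm{succ}}(\rho_1\otimes\rho_2)&=2(1-a)^2\delta^2\ge0,\\
\bar N-N&=(1+a^2)\delta^2,
\end{align}
where $N=p_{\mathrm{succ}}F_{\mathrm{succ}}$ and barred quantities refer to $\bar\rho^{\otimes2}$. The first line proves the success-probability claim for every $a$.

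\emph{Fidelity step (the main obstacle).} Compare $\bar N/\bar p$ with $(\bar N-(1+a^2)\delta^2)/(\bar p-2(1-a)^2\delta^2)$. Cross-multiplying with positive denominators, the desired inequality is equivalent to $(1+a^2)\bar p\ge 2(1-a)^2\bar N$, i.e.
\begin{align}
F_{\mathrm{succ}}(\bar\rho^{\otimes2})\le \frac{1+a^2}{2(1-a)^2}.
\end{align}
The left side is a fidelity, hence at most $1$, so it suffices that $1+a^2\ge 2(1-a)^2$. This is $a^2-4a+1\le0$, i.e. $a\ge 2-\sqrt3$, which holds on $[2-\sqrt3,1/3]$ for every $\bar F\in[1/2,1]$.

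The delicate points are confirming which Bell components DEJMPS treats as undetectable in the paper's convention, and checking that $\bar p>2(1-a)^2\delta^2$, i.e. that the baseline success probability is positive. Everything else is the affine-in-$F_i$ expansion above.
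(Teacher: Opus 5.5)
Your proof is correct. The reduction step is the same as the paper's: the convex-mixture form of $\rho_{\mathrm{eff},n=2}^{(m)}$, Lemma~\ref{thm:eff_output}, exchange symmetry of the DEJMPS output, and passing from fidelity to $E_{\mathrm{succ}}$. Where you differ is in how you settle $F_{\mathrm{succ}}(\bar\rho^{\otimes 2})\ge F_{\mathrm{succ}}(\rho_1\otimes\rho_2)$.

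The paper expands the cross-multiplied difference $\Delta_{\mathrm{num}}F_{\mathrm{succ}}=p_{\mathrm{succ}}(\rho_1\otimes\rho_2)\,\bar N-\bar p\,N$ explicitly as $[(1-2a+a^2+2a^3)-2(1-3a+a^2+a^3)\bar F](\bar F-F_1)^2$. It then notes that the bracket is affine in $\bar F$ with negative slope for $a\in[0,1/3]$, and evaluates it at $\bar F=1$ to obtain $-1+4a-a^2\ge 0$.

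Your two deficit identities give the same factorization implicitly, $\Delta_{\mathrm{num}}F_{\mathrm{succ}}=\delta^2[(1+a^2)\bar p-2(1-a)^2\bar N]$. You then dispose of the bracket with the one-line bound $\bar N\le\bar p$, instead of computing its coefficients and the sign of its slope. This is shorter and uses neither $a\le 1/3$ nor $\bar F\ge 1/2$, so it actually covers every $a\in[2-\sqrt3,1]$ and every admissible $\bar F$.

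The paper's explicit bracket also exhibits sharpness of the threshold $2-\sqrt3$, which is the ``only if'' half of its lemma. Your route sees this too: $F_{\mathrm{succ}}(\bar\rho^{\otimes 2})\to 1$ as $\bar F\to 1$, so the bound $\bar N\le\bar p$ is asymptotically tight. Sharpness is not needed for the proposition as stated in any case.

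The positivity issue you flagged is immediate. Since $u_i=a+(1-a)F_i\ge a>0$, you get $p_{\mathrm{succ}}(\rho_1\otimes\rho_2)\ge u_1u_2\ge a^2>0$.
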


Many common entanglement quality metrics such as concurrence~\cite{hill1997entanglement,wootters1998entanglement}, (logarithmic) negativity~\cite{plenio2005logarithmic}, and a common upper bound of distillable entanglement, i.e. the Rains' bound~\cite{rains1999bound,rains2001semidefinite} or relative entropy of entanglement~\cite{vedral1997quantifying,vedral1998entanglement,vedral2002role}, increase monotonically with Bell fidelity for BDS.

These results demonstrate the guaranteed advantage of the AS strategy given input states that have identical, low error bias. On the other hand, when the minor error component is small, i.e. when the error is highly biased, AS may be unnecessary since the EPP already excels in purifying BDS with biased error. In addition, if $\rho_1$ and $\rho_2$ are known, deterministic state-dependent strategies of entanglement purification may also perform better. As shown below, even if both BDS inputs do not share the same $a$, the success probability and success-weighted output fidelity are still guaranteed to improve by the AS strategy over the non-AS baseline. Nevertheless, if they do not share the same $a$ there will be cases when the (normalized) successful output fidelity is not improved. 

We also consider the most general case where $\rho_1$ and $\rho_2$ can be arbitrary entangled BDS, i.e. $F(\rho_1),F(\rho_2)>1/2$, as long as they satisfy $(\rho_1+\rho_2)/2=\bar{\rho}$. Here the success probability and success-weighted output fidelity, are guaranteed to be improved by accumulating and shuffling, regardless of the specific forms of $\rho_1$ and $\rho_2$.
\begin{proposition}\label{thm:arb_bds_succ_prob_norm_fid}
    For an arbitrary two-source BDS average input state $\bar{\rho}=(\rho_1+\rho_2)/2$, the two-copy DEJMPS protocol satisfies
    \begin{align}
        p_{\mathrm{succ}}(\rho_{\mathrm{eff},n=2}^{(m)}) &\geq p_{\mathrm{succ}}(\rho_1\otimes\rho_2),\\
        \tilde{F}_{\mathrm{succ}}(\rho_{\mathrm{eff},n=2}^{(m)}) &\geq \tilde{F}_{\mathrm{succ}}(\rho_1\otimes\rho_2).
    \end{align}
\end{proposition}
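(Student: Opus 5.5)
Our plan is to reduce everything to quadratic forms in the Bell-diagonal coefficients and then use linearity (Lemma~\ref{thm:eff_output}) together with the convex decomposition of Theorem~\ref{thm:rho_eff}. For $n=2$ the effective state is
\[
\rho_{\mathrm{eff},n=2}^{(m)}=\tfrac{2m-2}{2m-1}\,\bar\rho\otimes\bar\rho+\tfrac{1}{2m-1}\Sym(\rho_1\otimes\rho_2).
\]
Both $p_{\mathrm{succ}}$ and $\tilde F_{\mathrm{succ}}=p_{\mathrm{succ}}F_{\mathrm{succ}}$ are linear in the joint input state; the latter is the unnormalized weight of $\Phi^+$ in the success branch. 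Moreover, both are invariant under swapping the two pairs for BDS inputs. Hence it suffices to prove the asymptotic statement $f(\bar\rho\otimes\bar\rho)\ge f(\rho_1\otimes\rho_2)$ for $f\in\{p_{\mathrm{succ}},\tilde F_{\mathrm{succ}}\}$. Each finite-$m$ value is then a convex combination of $f(\bar\rho\otimes\bar\rho)$ and $f(\rho_1\otimes\rho_2)$, with weight $\tfrac{2m-2}{2m-1}$ on the former. This gives the claim for all $m\ge1$, and even monotonicity in $m$.

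Write $\rho_i$ by its Bell-diagonal vector $\nu^{(i)}\in\R^4$ and set $\delta=(\nu^{(1)}-\nu^{(2)})/2$, so that $\nu^{(1,2)}=\bar\nu\pm\delta$. For a symmetric bilinear form $B$, expanding gives
\[
B(\bar\nu,\bar\nu)-B(\nu^{(1)},\nu^{(2)})=B(\delta,\delta).
\]
Since $f(\rho_1\otimes\rho_2)=B_f(\nu^{(1)},\nu^{(2)})$ is bilinear and symmetric, we only need $B_f(\delta,\delta)\ge0$ for every $\delta$ with $\sum_j\delta_j=0$. This is positive semidefiniteness of $B_f$ on the hyperplane orthogonal to $(1,1,1,1)$.

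To compute $B_f$, use the stabilizer picture with $S_U=\langle Z_AZ_B\rangle$-type checks (equivalently the DEJMPS error-class table). Success occurs iff the two Pauli errors lie in the same class of the partition $\{I,Z\}$ vs.\ $\{X,Y\}$, up to the relabeling of DEJMPS. This gives
\[
p_{\mathrm{succ}}=(\nu_1^{(1)}+\nu_2^{(1)})(\nu_1^{(2)}+\nu_2^{(2)})+(\nu_3^{(1)}+\nu_4^{(1)})(\nu_3^{(2)}+\nu_4^{(2)}).
\]
For this form,
\[
B_p(\delta,\delta)=(\delta_1+\delta_2)^2+(\delta_3+\delta_4)^2\ge0.
\]
Similarly, the accepted output is $\Phi^+$ iff the error pair is in the set $\{(\Phi^+,\Phi^+),(\Phi^-,\Phi^-),(\Psi^+,\Psi^+),(\Psi^-,\Psi^-)\}$ under the paper's labeling (the Pauli-frame version of $E\in S_U$). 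In that case $\tilde F_{\mathrm{succ}}$ is the sum of matched-index products
\[
\nu_1^{(1)}\nu_1^{(2)}+\nu_2^{(1)}\nu_2^{(2)}+\dots
\]
with the appropriate pairing. Its form $B_{\tilde F}(\delta,\delta)$ is a sum of squares $\sum_j\delta_j^2\ge0$, or, if the pairing matches distinct indices as in standard DEJMPS, $\delta_1^2+\delta_2^2+2\delta_3\delta_4$.

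The main obstacle is this last case. If the fidelity numerator pairs $\Psi^+$ with $\Psi^-$, then $2\delta_3\delta_4$ is indefinite, and we need the constraint structure. We would first try the alternative normalizer-average representation from Eq.~\eqref{eq:bicep-main-weighted}:
\[
\tilde F=\tfrac14\sum_{h\in N(S_U)}\lambda^{(1)}(h_1)\lambda^{(2)}(h_2).
\]
Here $N(S_U)$ for the CNOT code consists of $II,ZZ$ and their logical partners, giving
\[
\tilde F=\tfrac14\bigl[1+\lambda^{(1)}(Z)\lambda^{(2)}(Z)+\lambda^{(1)}(X)\lambda^{(2)}(X)+\lambda^{(1)}(Y)\lambda^{(2)}(Y)\bigr]
\]
up to conjugation signs. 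Here $\lambda(X),\lambda(Y),\lambda(Z)$ are the Walsh--Hadamard coordinates, and the $\lambda$-differences $\epsilon_P=(\lambda^{(1)}(P)-\lambda^{(2)}(P))/2$ are unconstrained. The quadratic form then becomes $\tfrac14\sum_P\epsilon_P^2$, with possibly a sign on $Y$ from $Y^*=-Y$ that is absorbed consistently on both factors, which is manifestly nonnegative. Likewise $p_{\mathrm{succ}}=\tfrac12[1+\lambda^{(1)}(Z)\lambda^{(2)}(Z)]$ gives $\tfrac12\epsilon_Z^2$. Working in $\lambda$-coordinates from the start avoids the case analysis entirely, so that is the route we would write up, checking carefully that each product appears as $\lambda^{(1)}(P)\lambda^{(2)}(P)$ with a positive coefficient.
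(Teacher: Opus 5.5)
Your overall reduction is sound. Swap invariance plus the decomposition $\rho^{(m)}_{\mathrm{eff},n=2}=\frac{2m-2}{2m-1}\bar\rho\otimes\bar\rho+\frac{1}{2m-1}\Sym(\rho_1\otimes\rho_2)$ reduces everything to comparing $\bar\rho\otimes\bar\rho$ with $\rho_1\otimes\rho_2$. The polarization identity $B(\bar\nu,\bar\nu)-B(\nu^{(1)},\nu^{(2)})=B(\delta,\delta)$ then explains neatly why the paper's brute-force differences come out as squares. Your success-probability part is right: $(\delta_1+\delta_2)^2+(\delta_3+\delta_4)^2=2(\delta_1+\delta_2)^2$, which is exactly the paper's $2[a_1(1-F_1)-a(1-\bar F)+F_1-\bar F]^2$.

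The $\tilde F_{\mathrm{succ}}$ part, however, rests on a wrong formula. You identify ``accepted output is $\Phi^+$'' with the four matched pairs, i.e.\ $E\in\{II,XX,YY,ZZ\}$. But $S_U=\{II,ZZ\}$, and $XX,YY$ are logical operators in $N(S_U)\setminus S_U$, so $(\Psi^\pm,\Psi^\pm)$ is accepted with output $\Psi^+$. Your $\lambda$-expression $\frac14[1+\sum_{P\neq I}\lambda^{(1)}(P)\lambda^{(2)}(P)]$ is, by orthogonality of the Walsh--Hadamard matrix, exactly $\sum_j\nu^{(1)}_j\nu^{(2)}_j$. It is the same wrong guess in other coordinates, not an independent route. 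The actual normalizer average has eight terms ($II,IZ,ZI,ZZ,XX,XY,YX,YY$) with weight $1/8$:
\[
\tilde F_{\mathrm{succ}}=\tfrac18\bigl[(1+\lambda^{(1)}(Z))(1+\lambda^{(2)}(Z))+(\lambda^{(1)}(X)+\lambda^{(1)}(Y))(\lambda^{(2)}(X)+\lambda^{(2)}(Y))\bigr],
\]
which reproduces $(1+w_1+w_2+5w_1w_2)/8$ for Werner inputs. So the check you planned, that every product has the matched form $\lambda^{(1)}(P)\lambda^{(2)}(P)$ with positive coefficient, would fail.

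Changing to $\lambda$ coordinates also cannot ``avoid the case analysis''. Whether a quadratic form is nonnegative on the hyperplane $\sum_j\delta_j=0$ does not depend on coordinates. If the form really were $\delta_1^2+\delta_2^2+2\delta_3\delta_4$ (which equals $-2$ at $\delta=(0,0,1,-1)$), no change of variables would rescue it.

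The obstacle is not actually there. The paper's DEJMPS rule gives $p_{\mathrm{succ}}F_{\mathrm{succ}}=\nu^{(1)}_1\nu^{(2)}_1+\nu^{(1)}_2\nu^{(2)}_2$, hence $B_{\tilde F}(\delta,\delta)=\delta_1^2+\delta_2^2$. This is precisely the paper's $(\bar F-F_1)^2+[a(1-\bar F)-a_1(1-F_1)]^2$. In $\lambda$ coordinates the same form is $\frac18[\epsilon_Z^2+(\epsilon_X+\epsilon_Y)^2]$; the terms linear in $\lambda(Z)$ drop out because $\lambda^{(1)}(I)=\lambda^{(2)}(I)=1$. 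With this correction your argument coincides with the paper's, organized through the polarization identity rather than an explicit parametrization.
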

The proof can be found in the Supplemental Material (Sec.~\ref{sec:bilocal_cnot_epp_results})~\cite{supmat}. On the other hand, the sign of the difference between $F_{\mathrm{succ}}(\bar{\rho}\otimes\bar{\rho})$ and $F_{\mathrm{succ}}(\rho_1\otimes\rho_2)$ depends on input state parameters, which demonstrates the importance of the minor error component of the input BDS to the DEJMPS protocol.

\textit{Memory decoherence effect on AS for bilocal CNOT EPP}---
In practice, entanglement distribution may take significantly longer time in comparison to local operations, and quantum memories that store entangled states inevitably decohere when storing the entangled states waiting for later entangled states to arrive. Therefore, we now include the quantum memory idling decoherence. Without loss of generality, we assume that initial noisy entangled states are Werner states, and all quantum memories undergo symmetric depolarizing channel with identical depolarizing rate $\kappa$ s.t. the completely depolarizing probability is $1 - e^{-\kappa t}$. Then the Bell state fidelity evolves as $F(F_0,t) = F_0 e^{-2\kappa t} + (1-e^{-2\kappa t})/4$~\cite{zang2025entanglement}, where $F_0$ is the initial fidelity of the Werner state when it is first stored in quantum memories, and $t$ is the time for both quantum memories to idle.

Consider that two parties accumulate over $m$ rounds of entanglement distribution and store $2m$ entangled states in total. The effective input state is the average of all possible tensor products of two non-identical entangled states in the ensemble of $2m$ accumulated states, i.e. $\rho_{\mathrm{eff},n=2}^{(m)} = [\sum_{i=1,2}\sum_{0\leq j<k<m}\rho_i(j)\otimes\rho_i(k) + \sum_{j,k=0}^{m-1}\rho_1(j)\otimes\rho_2(k)]/\binom{2m}{2}$, where $\rho_i(s), i=1,2$ denotes the final state after the initial entangled state $\rho_i(0)$ undergoes $sT$ duration of memory decoherence, recalling that $T$ is the cycle time of entanglement distribution. The above expression follows from the identity $m\mathcal{N}_\mathrm{pairing}(2m) = \binom{2m}{2}\mathcal{N}_\mathrm{pairing}(2m-2)$, where $\mathcal{N}_\mathrm{pairing}(2m) = \frac{(2m)!}{2^m m!}$ is the number of different ways to create $m$ pairs of states out of the $2m$ states, without over-counting different orders in the random pairing process. The interpretation is that there are $\mathcal{N}_\mathrm{pairing}(2m-2)$ ways of pairing up the $2m$ accumulated states which include a certain pair of non-identical entangled states, and in total there are $\binom{2m}{2}$ possible pairs of non-identical entangled states. If we expand all these possible pairings, the ensemble of DEJMPS input entangled state pairs will contain $\mathcal{N}_\mathrm{pairing}(2m-2)$ copies of all $\binom{2m}{2}$ possible pairs of non-identical entangled states.

Intuitively, when the decoherence rate is very small, the AS performance will be very similar to the noiseless case according to continuity. However, when the $\rho_1$ and $\rho_2$ are very similar to each other or even identical, as long as there is non-zero decoherence, entanglement purification with AS will very likely have worse performance than without AS. In addition, when the decoherence is not negligible, it should suffice to use $m=2$ AS, because for larger $m$ the accumulated entangled states will decohere more significantly. We also expect that AS will completely lose any advantage after the decoherence rate increases above a certain threshold that depends on error model and the figure of merit under examination. See more details in the Supplemental Material (Sec.~\ref{sec:decoherence_model})~\cite{supmat}. Then without loss of generality, we compare $m=2$ AS given Werner input and memory depolarization, to the case without AS. Specifically, we can prove the following.
\begin{proposition}\label{thm:m=2<NoAS_with_decoherence}
    For raw Werner states with arbitrary fidelities $F_1,F_2\in[1/2,1]$ and memory depolarizing channels characterized by $r=2\kappa T$, the two-round AS package satisfies
    \begin{align}
        p_{\mathrm{succ}}(\rho_{\mathrm{eff},n=2}^{(2)}) &\leq p_{\mathrm{succ}}(F_1,F_2) \quad \text{for } r\geq 0.206,\\
        F_{\mathrm{succ}}(\rho_{\mathrm{eff},n=2}^{(2)}) &\leq F_{\mathrm{succ}}(F_1,F_2) \quad \text{for } r\geq 0.106,\\
        \tilde{F}_{\mathrm{succ}}(\rho_{\mathrm{eff},n=2}^{(2)}) &\leq \tilde{F}_{\mathrm{succ}}(F_1,F_2) \quad \text{for } r\geq 0.150.
    \end{align}
\end{proposition}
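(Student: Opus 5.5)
The plan is to reduce all three inequalities to explicit polynomial inequalities in two visibilities and one decay factor, then check the sign of each polynomial on a box.

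\textbf{Setup.} Write $w_i=(4F_i-1)/3\in[1/3,1]$ and $u=e^{-r}\in(0,1]$. The depolarizing memory keeps a Werner state Werner, and the fidelity law $F(F_0,t)=F_0e^{-2\kappa t}+(1-e^{-2\kappa t})/4$ gives visibility $w e^{-2\kappa t}$. So a pair stored for one cycle has visibility $w_iu$.

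For $m=2$ the accumulated ensemble is $\{w_1,w_1u,w_2,w_2u\}$. The effective package state given above in the End Matter is the uniform mixture over the $\binom{4}{2}=6$ unordered pairs from this ensemble. The baseline is the fresh pair $(w_1,w_2)$.

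\textbf{Reduction to polynomial inequalities.} By Lemma~\ref{thm:eff_output}, both $p_{\mathrm{succ}}$ and $\tilde F_{\mathrm{succ}}$ of the AS package are the averages over these six pairs. From Eqs.~\eqref{eq:bicep-main-succ}--\eqref{eq:bicep-main-weighted} with $S_U=\{II,ZZ\}$ and $N(S_U)=\{II,ZZ,IZ,ZI,XX,YY,XY,YX\}$, the two-pair figures of merit are
\begin{align}
p(x,y)&=\tfrac12(1+xy),\\
\tilde F(x,y)&=\tfrac18(1+x+y+5xy).
\end{align}

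Each of the four ensemble states appears in exactly three of the six pairs, so the averaged linear term is $\tfrac12(1+u)(w_1+w_2)\le w_1+w_2$. The averaged bilinear term is $e_2(w_1,w_1u,w_2,w_2u)/6$, where
\begin{align}
e_2=u(w_1^2+w_2^2)+(1+u)^2w_1w_2 .
\end{align}

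\textbf{Success probability.} The claim $p_{\mathrm{AS}}\le p_{\mathrm{noAS}}$ is equivalent to
\begin{align}
u(w_1^2+w_2^2)+\big((1+u)^2-6\big)w_1w_2\le 0 .
\end{align}
Dividing by $w_1w_2$, this depends only on $t=w_1/w_2\in[1/3,3]$. The left side is increasing in $t+1/t$ and in $u$, so the worst case is the corner $t+1/t=10/3$, i.e.\ $F_1=1/2,F_2=1$.

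The critical $u$ is therefore the root of a quadratic, $u^2+\tfrac{16}{3}u-5=0$. The corresponding threshold on $r$ should reproduce the stated value $0.206$ up to the precise convention for $r$; this normalization needs checking.

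\textbf{Success-weighted fidelity.} The difference is
\begin{align}
\tfrac18\Big[\big(\tfrac{1+u}{2}-1\big)(w_1+w_2)+5\big(e_2/6-w_1w_2\big)\Big].
\end{align}
The linear part is nonpositive for every $u\le1$. The quadratic part is handled by the previous step. For $r$ below the $p$ threshold, the negative linear term must compensate, which lowers the threshold.

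I will maximize this difference over the box $[1/3,1]^2$. Since it is a quadratic in $(w_1,w_2)$ with coefficients monotone in $u$, I will check the corners and the one-dimensional edge critical points. The threshold $0.150$ is then the smallest $r$ at which the maximum is $\le0$.

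\textbf{Normalized fidelity (main obstacle).} $F_{\mathrm{succ}}$ is a ratio, so I will cross-multiply with the positive denominators and study
\begin{align}
G=\tilde F_{\mathrm{AS}}\,p_{\mathrm{noAS}}-\tilde F_{\mathrm{noAS}}\,p_{\mathrm{AS}}.
\end{align}
This is a polynomial of degree at most $4$ in $(w_1,w_2)$ and polynomial in $u$. The plan has three steps:
\begin{itemize}
\item First, show $\partial_u G\ge0$ on the box. Then $G\le0$ at a given $u$ implies $G\le0$ for all smaller $u$, i.e.\ larger $r$.
\item Second, use the symmetry $w_1\leftrightarrow w_2$ and locate the maximum of $G$ over $[1/3,1]^2$. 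I expect it on the boundary, at $w_2=1$ or $w_1=1/3$, by checking the sign of the Hessian or of the edge derivatives.
\item Third, solve the resulting univariate equation for the critical $u$. This yields $r\approx0.106$, certified by rigorous numerical root bracketing.
\end{itemize}
The monotonicity in $u$ and the boundary localization of the maximum are the steps that may need case splitting and interval arithmetic.
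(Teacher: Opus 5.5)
Your reductions are correct, and the normalization you flag is not an issue. With $u=e^{-r}$ a stored pair has visibility $w_iu$, exactly as in the paper. Your quadratic $u^2+\tfrac{16}{3}u-5=0$ is the paper's vertex condition $3+16e^r-15e^{2r}=0$ multiplied by $e^{-2r}/3$, giving $r\approx0.2065$.

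For $p_{\mathrm{succ}}$ your argument genuinely differs from the paper's. The paper shows its numerator $h_p$ is convex in the anti-diagonal coordinate $F_-=(F_2-F_1)/\sqrt2$, then convex along the edges $F_2=1/2$ and $F_1=1$. It reduces to the vertex $(F_1,F_2)=(1,1/2)$ and only then uses monotonicity in $r$ there. Your rescaling by $w_1w_2$, together with monotonicity in $t+1/t$ and $u$, handles the whole box and all $r$ at once, which is shorter.

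For $\tilde F_{\mathrm{succ}}$ and $F_{\mathrm{succ}}$ you follow the paper's skeleton: polynomial numerator, maximum at the off-diagonal vertex, univariate root. However, the step you leave as an expectation, that the maximum over the box sits at a vertex, is exactly where the paper's proof does its work. So as written the $F_{\mathrm{succ}}$ part is a plan rather than a proof. Likewise, your ``corners and edge critical points'' check for $\tilde F_{\mathrm{succ}}$ silently skips interior critical points.

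Both gaps close by hand, with no interval arithmetic needed.
\begin{itemize}
\item In your $\tilde F_{\mathrm{succ}}$ difference the coefficient of $w_i^2$ is $5u/48>0$. It is therefore convex in each variable separately, and its maximum on $[1/3,1]^2$ is at a vertex.
\item For $G$, the quartic terms cancel, since both products contain $\tfrac{5}{96}e_2w_1w_2$. With $s=w_1+w_2$ and $q=w_1w_2$,
\[
96G=q\bigl[4u^2-20+s(2+3u-u^2)\bigr]+3(u-1)s+4us^2-us^3 .
\]
Hence $\partial_{w_1}^2(96G)=(4-2u^2)w_2+2u(4-3w_1)>0$ on the box, and again only vertices matter.
\item The diagonal vertices give $96G=2(u+11)(u-1)$ and $\tfrac{10}{27}(u+11)(u-1)$, both $\le0$.
\item The vertex $(w_1,w_2)=(1/3,1)$ gives $96G=\tfrac{8}{27}(3u^2+34u-33)$. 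Here $3u^2+34u-33$ equals the paper's $36+408e^r-396e^{2r}$ times the positive factor $e^{-2r}/12$, with root $r\approx0.106$.
\end{itemize}
The paper reaches the same vertex via convexity in $F_-$ plus convexity along two edges; separate convexity in $w_1$ and $w_2$ gets there in one step.

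Your global step $\partial_uG\ge0$ does hold, since $\partial_u(96G)=q[8u+s(3-2u)]+s(3+4s-s^2)>0$. But once vertex localization is established for each fixed $u$, you only need monotonicity of the single off-diagonal vertex value, as in the paper.
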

The proof can be found in Supplemental Material (Sec.~\ref{sec:decoherence_model})~\cite{supmat}. We further visualize the comparison between $F_{\mathrm{succ}}(\rho_{\mathrm{eff},n=2}^{(2)})$ and $F_{\mathrm{succ}}\left(F_1,F_2\right)$ for different values of $r$ in Fig.~\ref{fig:m=2vsNoAS_fid}. We observe as expected that for near-identical raw states $\rho_1$ and $\rho_2$, AS cannot offer better performance even under low memory decoherence. The region where there is no advantage expands as the memory decoherence rate increases, while the advantage when two input states are different remains robust for relatively large decoherence rate up to $r\approx 0.1$. It is noteworthy that when using AS under low memory decoherence rate, it is possible to achieve higher successful output fidelity when the fidelity without AS is already above the higher input fidelity, as shown by the region outside the red dashed boundary while within the black dashed boundary, demonstrating further improvement over the better raw state.

\clearpage
\onecolumngrid


\tableofcontents

\section{Effective single-package state after accumulating and shuffling}\label{sec:eff_state}
Recall the setup of the accumulating and shuffling scenario. Consider $n$ states $\rho_1,\rho_2,\dots,\rho_n$. We prepare $m$ copies of each state, so we have $mn$ states in total. Before shuffling, one possible ordering of the full state is
\begin{align}
    \rho_{\mathrm{all}} = \rho_1^{\otimes m} \otimes \rho_2^{\otimes m} \otimes \dots \otimes \rho_n^{\otimes m}.
\end{align}
We apply a uniformly random permutation on the $mn$ states and then divide the shuffled states into $m$ packages of size $n$. By symmetry of the random shuffle, every package has the same marginal distribution, i.e. the expected state of the package. It is thus sufficient to study the first package without loss of generality.

Let $A=(A_1,\dots,A_n)$ be the random label vector of the first package, where $A_j=i$ means that the $j$-th slot of the package contains a copy of $\rho_i$. Conditioned on $A=a=(a_1,\dots,a_n)$, the state of the package is $\rho_a = \rho_{a_1}\otimes\dots\otimes\rho_{a_n}$. The expected, or effective, single-package state is
\begin{align}
    \rho_{\mathrm{eff}}^{(m)} = \E_A[\rho_A].
\end{align}

\subsection{Ordered sampling perspective}

For a label vector $a=(a_1,\dots,a_n)\in \{1,\dots,n\}^n$, denote
the number of times when label $i$ occurs in $a$ as $c_i(a)$.

\begin{lemma}
    For any ordered label vector $a=(a_1,\dots,a_n)$,
    \begin{align}
        \Pr(A=a) = \frac{\prod_{i=1}^n (m)_{c_i(a)}}{(mn)_n},
    \end{align}
    where we have used the notation of falling factorial $(x)_r=x(x-1)\dots(x-r+1)$ with $(x)_0=1$.
\end{lemma}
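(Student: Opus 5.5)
The plan is to realize the uniform shuffle as a uniformly random bijection between the $mn$ physical pairs and the $mn$ slots, and then count the bijections that produce a given ordered label vector $a$ in the first package. We make the $mn$ accumulated pairs distinguishable by tagging each with a source label $i$ and a copy index $\ell\in\{1,\dots,m\}$. A uniformly random permutation of the $mn$ tagged objects induces a uniformly random ordered sequence of $n$ distinct tagged objects in the first $n$ positions. This holds because each injective assignment of tagged objects to the first $n$ slots extends to exactly $(mn-n)!$ full permutations. The number of such ordered $n$-tuples of distinct tagged objects is $(mn)(mn-1)\cdots(mn-n+1)=(mn)_n$, and each occurs with probability $1/(mn)_n$.

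Next I will count the favorable ordered $n$-tuples, namely those whose label sequence equals $a$. For each label $i$, the $c_i(a)$ slots $j$ with $a_j=i$ must be filled, in their fixed left-to-right order, by distinct copies of $\rho_i$. There are $m(m-1)\cdots(m-c_i(a)+1)=(m)_{c_i(a)}$ ways to do this. The choices for different labels are independent because they use disjoint pools of copies, so the count is $\prod_{i=1}^n (m)_{c_i(a)}$. Dividing by $(mn)_n$ gives the claim.

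This also handles the constraint $c_i(a)\le m$ automatically, since $(m)_c=0$ for $c>m$. As a sanity check, I will verify that the probabilities sum to one. Grouping ordered vectors by occupation vector $k$ gives $\sum_k \binom{n}{k_1,\dots,k_n}\prod_i (m)_{k_i}/(mn)_n=\sum_k\prod_i\binom{m}{k_i}/\binom{mn}{n}=1$ by Vandermonde's identity. This grouping is also the bridge to Theorem~\ref{thm:rho_eff}, because the sum over orderings with fixed $k$ reproduces $\Sym(\rho_1^{\otimes k_1}\otimes\cdots\otimes\rho_n^{\otimes k_n})$ with the hypergeometric weight.

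I do not expect a serious obstacle. The only delicate point is justifying that the marginal on the first $n$ positions is uniform over injective sequences, and the extension count $(mn-n)!$ settles it. Tagging copies is essential here, because label vectors alone are not equally likely.
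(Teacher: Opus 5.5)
Your proof is correct, and it computes the same sampling-without-replacement count as the paper, organized differently.

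The paper draws the $n$ slots sequentially and multiplies the conditional probabilities $\frac{m-\#\{s<t:a_s=a_t\}}{mn-(t-1)}$ over $t$. It then has to regroup the numerator by label to reach $\prod_i (m)_{c_i(a)}$: it partitions the slots into the sets $T_i$ and notes that the $q$-th occurrence of label $i$ contributes $m-(q-1)$.

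You instead tag the copies and observe that the first package is a uniformly random injective $n$-tuple of tagged pairs. You then count favorable tuples label by label from the start, so $\prod_i (m)_{c_i(a)}$ appears directly and no regrouping step is needed. You also make explicit something the paper takes for granted: a uniform shuffle of all $mn$ pairs induces the uniform without-replacement law on the first package, via your $(mn-n)!$ extension count.

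The paper's slot-by-slot form does have one advantage. It is the same "extend one slot at a time" device the paper reuses later in the threshold-counting proof of monotonicity in $m$.

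Your normalization check is also correct. It uses $\binom{n}{k_1,\dots,k_n}\prod_i (m)_{k_i} = n!\prod_i\binom{m}{k_i}$ and $(mn)_n=n!\binom{mn}{n}$, followed by Vandermonde. This is exactly the passage to the occupation-number form of the effective state.
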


\begin{proof}
Without loss of generality, we focus on the first package obtained by drawing $n$ systems without replacement from a list containing $m$ copies of each label $1,\dots,n$. Suppose the first $t-1<n$ labels have already been drawn. If label $i$ has appeared $r_i$ times so far, then there are $m-r_i$ remaining copies of label $i$, and there are $mn-(t-1)$ systems remaining in total. Thus the probability of drawing the label $a_t$ at step $t$, given the previous labels, is
\begin{align}
    \Pr(A_t=a_t | A_1=a_1,\dots,A_{t-1}=a_{t-1}) = \frac{m-\#\{s<t: a_s=a_t\}}{mn-(t-1)}.
\end{align}
Multiplying these conditional probabilities over $t=1,\dots,n$, we have
\begin{align}
    \Pr(A=a) &= \prod_{t=1}^n \Pr(A_t=a_t | A_1=a_1,\dots,A_{t-1}=a_{t-1}) \notag\\
    &= \prod_{t=1}^n \frac{m-\#\{s<t: a_s=a_t\}}{mn-(t-1)} \notag\\
    &= \frac{\prod_{t=1}^n\left( m-\#\{s<t: a_s=a_t\} \right)}{(mn)_n}.
\end{align}

We now simplify the numerator. For each label $i\in\{1,\ldots,n\}$, define $T_i=\{t\in\{1,\ldots,n\}:a_t=i\}$. The sets $T_1,\ldots,T_n$ form a disjoint partition of $\{1,\ldots,n\}$. Hence
\begin{align}
    \prod_{t=1}^n \left( m-\#\{s<t:a_s=a_t\} \right) &= \prod_{i=1}^n \prod_{t\in T_i} \left( m-\#\{s<t:a_s=a_t\} \right) \notag\\
    &= \prod_{i=1}^n \prod_{t\in T_i} \left( m-\#\{s<t:a_s=i\} \right),
\end{align}
where in the second line we used the fact that $a_t=i$ for $t\in T_i$. Now write the elements of $T_i$ in increasing order $T_i=\{t_{i,1}<t_{i,2}<\cdots<t_{i,c_i(a)}\}$. At the $q$-th occurrence $t_{i,q}$, the label $i$ has appeared exactly $q-1$ times. Therefore, $\#\{s<t_{i,q}:a_s=i\}=q-1$. It follows that
\begin{align}
    \prod_{t\in T_i} \left( m-\#\{s<t:a_s=i\} \right) = \prod_{q=1}^{c_i(a)} \left(m-(q-1)\right) = (m)_{c_i(a)}.
\end{align}
Combining the contributions from all labels gives
\begin{align}
    \prod_{t=1}^n \left( m-\#\{s<t:a_s=a_t\} \right) = \prod_{i=1}^n (m)_{c_i(a)}.
\end{align}
With the simplified numerator and denominator, we have obtained the desired expression of this lemma.
\end{proof}

Now we have the explicit formula for the expected state of an $n$-state package after the random shuffle, written with explicit ordering of the $n$ states.
\begin{theorem}[Effective state, ordered form]
The expected state of any package is
\begin{align}
    \rho_{\mathrm{eff}}^{(m)} = \sum_{a_1,\dots,a_n=1}^n \frac{\prod_{i=1}^n (m)_{c_i(a)}}{(mn)_n} \rho_{a_1}\otimes\dots\otimes\rho_{a_n}.
\end{align}
Here $a=(a_1,\dots,a_n)$. Terms with $c_i(a)>m$ automatically have
zero weight because $(m)_{c_i(a)>m}=0$.
\end{theorem}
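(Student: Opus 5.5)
The statement follows from the preceding lemma by taking the expectation $\rho_{\mathrm{eff}}^{(m)}=\E_A[\rho_A]$ and using linearity. Conditioned on $A=a$ the package state is exactly $\rho_{a_1}\otimes\cdots\otimes\rho_{a_n}$, so
\begin{align}
    \rho_{\mathrm{eff}}^{(m)} = \sum_{a\in\{1,\dots,n\}^n}\Pr(A=a)\,\rho_{a_1}\otimes\cdots\otimes\rho_{a_n}.
\end{align}
Substituting the lemma's value $\Pr(A=a)=\prod_i (m)_{c_i(a)}/(mn)_n$ gives the claimed ordered form directly.

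Two routine points remain to check.

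\textbf{Summation range.} The sum should run over all of $\{1,\dots,n\}^n$. This is harmless, because any label vector with some $c_i(a)>m$ has zero probability. Indeed, the falling factorial $(m)_r$ contains the factor $m-m=0$ whenever $r>m$. In the lemma's derivation this shows up as the conditional probability vanishing at the $(m+1)$-th draw of label $i$, since no copies of label $i$ remain at that point.

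\textbf{Package choice.} The statement concerns ``any package'', not just the first one. Since the shuffle is a uniformly random permutation of all $mn$ systems, it is invariant under relabeling positions. Consequently the label vector of the $j$-th block of $n$ consecutive slots has the same distribution as that of the first block. I would state this exchangeability explicitly, for instance by composing the random permutation with the fixed permutation that swaps block $j$ with block $1$.

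There is no real obstacle here. The only step needing care is that the expected density operator of the package equals the probability-weighted sum of the conditional states. This holds because the package state conditioned on the classical labels is a product state and averaging is linear. As a final sanity check, the weights should sum to one: $\sum_a \prod_i (m)_{c_i(a)}=(mn)_n$, which is just the count of ordered draws without replacement.
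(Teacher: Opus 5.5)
Your proposal is correct and matches the paper's proof essentially verbatim: write $\rho_{\mathrm{eff}}^{(m)}=\E_A[\rho_A]$ as the probability-weighted sum over label vectors $a$, then substitute the lemma's value $\Pr(A=a)=\prod_i (m)_{c_i(a)}/(mn)_n$. Your extra checks are all valid: the zero weight when some $c_i(a)>m$, the exchangeability argument for an arbitrary package (which the paper covers with the same symmetry remark in its setup), and the normalization $\sum_a\prod_i (m)_{c_i(a)}=(mn)_n$.
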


\begin{proof}
By definition the expected state is
\begin{align}
    \rho_{\mathrm{eff}}^{(m)} = \E_A[\rho_A] = \sum_{a_1,\dots,a_n=1}^n \Pr(A=a) \rho_{a_1}\otimes\dots\otimes\rho_{a_n}.
\end{align}
Then using the above Lemma we immediately get the formula.
\end{proof}

\subsection{Occupation number perspective}

The previous derivation is based on the most straightforward operational interpretation of the random shuffle process. A more compact expression can be obtained by grouping terms according to occupation numbers, i.e. the number of copies of $\rho_i$ in the package. We define the occupation number vector $k=(k_1,\dots,k_n)$, where $k_i$ is the number of copies of $\rho_i$ in the package, with $k_i\geq 0$ and $\sum_{i=1}^n k_i=n$. Since only $m$ copies of each state exist, we also require $k_i\leq m$. Now fix such an occupation vector $k$, and define the reference ordered tensor product
\begin{align}
  X_k = \rho_1^{\otimes k_1} \otimes \rho_2^{\otimes k_2} \otimes \dots \otimes \rho_n^{\otimes k_n} .
\end{align}
If $k_i=0$, the corresponding factor in the tensor product is simply omitted. Let $S_n$ be the symmetric group of the $n$ slots in a package. For permutation $\pi\in S_n$, let $U_\pi$ be the unitary that permutes tensor factors according to $\pi$. Then we can define the symmetrized (permutation-twirled) state
\begin{align}
    \Sym(X) = \frac{1}{n!}\sum_{\pi\in S_n} U_\pi X U_\pi^\dagger.
\end{align}
After we know only the occupation numbers $k$, the random shuffle still randomizes the positions of those states inside the package. Conditioned on $K=k$, all distinct orderings of the list with $k_i$ copies of label $i$ are equally likely. The average over those orderings gives exactly $\Sym(X_k)$. Then we can also express the expected state of an $n$-state package after the random shuffle with symmetrized states (thus without explicit ordering).

\begin{theorem}
The expected state of any package can also be written as
\begin{align}
    \rho_{\mathrm{eff}}^{(m)} = \sum_{\substack{k_1+\dots+k_n=n\\0\leq k_i\leq m}} \frac{\prod_{i=1}^n \binom{m}{k_i}}{\binom{mn}{n}} \Sym \left( \rho_1^{\otimes k_1} \otimes\dots\otimes \rho_n^{\otimes k_n} \right).
\end{align}
\end{theorem}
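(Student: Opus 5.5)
The plan is to start from the ordered form of the effective state and group the ordered label vectors $a\in\{1,\dots,n\}^n$ by their occupation vectors $k=(c_1(a),\dots,c_n(a))$. For fixed $k$ with $\sum_i k_i=n$ and $0\le k_i\le m$, every $a$ with $c(a)=k$ gets the same weight $\prod_i (m)_{k_i}/(mn)_n$. Terms with some $k_i>m$ vanish because the falling factorial is zero. So the first step is to write
\begin{align}
    \rho_{\mathrm{eff}}^{(m)} = \sum_{k} \frac{\prod_{i}(m)_{k_i}}{(mn)_n}\sum_{a:\,c(a)=k}\rho_{a_1}\otimes\cdots\otimes\rho_{a_n}.
\end{align}

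The second step is to identify the inner sum with a multiple of $\Sym(X_k)$. Every $a$ with $c(a)=k$ is a rearrangement of the reference label vector underlying $X_k$. Hence $\rho_{a_1}\otimes\cdots\otimes\rho_{a_n}=U_\pi X_k U_\pi^\dagger$ for some $\pi\in S_n$, with the convention for $U_\pi$ fixed once so that the indices line up.

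The map $\pi\mapsto a$ is surjective onto the set of such label vectors. Its fibers are cosets of the stabilizer $S_{k_1}\times\cdots\times S_{k_n}$, each of size $\prod_i k_i!$. Therefore
\begin{align}
    \frac{1}{n!}\sum_{\pi\in S_n}U_\pi X_k U_\pi^\dagger=\frac{\prod_i k_i!}{n!}\sum_{a:\,c(a)=k}\rho_{a_1}\otimes\cdots\otimes\rho_{a_n},
\end{align}
so the inner sum equals $\frac{n!}{\prod_i k_i!}\Sym(X_k)$.

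The third step is pure arithmetic. Combine
\begin{align}
    \frac{\prod_i (m)_{k_i}}{(mn)_n}\cdot\frac{n!}{\prod_i k_i!}=\frac{\prod_i \binom{m}{k_i}}{\binom{mn}{n}},
\end{align}
which uses $(m)_{k}/k!=\binom{m}{k}$ and $(mn)_n/n!=\binom{mn}{n}$. This gives exactly the claimed expression. As a sanity check, the weights sum to one by the multivariate Vandermonde identity $\sum_k\prod_i\binom{m}{k_i}=\binom{mn}{n}$.

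The main obstacle is the orbit-counting step: verifying carefully that the permutation twirl of $X_k$ hits each distinct ordered product exactly $\prod_i k_i!$ times, independent of the convention chosen for $U_\pi$. I would handle it by an orbit–stabilizer argument on label vectors. This is legitimate even when some $\rho_i$ coincide as operators, because the counting is done on labels and not on operators.

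Alternatively, one can argue probabilistically, as the text preceding the statement suggests. Conditioned on $K=k$, with $\Pr(K=k)$ given by the multivariate hypergeometric law, the uniform shuffle makes all orderings consistent with $k$ equally likely. The conditional expected state is then $\Sym(X_k)$, and the tower property gives the result directly. I would present the algebraic regrouping as the main proof and mention this as a cross-check.
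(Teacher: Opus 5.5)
Your proposal is correct and takes essentially the paper's approach: you group the ordered label vectors by occupation vector $k$, use that all orderings sharing the same $k$ carry equal weight so the class average is $\Sym(X_k)$, and recover the multivariate hypergeometric weight. The difference is only one of bookkeeping, and your probabilistic ``cross-check'' is exactly the paper's proof. You make the orbit--stabilizer count explicit, and you obtain $\prod_i\binom{m}{k_i}/\binom{mn}{n}$ by multiplying the ordered weight $\prod_i (m)_{k_i}/(mn)_n$ by the multinomial $n!/\prod_i k_i!$. The paper instead counts unordered selections to get $\Pr(K=k)$, asserts $\E[\rho_A\mid K=k]=\Sym(X_k)$, and applies the law of total expectation.
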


\begin{proof}
We can group the ordered vectors $a=(a_1,\dots,a_n)$ by their occupation vector $K(a)=(c_1(a),\dots,c_n(a))$. For a fixed occupation vector $k$, every ordered vector with that occupation vector has the same probability, since the probability for the ordered vector $\prod_i (m)_{k_i}/(mn)_n$ depends only on $k$. Consequently, for random package state $\rho_A$ with random ordered label vector $A=(a_1,\dots,a_n)$, we have $\E[\rho_A\mid K=k]=\Sym(X_k)$.

The probability of the occupation vector itself is
\begin{align}
    \Pr(K=k) = \frac{\prod_i \binom{m}{k_i}}{\binom{mn}{n}},
\end{align}
because one chooses $k_i$ of the $m$ available copies of type $i$, for each $i$, out of all $\binom{mn}{n}$ possible unordered selections of $n$ systems. Then the desired formula is a direct result of the law of total expectation
\begin{align}
    \rho_{\mathrm{eff}}^{(m)} &= \sum_k \Pr(K=k) \E[\rho_A\mid K=k] \notag\\
    &= \sum_k \frac{\prod_{i=1}^n \binom{m}{k_i}}{\binom{mn}{n}} \Sym \left( \rho_1^{\otimes k_1} \otimes\dots\otimes \rho_n^{\otimes k_n} \right),
\end{align}
where valid $k$ satisfies $k_1+\dots+k_n=n$ and $0\leq k_i\leq m$.
\end{proof}

\subsection{Permutation twirling perspective}
The same expected state can be described using the full permutation group on all $mn$ systems. Let $S_{mn}$ be the group of permutations of the full list (the symmetric group), and let $V_\pi$ be the unitary representation that permutes the $mn$ tensor factors. The globally shuffled state is the group average (result of permutation twirling)
\begin{align}
    \mathcal T_{S_{mn}}(\rho_{\mathrm{all}}) = \frac{1}{(mn)!} \sum_{\pi\in S_{mn}} V_\pi \rho_{\mathrm{all}} V_\pi^\dagger.
\end{align}
Let $\Tr_{\mathrm{rest}}$ denote the partial trace over all systems except one chosen package, then
\begin{align}
    \rho_{\mathrm{eff}}^{(m)} = \Tr_{\mathrm{rest}} \left[ \mathcal T_{S_{mn}}(\rho_{\mathrm{all}}) \right].
\end{align}
Note that according to the permutation invariance after the random shuffle, the chosen package can be arbitrary.

\section{Bilocal Clifford entanglement purification (biCEP)}\label{sec:bicep}
In this section, we explain the stabilizer-code interpretation of biCEP protocols and derive figures of merit in more details, and demonstrate the stabilizer-code analysis of example biCEP protocols.

\subsection{Stabilizer code interpretation of biCEP}
Let $S\subseteq \Pauli_n$ be an abelian (Pauli) stabilizer subgroup not containing $-I^{\otimes n}$. The common $+1$ eigenspace of $S$ is a stabilizer codespace. An $[[n,1,d]]$ stabilizer code has $\lvert S\rvert=2^{n-1}$ and normalizer (also the centralizer, as we work within the Pauli group) $N(S)\subseteq\Pauli_n$ of size $\lvert N(S)\rvert=4\lvert S\rvert$, with $N(S)/S \cong \Pauli_1$ acting as the logical Pauli group.

Let $S_0 \coloneqq \langle Z_2,\dots,Z_n\rangle$ be the trivial $[[n,1]]$ stabilizer code whose logical qubit before encoding and after decoding is the first qubit. Note that we will use $S_n,n\ge 1$ to denote symmetric groups, and here $S_0$ is not to be confused with such symmetric groups. The subscripts of the Pauli operators index the qubits in the stabilizer code $i=1,2,\dots,n$, where the first qubit corresponds to the logical qubit. Given a Clifford $U$ for biCEP, define
\begin{align}\label{eq:SU_def}
    g_j \coloneqq U^\dagger Z_j U,~~S_U \coloneqq \langle g_2,\dots,g_n\rangle = U^\dagger S_0 U .
\end{align}
Since $U$ is Clifford, $S_U$ is still an $[[n,1]]$ stabilizer group.

Comparing Alice's and Bob's computational-basis outcomes on sacrificed pair $j$ after applying $U\otimes U^*$ is equivalent to measuring the bilocal Pauli observable $Z_{A,j}\otimes Z_{B,j}$ after the Clifford. Propagating this observable back through $U\otimes U^*$ gives
\begin{equation}\label{eq:pulled_back_bilocal_stabilizer}
    (U^\dagger Z_j U)_A \otimes ((U^*)^\dagger Z_j U^*)_B = g_{j,A}\otimes g_{j,B}^* .
\end{equation}
Thus the computational-basis parity checks in the standard biCEP circuit are exactly the bilocal stabilizer measurements associated with $S_U$, with the usual complex conjugation sign convention on Bob's side. Then we have the following Lemma which formalizes the interpretation of biCEP in terms of stabilizer codes.

\begin{lemma}\label{lem:accepted_normalizer}
    Assume the input is $n$ perfect Bell pairs $\ket{\Phi^+}^{\otimes n}$ subjected to a Pauli error $E\in\Pauli_n$ on Bob before the biCEP circuit $U\otimes U^*$. For the standard biCEP protocol associated with $U$, equivalently for the stabilizer $S_U$ in Eq.~\eqref{eq:SU_def},
    \begin{align}
        \textnormal{acceptance} &\iff E\in N(S_U), \\
        \textnormal{accepted output is }\ket{\Phi^+} &\iff E\in S_U .
    \end{align}
    More generally, conditioned on acceptance, the logical Pauli error on the output pair is the coset $E S_U\in N(S_U)/S_U$.
\end{lemma}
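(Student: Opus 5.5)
The plan is to push the error through the circuit to the measured qubits. The two key facts are the transpose identity for $\ket{\Phi^+}$ and the pulled-back observables in Eq.~\eqref{eq:pulled_back_bilocal_stabilizer}.

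\emph{Step 1: the circuit acts as a pure error relabeling.} For any operator $M$ on $n$ qubits, $(M\otimes I)\ket{\Phi^+}^{\otimes n}=(I\otimes M^{T})\ket{\Phi^+}^{\otimes n}$. Hence
\begin{align}
(U\otimes U^*)\ket{\Phi^+}^{\otimes n}=(I\otimes U^*U^{T})\ket{\Phi^+}^{\otimes n}=\ket{\Phi^+}^{\otimes n}.
\end{align}
The circuit therefore sends $(I\otimes E)\ket{\Phi^+}^{\otimes n}$ to $(I\otimes E')\ket{\Phi^+}^{\otimes n}$ with $E'=U^*E(U^*)^\dagger$. Because $U^*$ is also Clifford, $E'$ is again a Pauli string up to a phase.

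\emph{Step 2: read off the measurement parities.} Each pair $j\ge2$ is now in the Bell state $(I\otimes E'_j)\ket{\Phi^+}$. The computational-basis parity on that pair is even iff $E'_j\in\{I,Z\}$, that is, iff $E'_j$ commutes with $Z_j$. Equivalently, $(I\otimes E')\ket{\Phi^+}^{\otimes n}$ is an eigenstate of $Z_{A,j}\otimes Z_{B,j}$ with eigenvalue $\pm1$ according to whether $E'$ commutes or anticommutes with $Z_j$. The outcome is therefore deterministic.

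\emph{Step 3: transfer the commutation condition back.} I would avoid tracking the conjugations explicitly and use Eq.~\eqref{eq:pulled_back_bilocal_stabilizer} directly. The eigenvalue of $Z_{A,j}\otimes Z_{B,j}$ on the output equals the eigenvalue of $g_{j,A}\otimes g_{j,B}^*$ on the input $(I\otimes E)\ket{\Phi^+}^{\otimes n}$. Using the transpose trick, $(g\otimes g^*)\ket{\Phi^+}^{\otimes n}=(I\otimes g^*g^{T})\ket{\Phi^+}^{\otimes n}$, and for Hermitian $g$ we have $g^*=g^{T}$, so $g^*g^{T}=(g^{T})^2=I$. The input $\ket{\Phi^+}^{\otimes n}$ is thus a $+1$ eigenstate of $g\otimes g^*$. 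Commuting $g_B^*$ past $E$ picks up the factor $\pm1$ according to $[E,g^*]=0$ or $\{E,g^*\}=0$. Transposition and conjugation preserve whether two Paulis commute (they only introduce signs per $Y$ factor, which cancel), so this is the same as commuting or anticommuting with $g$.

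The parity is even for all $j$ iff $E$ commutes with $g_2,\dots,g_n$, i.e.\ iff $E\in N(S_U)$. This proves the acceptance claim.

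\emph{Step 4: identify the output.} Given acceptance, $E'$ commutes with every $Z_j$ for $j\ge2$. So $E'=P_1\otimes Z^{b_2}\otimes\cdots\otimes Z^{b_n}$ up to phase. The $Z$ factors on the measured pairs do not affect the kept pair, so the output is $(I\otimes P_1)\ket{\Phi^+}$. Here $P_1$ is the image of $E$ under the isomorphism $N(S_U)/S_U\to N(S_0)/S_0\cong\Pauli_1$ induced by conjugation. That conjugation maps $S_U$ to $S_0$, or to its complex conjugate, which is the same group since $S_0$ is generated by real $Z$'s.

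It follows that $E$ and $Es$ with $s\in S_U$ give the same $P_1$, so the output depends only on the coset $ES_U$. Moreover, $P_1=I$ iff $E'\in S_0$, which holds iff $E\in S_U$. This also requires checking that the map used for $E$ (conjugation by $U^*$) is compatible with the definition $S_U=U^\dagger S_0U$.

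\emph{Main obstacle.} The expected difficulty is this bookkeeping of complex conjugates and transposes. The map on Bob's errors is $E\mapsto U^*EU^{T}$, whereas $S_U$ is defined with $U^\dagger(\cdot)U$. I would reconcile the two by noting that $(U^*EU^{T})^*=UE^*U^\dagger$ and that $E\in N(S)$ iff $E^*\in N(S^*)$. This reduces the question to whether $S_U^*$ is the correct group, and there I would fall back on the eigenvalue argument of Step 3, which handles the conjugation automatically. Global phases are ignored throughout.
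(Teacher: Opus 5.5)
Your proposal is correct and follows essentially the same route as the paper: push $E$ through the circuit via the transpose trick to $E'=U^*EU^{T}$, read each parity check as commutation of $E'$ with $Z_j$, pull this back to commutation of $E$ with $g_j$, and identify the accepted output via $E'\in S_0\iff E\in S_U$. You also treat the coset claim explicitly, which the paper leaves implicit. The conjugation bookkeeping you flag as the main obstacle closes in one line, exactly as in the paper, via $U^{T}Z_jU^*=(U^\dagger Z_jU)^*=g_j^*=\pm g_j$ (equivalently $U^*S_UU^{T}=S_0$ up to signs), so no fallback argument is needed.
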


\begin{proof}
    After applying $U\otimes U^*$, the effective Pauli error on Bob is $E_U = U^* E U^T$ according to the transpose trick. The standard biCEP measures $Z_{A,j}Z_{B,j}$ for $j=2,\dots,n$ and accepts iff all outcomes are $+1$. A Pauli error $E_U$ on Bob flips the $j$-th outcome iff $\{E_U,Z_j\}=0$. Hence all parity checks are accepted iff $E_U$ commutes with every generator of $S_0$, i.e. iff $E_U\in N(S_0)$.

    The above condition for acceptance is equivalent to $[E,U^T S_0 U^*]=0$. Meanwhile, we have $U^T Z_j U^* = (U^\dagger Z_j U)^* = g_j^*$, and $g_j^*$ differs from $g_j$ only by a possible sign. As we have assumed to ignore global phases for Pauli errors, this means precisely that $E\in N(S_U)$.

    Now assume the protocol succeeds, which means $E_U\in N(S_0)$. The remaining output pair is the logical qubit of the trivial code $S_0$, so the output is the target Bell state iff this logical Pauli is trivial, equivalently iff $E_U\in S_0$. Propagating back through the biCEP circuit, we have $E\in S_U$.
\end{proof}

We also describe an apparently different formulation of EPP from stabilizer codes.
\begin{definition}[Stabilizer-code EPP]
\label{def:postselected_stabilizer_epp}
Let $S\subseteq \mathcal P_n$ be the stabilizer group of an $[[n,1]]$ stabilizer code, with independent generators $g_1,\dots,g_{n-1}$. A stabilizer-code entanglement purification protocol associated with $S$ on $n$ noisy Bell pairs shared by Alice and Bob is as follows.

Alice measures stabilizers $g_{1,A},\dots,g_{n-1,A}$ on her $n$ qubits, and Bob measures the complex conjugate (in the computational basis) observables $g_{1,B}^*,\dots,g_{n-1,B}^*$ on his $n$ qubits. Equivalently, the protocol measures the bilocal stabilizers $g_{i,A}\otimes g_{i,B}^*$ for $i=1,\dots,n-1$. Alice and Bob communicate their measurement outcomes and accept the block iff all bilocal stabilizer outcomes are $+1$, corresponding to trivial syndrome. If any bilocal stabilizer outcome is $-1$, the block is discarded.

Upon success, Alice and Bob may apply local Clifford decoders
$U$ and $U^*$, respectively, satisfying $U S U^\dagger = \langle Z_2,\dots,Z_n\rangle$. They then discard the $n-1$ syndrome qubit pairs and keep the remaining Bell pair as the output.
\end{definition}

We can then see that the standard biCEP protocol using this Clifford $U$ is thus exactly the stabilizer-code EPP for $S$, as it effectively measures the bilocal stabilizers, accepts iff the syndrome is trivial, and outputs the logical Bell pair via discarding the $n-1$ measured qubit pairs. The decoding is ``built-in'' in the biCEP formulation as it is done before the measurements. Thus the standard biCEP formulation and the stabilizer-code EPP formulation are equivalent. The equivalence has an explicit correspondence: biCEP using Clifford $U$ is equivalent to stabilizer-code EPP with stabilizer $S_U = U^\dagger S_0 U$, where recall that $S_0$ is the trivial $[[n,1]]$ stabilizer.

\subsection{Success probability and successful output fidelity}
As a result of the biCEP formulation, for independent BDS input states, the biCEP success probability and successful output fidelity are
\begin{align}
    p_{\mathrm{succ}} &= \Pr(E\in N(S_U)), \\
    F_{\mathrm{succ}} &= \frac{\Pr(E\in S_U)}{\Pr(E\in N(S_U))} = \frac{\Pr(E\in S_U)}{p_{\mathrm{succ}}}.
\end{align}
From such probability statements we can in principle calculate these figures of merit from the probabilities of different Pauli error strings as
\begin{align}\label{eq:Fout_general}
    p_{\mathrm{succ}} &= \sum_{E\in N(S)}\ \prod_{i=1}^n p_i(E_i), \\
    F_{\mathrm{succ}} &= \frac{\sum_{E\in S}\ \prod_{i=1}^n p_i(E_i)}{\sum_{E\in N(S)}\ \prod_{i=1}^n p_i(E_i)}.
\end{align}

In fact, we can also express these figures of merit in terms of observables on the joint state of the BDS input. Consider any $P\in\mathcal P_n$ and we can define the bilocal Pauli observable $O(P) = P_A\otimes P_B^*$. Recall that for Pauli errors we ignore the global phases, and this bilocal observable itself is also explicitly independent of the global phase on $P$. Consider a standard $n$-to-1 biCEP protocol with Clifford $U$ with $S_U=\langle g_2,\dots,g_n\rangle$, where $g_j = U^\dagger Z_j U$.

Recall that the biCEP is successful when we measure $+1$ for all bilocal observables $O(g_i)$ with $g_i\in S_U$. Therefore, the biCEP success corresponds to a projector on the joint input state
\begin{align}
    \Pi_{\mathrm{succ}} = \prod_{j=2}^n \frac{I+O(g_j)}{2} = \frac{1}{|S_U|} \sum_{s\in S_U} O(s).
\end{align}
Therefore, for any joint input state $\rho_{AB}$ of the $n$ shared qubit pairs, the success probability can be expressed as
\begin{align}
    p_{\mathrm{succ}} = \Tr \left[\Pi_{\mathrm{succ}}\rho_{AB}\right] = \frac{1}{|S_U|} \sum_{s\in S_U} \Tr \left[ O(s)\rho_{AB} \right].
\end{align}

Next we consider the numerator of the successful output fidelity, $p_{\mathrm{succ}}F_{\mathrm{succ}}$. Consider the joint state after the biCEP circuit $\rho_{AB}^{(U)} = (U_A\otimes U_B^*)\rho_{AB} (U_A^\dagger\otimes U_B^T)$. This is effectively a reference frame transform from the initial joint state $\rho_{AB}$. The successful projector in this frame is
\begin{align}
    \Pi_0 = \prod_{j=2}^n \frac{I+Z_{A,j}Z_{B,j}}{2}.
\end{align}
This projector $\Pi_0$ is supported only on the measured qubits after the biCEP circuit, and the first qubit pair is untouched. Thus the numerator of the successful output fidelity can be expressed as
\begin{align}
    p_{\mathrm{succ}}F_{\mathrm{succ}} = \Tr \left[ (\Phi^+_1\otimes \Pi_0)\rho_{AB}^{(U)} \right],
\end{align}
where $\Phi^+_1$ is the projector on the target Bell state for the output state (first qubit pair). Expanding $\Phi^+_1$ and $\Pi_0$ gives
\begin{align}
    \Phi^+_1\otimes\Pi_0 = \frac{1}{|N(S_0)|} \sum_{h\in N(S_0)} O(h).
\end{align}
Propagating this observable back through the biCEP circuit $U\otimes U^*$ maps $N(S_0)$ to $N(S_U)$. Hence we have
\begin{align}
    p_{\mathrm{succ}}F_{\mathrm{succ}} = \frac{1}{|N(S_U)|} \sum_{h\in N(S_U)} \Tr \left[ O(h)\rho_{AB} \right].
\end{align}

For independent Bell-diagonal input pairs $\rho_{AB} = \bigotimes_{i=1}^n \rho_i$ with $\rho_i = \sum_{P\in \mathcal{P}_1} p_i(P) (I\otimes P)\Phi^+(I\otimes P)^\dagger$, define the single-pair correlations
\begin{align}
    \lambda_i(P) \coloneqq \Tr \left[ (P_A\otimes P_B^*)\rho_i \right].
\end{align}
Explicitly we have
\begin{align}
    \lambda_i(I) &= 1, \\
    \lambda_i(X) &= p_i(I)+p_i(X)-p_i(Y)-p_i(Z),\\
    \lambda_i(Y) &= p_i(I)-p_i(X)+p_i(Y)-p_i(Z),\\
    \lambda_i(Z) &= p_i(I)-p_i(X)-p_i(Y)+p_i(Z).
\end{align}
For a Pauli string $P=P_1\otimes\dots\otimes P_n$ we have
\begin{align}
    \Tr \left[ O(P)\rho_{AB} \right] = \prod_{i=1}^n \lambda_i(P_i).
\end{align}
Therefore,
\begin{align}
    p_{\mathrm{succ}} =& \frac{1}{|S_U|} \sum_{s\in S_U} \prod_{i=1}^n \lambda_i(s_i), \label{eq:bicep_bds_succ_prob} \\
    p_{\mathrm{succ}}F_{\mathrm{succ}} =& \frac{1}{|N(S_U)|} \sum_{h\in N(S_U)} \prod_{i=1}^n \lambda_i(h_i). \label{eq:bicep_bds_succ_fid_numer}
\end{align}
For an $[[n,1]]$ stabilizer code, $|S_U|=2^{n-1}$ and $|N(S_U)|=2^{n+1}$. For independent Werner inputs with visibilities $w_i$, the single-pair correlations reduce to $\lambda_i(I)=1$ and $\lambda_i(X)=\lambda_i(Y)=\lambda_i(Z)=w_i$. Thus when input states are all Werner states, $p_{\mathrm{succ}}$ and $p_{\mathrm{succ}}F_{\mathrm{succ}}$ will become polynomials of Werner parameters (visibilities) $w_i$ with non-negative coefficients, as long as the practical assumption $p_i(I)\ge 1/2$, i.e. each Werner state has Bell fidelity $F_i\ge 1/2$ and is thus entangled, holds.

\subsection{Examples}
\subsubsection{Bilocal CNOT EPP}
Consider the standard 2-to-1 bilocal CNOT protocol. Alice and Bob apply CNOT with pair 1 qubits as control and pair 2 qubits as target, measure the second pair in the computational basis. The EPP is successful if and only if the two measurement outcomes agree, and keep the first pair. For input states to this EPP, let $\nu_j^{(i)}$ denote the population of Bell component $j$ in input state $\rho_i$, where $i=1,2$ labels the input and $(\Phi^+,\Phi^-,\Psi^+,\Psi^-)$ correspond to $j=1,2,3,4$. Thus $\rho_i=\nu_1^{(i)}\Phi^+ + \nu_2^{(i)}\Phi^- + \nu_3^{(i)}\Psi^+ + \nu_4^{(i)}\Psi^-$, with $\sum_{j=1}^4\nu_j^{(i)}=1$ and $\nu_1^{(i)}=F_i$. This notation distinguishes Bell populations $\nu_j^{(i)}$ from the Bell correlations $\lambda_i(P)$ defined above. The successful output state is described by the transformation rules~\cite{deutsch1996quantum}
\begin{align}
    \nu_{1,\mathrm{succ}} &= \frac{\nu_1^{(1)}\nu_1^{(2)} + \nu_2^{(1)}\nu_2^{(2)}}{p_{\mathrm{succ}}} = F_{\mathrm{succ}},\\
    \nu_{2,\mathrm{succ}} &= \frac{\nu_1^{(1)}\nu_2^{(2)} + \nu_2^{(1)}\nu_1^{(2)}}{p_{\mathrm{succ}}},\\
    \nu_{3,\mathrm{succ}} &= \frac{\nu_3^{(1)}\nu_3^{(2)} + \nu_4^{(1)}\nu_4^{(2)}}{p_{\mathrm{succ}}},\\
    \nu_{4,\mathrm{succ}} &= \frac{\nu_3^{(1)}\nu_4^{(2)} + \nu_4^{(1)}\nu_3^{(2)}}{p_{\mathrm{succ}}},
\end{align}
with success probability 
\begin{align}
    p_{\mathrm{succ}} = (\nu_1^{(1)} + \nu_2^{(1)})(\nu_1^{(2)} + \nu_2^{(2)}) + (\nu_3^{(1)} + \nu_4^{(1)})(\nu_3^{(2)} + \nu_4^{(2)}).
\end{align}
One can see that the above expressions are invariant under exchange of $\rho_1$ and $\rho_2$. Werner input states are just a special case of BDS input, and for 2 Werner states with fidelities $F_1$ and $F_2$ we explicitly have
\begin{align}
    p_{\mathrm{succ}}^{\mathrm{CNOT}}(F_1,F_2) &= F_1 F_2 + \frac{F_1 (1-F_2) + F_2 (1-F_1)}{3} + \frac{5(1-F_1)(1-F_2)}{9},\\
    F_{\mathrm{succ}}^{\mathrm{CNOT}}(F_1,F_2) &= \frac{F_1 F_2 + (1-F_1)(1-F_2)/9}{p_{\mathrm{succ}}^{\mathrm{CNOT}}(F_1,F_2)} .
\end{align}

Then we demonstrate the analysis using the stabilizer framework. The associated stabilizer of the bilocal CNOT EPP is $S_{\mathrm{CNOT}} = \langle Z_1Z_2\rangle = \{II,ZZ\}$, with $|S_{\mathrm{CNOT}}|=2$. The normalizer consists of the Pauli strings commuting with $Z_1Z_2$: $N(S_{\mathrm{CNOT}}) = \{II,IZ,ZI,ZZ,XX,XY,YX,YY\}$ with $|N(S_{\mathrm{CNOT}})|=8$. For two generally different Werner inputs with visibilities $w_1,w_2$, then according to Eqs.~\eqref{eq:bicep_bds_succ_prob} and~\eqref{eq:bicep_bds_succ_fid_numer} we have
\begin{align}
    p_{\mathrm{succ}}^{\mathrm{CNOT}}(w_1,w_2) &= \frac{1+w_1w_2}{2},\\
    p_{\mathrm{succ}}^{\mathrm{CNOT}}(w_1,w_2)F_{\mathrm{succ}}^{\mathrm{CNOT}}(w_1,w_2) &= \frac{1+w_1+w_2+5w_1w_2}{8}.
\end{align}
One can easily verify that the above expressions in the visibility parameterization recover the above familiar expressions in the fidelity parameterization. Also, the expressions are invariant under the permutation of $w_1, w_2$, as expected.

\subsubsection{\texorpdfstring{$[[5,1,3]]$-code EPP}{Five-qubit-code EPP}}
Consider the standard $[[5,1,3]]$ stabilizer code $S_{5q}$ with generators $g_1=XZZXI, g_2=IXZZX, g_3=XIXZZ, g_4=ZXIXZ$ of size $|S_{5q}|=2^4=16$. This gives a 5-to-1 biCEP protocol. A convenient choice of logical gates is $\overline{X}=XXXXX, \overline{Z}=ZZZZZ$, which will facilitate the enumeration of $|N(S_{5q})|=2^6=64$ normalizers. 

We now list the stabilizer and normalizer elements explicitly, grouped by support. For instance, $XZZXI$ denotes a Pauli string with non-identity support $\{1,2,3,4\}$.

The stabilizer elements are
\begin{align*}
\begin{aligned}
    S_{5q}=\{&IIIII,XZZXI,YXXYI,ZYYZI,XXYIY,YYZIZ,ZZXIX,XYIYX,\\
    &YZIZY,ZXIXZ,XIXZZ,YIYXX,ZIZYY,IXZZX,IYXXY,IZYYZ\}.
\end{aligned}
\end{align*}
Grouped by support, the non-identity stabilizer elements are
\begin{align*}
\begin{array}{c|c}
    \text{support} & \text{stabilizer elements} \\
    \hline
    1234 & XZZXI,~ YXXYI,~ ZYYZI \\
    1235 & XXYIY,~ YYZIZ,~ ZZXIX \\
    1245 & XYIYX,~ YZIZY,~ ZXIXZ \\
    1345 & XIXZZ,~ YIYXX,~ ZIZYY \\
    2345 & IXZZX,~ IYXXY,~ IZYYZ
\end{array}
\end{align*}
Thus every 4-qubit support occurs exactly three times in $S_{5q}$.

For the normalizer, the only weight-0 element is the identity $IIIII$. The weight-3 normalizer elements are
\begin{align*}
\begin{array}{c|c}
    \text{support} & \text{normalizer elements} \\
    \hline
    123 & XYXII,~ YZYII,~ ZXZII \\
    124 & XXIZI,~ YYIXI,~ ZZIYI \\
    125 & XZIIZ,~ YXIIX,~ ZYIIY \\
    134 & XIYYI,~ YIZZI,~ ZIXXI \\
    135 & XIZIX,~ YIXIY,~ ZIYIZ \\
    145 & XIIXY,~ YIIYZ,~ ZIIZX \\
    234 & IXYXI,~ IYZYI,~ IZXZI \\
    235 & IXXIZ,~ IYYIX,~ IZZIY \\
    245 & IXIYY,~ IYIZZ,~ IZIXX \\
    345 & IIXYX,~ IIYZY,~ IIZXZ
\end{array}
\end{align*}
Hence every 3-qubit support occurs exactly three times in $N(S_{5q})$. The weight-4 normalizer elements are precisely the non-identity stabilizer elements listed above, so every 4-qubit support also occurs exactly three times in $N(S_{5q})$. The weight-5 normalizer elements are
\begin{align*}
\begin{aligned}
    \{& XXXXX, XXZYZ, XYYXZ, XYZZY, XZXYY, XZYZX,\\
    &YXYZZ, YXZXY, YYXZX, YYYYY, YZXXZ, YZZYX,\\
    &ZXXZY, ZXYYX, ZYXYZ, ZYZXX, ZZYXY, ZZZZZ \}.
\end{aligned}
\end{align*}
There are eighteen such full-support elements.

The key observation is that, the number of elements in $S_{5q}$ and $N(S_{5q})$ with a certain support depends only on the support size, not on which qubits are in the support. Therefore, for non-identical Werner visibilities $w_1,\dots,w_5$, the expressions for biCEP figures of merit as linear combinations of $w_i$ polynomials (the success probability and the success-weighted output fidelity) will reduce to linear combinations of elementary symmetric polynomials defined as
\begin{align}
    e_t(w_1,\dots,w_5) \coloneqq \sum_{\substack{J\subseteq\{1,2,3,4,5\}\\ |J|=t}} \prod_{i\in J}w_i,
\end{align}
with $e_0(w_1,\dots,w_5)=1$. More explicitly we have
\begin{align}
    \sum_{s\in S_{5q}} \prod_{i:s_i\neq I}w_i &= 1+3e_4(w_1,\dots,w_5),\\
    \sum_{h\in N(S_{5q})} \prod_{i:h_i\neq I}w_i &= 1 + 3 e_3(w_1,\dots,w_5) + 3 e_4(w_1,\dots,w_5) + 18 e_5(w_1,\dots,w_5).
\end{align}
Correspondingly, the biCEP figures of merit can be written as
\begin{align}
    p_{\mathrm{succ}}^{5q}(w_1,\dots,w_5) &= \frac{1+3e_4(w_1,\dots,w_5)}{16}, \\
    p_{\mathrm{succ}}^{5q}(w_1,\dots,w_5) F_{\mathrm{succ}}^{5q}(w_1,\dots,w_5) &= \frac{1 + 3e_3(w_1,\dots,w_5) + 3e_4(w_1,\dots,w_5) + 18e_5(w_1,\dots,w_5)}{64}.
\end{align}

\section{Guaranteed improvement of biCEP with Werner input by accumulating and shuffling}\label{sec:biCEP_AS}
In this section we give details of the guaranteed improvement of the success probability and the success-weighted output fidelity (the numerator of the successful output fidelity) for arbitrary $n$-to-1 biCEPs, when we use the accumulating and shuffling (AS) strategy, for arbitrary $n$ Werner entanglement sources. Recall that we have assumed uniform random permutation of the $n$ input states $\rho_1,\rho_2,\dots,\rho_n$ in the case without AS, according to the assumed unavailability of source labels. According to the preceding biCEP section, we know that for $n$ Werner states $\rho_i$ with visibilities $w_i$ in a certain order (permutation) $\pi(\cdot)$ as input $\bigotimes_{i=1}^n\rho_{\pi(i)}$, the analytical expressions for success probability and success-weighted output fidelity are multi-affine (due to the constant terms) polynomials of $w=(w_1,w_2,\dots,w_n)$. We may denote the success probability as $p_{\mathrm{succ}}(w_1,w_2,\dots,w_n)$, and the success-weighted output fidelity as $\tilde{F}_{\mathrm{succ}}(w_1,w_2,\dots,w_n)$. Then under the assumed operational random permutation, we have
\begin{align}
    \bar{p}_{\mathrm{succ}}(w) &= \frac{1}{|S_n|}\sum_{\pi\in S_n}p_{\mathrm{succ}}(w_{\pi(1)},w_{\pi(2)},\dots,w_{\pi(n)}), \\
    \bar{\tilde{F}}_{\mathrm{succ}}(w_1,w_2,\dots,w_n) &= \frac{1}{|S_n|}\sum_{\pi\in S_n}\tilde{F}_{\mathrm{succ}}(w_{\pi(1)},w_{\pi(2)},\dots,w_{\pi(n)}).
\end{align}
As finite linear combinations of multi-affine polynomials, they remain multi-affine polynomials. Moreover, they are obviously permutation-invariant, so they can be uniquely expressed as linear combinations of elementary symmetric polynomials
\begin{align}
    \bar{p}_{\mathrm{succ}}(w_1,w_2,\dots,w_n) &= \sum_{r=0}^n a_r e_r(w_1,w_2,\dots,w_n), \\
    \bar{\tilde{F}}_{\mathrm{succ}}(w_1,w_2,\dots,w_n) &= \sum_{r=0}^n b_r e_r(w_1,w_2,\dots,w_n).
\end{align}
In the following, for simplicity we will omit the overline in the notations of symmetrized figures of merit, as the random permutation has been assumed.

\subsection{Asymptotic limit of infinite rounds of entanglement distribution}
In the asymptotic limit of infinite rounds of entanglement distribution, the effective single-package state derived in the effective-state section is $\rho_{\mathrm{eff}}^{(\infty)} = \bar{\rho}^{\otimes n}$ with $\bar{\rho} = \sum_{i=1}^n\rho_i/n$. In other words, the effective single-package state is a tensor product of the expected state. In the following, we will prove the guaranteed improvement via properties of Schur-concave functions.

\begin{definition}[Vector majorization]
    For vector $x\in \R^n$, $x^\downarrow$ denotes the vector obtained by sorting the coordinates of $x$ in non-increasing order. Given $x,y\in \R^n$, we say that $x$ is majorized by $y$, and write $x\prec y$, if $\sum_{i=1}^m x_i^\downarrow \leq \sum_{i=1}^m y_i^\downarrow$ for $m=1,\dots,n-1$, and $\sum_{i=1}^n x_i = \sum_{i=1}^n y_i$.
\end{definition}
\begin{definition}[Schur-concave function]
    A function $f: D\to \R$, where $D\subseteq \R^n$, is Schur-concave on $D$ if $x\prec y, x,y\in D \Rightarrow f(x)\geq f(y)$. 
\end{definition}
A Schur-concave function is larger on the more evenly distributed vector, subject to a fixed coordinate sum. We use the following standard derivative criterion for Schur concavity

\begin{lemma}[Schur--Ostrowski criterion~\cite{marshall1979inequalities}]
    Let $D\subseteq \R^n$ be convex and invariant under coordinate permutations. Let $f\colon D\to \R$ be symmetric (permutation-invariant) and continuously differentiable on a neighborhood of $D$. If
    \begin{align}\label{eq:schur-ostrowski}
        (x_i-x_j) \left( \frac{\partial f(x)}{\partial x_i} - \frac{\partial f(x)}{\partial x_j} \right) \leq 0,
    \end{align}
    for every $x\in D$ and every pair $i,j$, then $f$ is Schur-concave on $D$.
\end{lemma}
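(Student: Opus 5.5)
The plan is to prove the Schur--Ostrowski criterion by reducing Schur-concavity to a statement about T-transforms (Robin Hood transfers). I will invoke the classical Hardy--Littlewood--P\'olya/Muirhead fact: if $x\prec y$ in $\R^n$, then $x$ is obtained from $y$ by a finite sequence of T-transforms. Each T-transform replaces a pair of coordinates $(y_i,y_j)$ with $y_i\ge y_j$ by $(y_i-\delta, y_j+\delta)$ for some $0\le\delta\le (y_i-y_j)$, possibly followed by a coordinate permutation. Since $D$ is convex and permutation invariant, every intermediate vector stays in $D$. Each T-transform output is a convex combination of $y$ and its transposition $\tau_{ij}y$, and both of these lie in $D$. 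Since $f$ is symmetric, the permutations are harmless. So it suffices to show that a single transfer $\delta\ge 0$ from a larger coordinate toward a smaller one, with $\delta$ not exceeding half the gap, does not decrease $f$.

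For that single step, fix $y\in D$ and indices $i,j$ with $y_i\ge y_j$. Define the path $x(t)=y+t(e_j-e_i)$ for $t\in[0,\delta]$ with $\delta\le (y_i-y_j)/2$. I restrict to half the gap because a transfer of $\delta'\in[(y_i-y_j)/2,\,y_i-y_j]$ equals the transposition $\tau_{ij}$ composed with a transfer of $y_i-y_j-\delta'\le(y_i-y_j)/2$, and symmetry handles the transposition. Along the path, $x(t)\in D$ by convexity, because $x(t)$ lies on the segment between $y$ and $\tau_{ij}y$. Also $x_i(t)-x_j(t)=y_i-y_j-2t\ge 0$. Set $g(t)=f(x(t))$; by continuous differentiability on a neighborhood of $D$,
\begin{align}
    g'(t)=\frac{\partial f}{\partial x_j}(x(t))-\frac{\partial f}{\partial x_i}(x(t)).
\end{align}
Multiplying the hypothesis \eqref{eq:schur-ostrowski} through, we get $(x_i-x_j)g'(t)\ge 0$. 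When $x_i(t)>x_j(t)$ this gives $g'(t)\ge0$.

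The only delicate point is the endpoint $t=(y_i-y_j)/2$, where $x_i=x_j$ and the hypothesis carries no information. I will handle it by applying the fundamental theorem of calculus on $[0,\delta-\epsilon]$, where $g'\ge0$ everywhere, and then letting $\epsilon\to0$ using continuity of $g$. This gives $g(\delta)\ge g(0)$, that is, $f(\text{T-transform of }y)\ge f(y)$.

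Chaining these single-step inequalities along the finite sequence of T-transforms from $y$ to $x$ yields $f(x)\ge f(y)$ whenever $x\prec y$ with $x,y\in D$, which is Schur-concavity. The main obstacle is the decomposition of majorization into finitely many T-transforms that stay inside $D$. I will cite it from Marshall--Olkin rather than reprove it, and only verify that permutation invariance plus convexity of $D$ keep every intermediate vector in $D$.
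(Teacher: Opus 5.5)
Your proof is correct. The paper gives no proof of this lemma and only cites Marshall--Olkin, and your argument is the standard proof found there: decompose $x\prec y$ into finitely many T-transforms (Hardy--Littlewood--P\'olya), keep each intermediate point in $D$ by convexity and permutation invariance, show $f$ is nondecreasing along the segment from $y$ toward $\tau_{ij}y$ up to its midpoint, use symmetry for transfers past the midpoint, and settle the $x_i=x_j$ endpoint by continuity.
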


Then our main objective is to analyze the Schur concavity of biCEP figures of merit expressions which can be written as linear combinations of elementary symmetric polynomials.

\begin{lemma}[Derivative identity]
    For $2\leq k\leq n$ and $i\neq j$ we have
    \begin{align}\label{eq:derivative-identity}
        \frac{\partial e_k(x)}{\partial x_i} - \frac{\partial e_k(x)}{\partial x_j} = -(x_i-x_j)e_{k-2}(x_{\widehat{i,j}}),
    \end{align}
    where $x_{\widehat{i,j}}$ denotes the $(n-2)$-dimensional vector obtained by removing the $i$-th and $j$-th coordinates from the original $n$-dimensional vector $x$. For $k=0,1$, the same difference is identically zero.
\end{lemma}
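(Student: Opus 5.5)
The identity is purely combinatorial, so the plan is to decompose $e_k(x)$ according to how each monomial meets the pair of coordinates $\{i,j\}$. For $k\ge 2$ and $i\neq j$, every $k$-subset $J\subseteq\{1,\dots,n\}$ falls into exactly one of four classes: $J$ contains neither $i$ nor $j$, contains $i$ but not $j$, contains $j$ but not $i$, or contains both. Grouping the monomials of $e_k$ this way gives
\begin{align}
    e_k(x) = e_k(x_{\widehat{i,j}}) + (x_i+x_j)\,e_{k-1}(x_{\widehat{i,j}}) + x_ix_j\,e_{k-2}(x_{\widehat{i,j}}),
\end{align}
with the usual conventions: $e_0=1$, and $e_r$ of an $(n-2)$-dimensional vector vanishes for $r>n-2$ or $r<0$. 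This decomposition is the only substantive step.

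Next, differentiate this expression. The factors $e_r(x_{\widehat{i,j}})$ do not depend on $x_i$ or $x_j$, so
\begin{align}
    \frac{\partial e_k}{\partial x_i} = e_{k-1}(x_{\widehat{i,j}}) + x_j\,e_{k-2}(x_{\widehat{i,j}}), \qquad \frac{\partial e_k}{\partial x_j} = e_{k-1}(x_{\widehat{i,j}}) + x_i\,e_{k-2}(x_{\widehat{i,j}}).
\end{align}
Subtracting, the $e_{k-1}$ terms cancel and the difference is $(x_j-x_i)e_{k-2}(x_{\widehat{i,j}})$. This is exactly the claimed right-hand side, $-(x_i-x_j)e_{k-2}(x_{\widehat{i,j}})$.

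For $k=0$, $e_0=1$ is constant, so both partial derivatives vanish. For $k=1$, $e_1=\sum_l x_l$, so both partial derivatives equal $1$. In either case the difference is identically zero, as stated.

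I do not expect a real obstacle. The one point needing care is the boundary case $k=n$ (and $n=2$), where $e_{k-1}(x_{\widehat{i,j}})$ or $e_k(x_{\widehat{i,j}})$ must be read as zero and $e_0(x_{\widehat{i,j}})=1$ when $x_{\widehat{i,j}}$ is empty. With those conventions the decomposition above still holds, so the argument covers every $2\le k\le n$ without separate cases.
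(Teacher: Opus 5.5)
Your proof is correct and is essentially the paper's argument. The paper first writes $\partial e_k/\partial x_i = e_{k-1}(x_{\widehat{i}})$ and then splits according to whether $x_j$ appears, which is the same case split on $\{i,j\}$ that you perform before differentiating; your explicit boundary conventions ($k=n$, empty $x_{\widehat{i,j}}$) make precise what the paper leaves implicit.
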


\begin{proof}
    The derivative $\partial e_k/\partial x_i$ is the sum of all square-free monomials of degree $k-1$ that do not use the coordinate $i$, i.e.
    \begin{align}
        \frac{\partial e_k(x)}{\partial x_i} = e_{k-1}(x_{\widehat{i}}).
    \end{align}
    Then we separate the terms according to whether they contain $x_j$. For $i\neq j$ we have
    \begin{align}
        e_{k-1}(x_{\widehat{i}}) &= e_{k-1}(x_{\widehat{i,j}}) + x_j e_{k-2}(x_{\widehat{i,j}}),\\
        e_{k-1}(x_{\widehat{j}}) &= e_{k-1}(x_{\widehat{i,j}}) + x_i e_{k-2}(x_{\widehat{i,j}}).
    \end{align}
    Direct subtraction between the above two expressions gives Eq.~\eqref{eq:derivative-identity}. The cases for $k=0,1$ are obvious.
\end{proof}
Then we have the Schur concavity for linear combinations of elementary symmetric polynomials with non-negative coefficients
\begin{theorem}
    Consider polynomial $p(x) = \sum_{k=0}^n a_k e_k(x)$ with $a_k\geq 0$ for all $k$. Then $p$ is Schur-concave on the non-negative orthant $\R_+^n$.
\end{theorem}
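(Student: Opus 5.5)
The proof goes through the Schur--Ostrowski criterion stated above, combined with the derivative identity in Eq.~\eqref{eq:derivative-identity}.

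The domain is the non-negative orthant $D=\R_+^n$. It is convex and invariant under coordinate permutations. The polynomial $p(x)=\sum_{k=0}^n a_k e_k(x)$ is symmetric, since each $e_k$ is symmetric. Being a polynomial, it is smooth on all of $\R^n$, so it is continuously differentiable on a neighborhood of $D$. All hypotheses of the criterion therefore hold, and it remains to verify the sign condition in Eq.~\eqref{eq:schur-ostrowski}.

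Fix $x\in\R_+^n$ and a pair $i\neq j$. By linearity of differentiation and the derivative identity, the $k=0,1$ terms drop out, and
\begin{align}
    (x_i-x_j)\left(\frac{\partial p(x)}{\partial x_i}-\frac{\partial p(x)}{\partial x_j}\right) = -(x_i-x_j)^2\sum_{k=2}^n a_k\, e_{k-2}(x_{\widehat{i,j}}).
\end{align}
We now check the sign of each factor. The factor $(x_i-x_j)^2$ is non-negative. Each coefficient $a_k$ is non-negative by assumption. Each $e_{k-2}(x_{\widehat{i,j}})$ is a sum of products of non-negative coordinates, with $e_0=1$, so it is also non-negative. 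Hence the right-hand side is $\leq 0$, which is exactly condition~\eqref{eq:schur-ostrowski}. The Schur--Ostrowski criterion then gives Schur-concavity of $p$ on $\R_+^n$.

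The only delicate point is the sign of $e_{k-2}(x_{\widehat{i,j}})$. It is non-negative only because we restrict to the non-negative orthant, and this restriction is why the theorem is stated on $\R_+^n$ and why the visibilities are taken as $w_i\ge 0$.

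Two degenerate cases need a remark. For $n=1$ the majorization relation is trivial and there is nothing to prove. For $n=2$ the vector $x_{\widehat{i,j}}$ is empty; using the convention $e_0=1$, only the $k=2$ term survives, and the argument goes through unchanged.
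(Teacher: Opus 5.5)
Your proof is correct and follows the paper's argument essentially step for step. Both apply the Schur--Ostrowski criterion on $\R_+^n$, use the derivative identity to reduce the sign condition to $-(x_i-x_j)^2 e_{k-2}(x_{\widehat{i,j}})\le 0$, and sum termwise with the weights $a_k\ge 0$; your remarks on the degenerate cases $n=1,2$ are a harmless addition that the paper does not spell out.
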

\begin{proof}
    The polynomial $p(x)$ is symmetric (permutation-invariant) and continuously differentiable, and we focus on the convex and coordinate-permutation-invariant domain $x\in \R_+^n$. By the derivative identity above we have
    \begin{align}
        (x_i-x_j) \left( \frac{\partial e_k}{\partial x_i}(x) - \frac{\partial e_k}{\partial x_j}(x) \right) = - (x_i-x_j)^2 e_{k-2}(x_{\widehat{i,j}})
    \end{align}
    for $i\ne j$ and $k\geq 2$, while the same expression is 0 for $k=0,1$. Since $x\in \R_+^n$, every elementary symmetric polynomial in the remaining coordinates is non-negative $e_{k-2}(x_{\widehat{i,j}})\geq 0$. Therefore, we have 
    \begin{align}
        (x_i-x_j) \left( \frac{\partial e_k}{\partial x_i}(x) - \frac{\partial e_k}{\partial x_j}(x) \right) \leq 0,~ \forall k.
    \end{align}
    According to linearity, we can multiply the above expression by $a_k\geq 0$ for each $e_k$ and sum over $k$. This yields
    \begin{align}
        (x_i-x_j) \left( \frac{\partial p}{\partial x_i}(x) - \frac{\partial p}{\partial x_j}(x) \right) \leq 0.
    \end{align}
    The Schur--Ostrowski criterion then gives that $p(x)$ is Schur-concave on $\R_+^n$.
\end{proof}

Using the above results, we have the AS enhancement in the asymptotic limit of $m\to\infty$.

\begin{remark}
    The successful output fidelity (not weighted by the success probability) is actually not guaranteed to be improved by the AS strategy for arbitrary biCEP. To demonstrate this argument we can consider the $[[5,1,3]]$-code biCEP. Then we can consider some numerical example, such as $(w_1,\dots,w_5) = (0.9,0.9,0.9,0.6,0.6)$ with $\bar{w}=0.78$. Direct calculation shows that without AS $F_{\mathrm{succ}}^{5q}(0.9,0.9,0.9,0.6,0.6)\approx 0.9931$, while with AS $F_{\mathrm{succ}}^{5q}(0.78,0.78,0.78,0.78,0.78)\approx 0.9915$. Nevertheless, this does not mean that the AS strategy never has any advantage for successful output fidelity either. More fairly speaking, the AS strategy can enhance successful output fidelity for some but not all combinations of Werner input states, while the advantageous region depends on the biCEP protocol.
\end{remark}

\subsection{Finite rounds of entanglement distribution}

Recall that for arbitrary biCEP, the expressions of both success probability $p_{\mathrm{succ}}(w_1,\dots,w_n)$ and the success-weighted output fidelity $\tilde{F}_{\mathrm{succ}}(w_1,\dots,w_n)$ under the visibility parameterization have the form
\begin{equation}\label{eq:generic-multiaffine}
  f(w_1,\dots,w_n) = \sum_{T\subseteq [n]} c_T\prod_{j\in T}w_j,
\end{equation}
where $[n]=\{1,\dots,n\}$ and $c_T\geq 0$. It is then sufficient to prove the AS comparison for an arbitrary function $f(w_1,\dots,w_n)$ in the form of Eq.~\eqref{eq:generic-multiaffine}.  

For the operationally randomized no-AS baseline, we have
\begin{align}
    f_{\mathrm{noAS}}(w_1,\dots,w_n) = \frac{1}{n!} \sum_{\pi\in S_n} f(w_{\pi(1)},\dots,w_{\pi(n)}).
\end{align}
For any fixed subset $T\subseteq[n]$ with $|T|=r$, the random set $\{\pi(j):j\in T\}$ is uniformly distributed over all $r$-element subsets of $[n]$. Meanwhile, there are $r!(n-r)!$ permutations corresponding to one same $\prod_{j\in T}w_{\pi(j)}$. Therefore,
\begin{align}\label{eq:f-noas}
    f_{\mathrm{noAS}}(w_1,\dots,w_n) &= \sum_{T\subseteq[n]} c_T \left(\frac{1}{n!} \sum_{\pi\in S_n} \prod_{j\in T}w_{\pi(j)} \right) \notag\\
    &= \sum_{T\subseteq[n]} c_T \frac{e_{|T|}(w_1,\dots,w_n)}{\binom n{|T|}}.
\end{align}

With AS, consider the length-$mn$ list of Werner parameters after $m$ distribution rounds
\begin{align}
    w^{(m)} = (\underbrace{w_1,\dots,w_1}_{m\ \mathrm{times}}, \underbrace{w_2,\dots,w_2}_{m\ \mathrm{times}}, \dots, \underbrace{w_n,\dots,w_n}_{m\ \mathrm{times}})
\end{align}
A single AS package is an ordered sample $Y=(Y_1,\dots,Y_n)$ drawn without replacement from this list. As AS uniformly shuffle the accumulated states, in one package there is no special slot, which means that the random variable sequence $Y$ is exchangeable. Then for a fixed $T\subseteq[n]$ with $|T|=r$, exchangeability of the ordered sample gives
\begin{align}
    \E_{\mathrm{AS}} \left[\prod_{j\in T}Y_j\right] = \frac{e_r(w^{(m)})}{\binom{mn}{r}}.
\end{align}
In other words, the $r$ slots indexed by $T$ contain a uniformly random, ordered selection of $r$ distinct (without repetition) elements from the length-$mn$ list. Each $r$-element subset of positions in the list is equally likely, so the expected product is the normalized elementary symmetric mean of degree $r$. Therefore,
\begin{equation}\label{eq:f-as}
    f_{\mathrm{AS}}^{(m)} = \E_{\mathrm{AS}}[f(Y_1,\dots,Y_n)] = \sum_{T\subseteq[n]} c_T \frac{e_{|T|}(w^{(m)})}{\binom{mn}{|T|}}.
\end{equation}

The comparison between Eq.~\eqref{eq:f-noas} and Eq.~\eqref{eq:f-as} reduces to a single inequality for the mean elementary symmetric polynomials (elementary symmetric polynomials divided by the number of its terms), which we may also call elementary symmetric means.

\begin{lemma}\label{lem:esm-ineq}
Let $w_1,\dots,w_n\geq 0$, $m\geq 1$, and let $w^{(m)}$ be the length-$mn$ vector obtained by repeating each $w_i$ exactly $m$ times. Then $\forall r=0,1,\dots,n$,
\begin{equation}\label{eq:esm-ineq}
    \frac{e_r(w^{(m)})}{\binom{mn}{r}} \geq \frac{e_r(w_1,\dots,w_n)}{\binom nr}.
\end{equation}
\end{lemma}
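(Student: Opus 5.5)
Write $E_r(x)=e_r(x)/\binom{N}{r}$ for the elementary symmetric mean of a length-$N$ vector $x$. The plan is to expand $e_r(w^{(m)})$ by grouping the chosen positions according to which original source they come from. Choosing $r$ positions out of the $mn$ entries of $w^{(m)}$ amounts to choosing, for each $i$, $k_i$ of the $m$ copies of $w_i$, with $\sum_i k_i=r$. Hence
\begin{align}
    e_r(w^{(m)})=\sum_{\substack{k_1+\dots+k_n=r\\0\le k_i\le m}}\prod_{i=1}^n\binom{m}{k_i}w_i^{k_i}.
\end{align}
We then split this sum into two parts. The first part contains the square-free occupation vectors ($k_i\in\{0,1\}$). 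These reproduce $m^r e_r(w)$, since each $r$-subset $J\subseteq[n]$ contributes $\binom{m}{1}^r\prod_{j\in J}w_j$. The second part contains the terms with some $k_i\ge2$, and these involve higher powers.

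A direct comparison of the two parts is awkward. I would instead use Maclaurin/Newton-type information in the following form. Consider the multiset of $r$-products $\prod_{j\in J}w_j$. By Vandermonde's identity $\sum_{k}\prod_i\binom{m}{k_i}=\binom{mn}{r}$, so $E_r(w^{(m)})$ is a weighted average of monomials $w^k=\prod_i w_i^{k_i}$ with multivariate hypergeometric weights $\prod_i\binom{m}{k_i}/\binom{mn}{r}$. Likewise, $E_r(w)$ is the uniform average over square-free $k$. The key step is to show that the average of monomials with repeated indices dominates the average of square-free ones in the appropriate sense.

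For this I would use Muirhead's inequality. For any occupation vector $k$ with $\sum k_i=r$, the symmetrized monomial mean $M_k(w)=\frac{1}{n!}\sum_\pi w_{\pi}^{k}$ satisfies $M_k(w)\ge M_{(1^r,0^{n-r})}(w)=E_r(w)$, because $(1,\dots,1,0,\dots,0)\prec k$ and $w\ge0$. This is the key inequality. To use it, I would first observe that the hypergeometric weight depends only on the multiset of values of $k$ and is therefore permutation-invariant. So averaging over $k$ with these weights equals averaging $M_k(w)$ over $k$ with the same weights. Each $M_k(w)\ge E_r(w)$ and the weights sum to one, which gives Eq.~\eqref{eq:esm-ineq}. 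Concretely, this invariance lets us regroup the sum by the sorted partition $\lambda=k^\downarrow$: within a class the weight is constant, and the sum of $w^k$ over the class is proportional to $M_\lambda(w)$.

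The main obstacle I expect is this regrouping and the careful handling of the edge cases. One needs $k_i\le m$, which is automatic from $\binom{m}{k_i}=0$. Also, Muirhead's inequality requires nonnegative variables (zeros are fine by continuity). If Muirhead feels too heavy, a fallback is Maclaurin's inequality combined with Newton's inequalities to compare power-type means. However, the Muirhead route directly yields the inequality and even shows equality when all $w_i$ are equal, so I would commit to it. The case $r=0$ gives $1\ge1$ trivially, and for $r=1$ both sides equal $\bar w$.
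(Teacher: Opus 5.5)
Your proof is correct, and it takes a genuinely different route from the paper's.

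The paper splits $w^{(m)}$ into its $m$ blocks rather than grouping by source. From $\prod_i(1+w_it)^m=\bigl(\sum_k e_k(w)t^k\bigr)^m$ it obtains $e_r(w^{(m)})=\sum_{k_1+\cdots+k_m=r}\prod_{l=1}^m e_{k_l}(w)$. It then lower-bounds each factor by Maclaurin's inequality in the form $A_k\ge A_r^{k/r}$, with $A_k=e_k(w)/\binom nk$. The exponents $k_l/r$ sum to one, so the product is $A_r\prod_l\binom n{k_l}$, and Vandermonde over the blocks finishes the proof.

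You instead group by occupation vectors $k$ and write the left-hand side as a multivariate-hypergeometric average of monomials $w^k$. Permutation invariance of those weights lets you replace each $w^k$ by its Muirhead mean $M_k(w)$. You then conclude $M_k(w)\ge E_r(w)$ term by term from $(1^r,0^{n-r})\prec k$. That majorization holds because every nonzero entry of $k$ is at least $1$, so the partial sums of $k^\downarrow$ are at least $\min(j,r)$.

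What each route buys:
\begin{itemize}
\item The paper's argument is shorter. It stays entirely within inequalities among elementary symmetric means and exploits the exact block structure of $w^{(m)}$.
\item Your argument trades Maclaurin for Muirhead. It uses nothing about the hypergeometric weights beyond nonnegativity, normalization and permutation invariance. So it shows more generally that any permutation-invariant probability distribution on occupation vectors $k$ with $\sum_i k_i=r$ yields an average at least $E_r(w)$. It also lines up directly with the occupation-number form of $\rho_{\mathrm{eff}}^{(m)}$ in the paper's effective-state theorem.
\end{itemize}

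Your opening split into square-free and repeated-index terms is never used and can be dropped.
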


\begin{proof}
The cases $r=0$ and $r=1$ hold with equality. Then we focus on $2\leq r\leq n$. Define the elementary symmetric means
\begin{align}
    A_k=\frac{e_k(w_1,\dots,w_n)}{\binom nk},
\end{align}
for $k=0,1,\dots,n$ with $A_0=1$. By Maclaurin's inequality for non-negative variables we have
\begin{align}
    A_k^{1/k}\geq A_r^{1/r},
\end{align}
for $1\leq k\leq r$. Equivalently,
\begin{align}
    A_k\geq A_r^{k/r},
\end{align}
for $0\leq k\leq r$, where the $k=0$ case is basically $1=1$.

Using the generating function for elementary symmetric polynomials,
\begin{align}
    \sum_{l=0}^{mn} e_l(w^{(m)})t^l &= \prod_{i=1}^{mn}(1+w_i t) \notag \\
    &= \left(\prod_{i=1}^n(1+w_i t)\right)^m = \prod_{i=1}^n(1+w_i t)^m \notag \\
    & = \left(\sum_{k=0}^n e_k(w_1,\dots,w_n)t^k\right)^m.
\end{align}
Taking the coefficient of $t^r$, we get
\begin{align}
    e_r(w^{(m)}) = \sum_{\substack{k_1+\dots+k_m=r\\0\leq k_l\leq n}} \prod_{l=1}^m e_{k_l}(w_1,\dots,w_n).
\end{align}
Since $k_1+\dots+k_m=r$, we implicitly have $k_l\leq r$, and hence
\begin{align}
    \prod_{l=1}^m e_{k_l}(w) = \prod_{l=1}^m \binom n{k_l}A_{k_l} \geq \prod_{l=1}^m \binom n{k_l} A_r^{k_l/r} = A_r \prod_{l=1}^m \binom n{k_l},
\end{align}
where for the inequality we have used the result of Maclaurin's inequality. Summing over all $m$-tuples $(k_1,\dots,k_m)$ s.t. $k_1+\dots+k_m=r$, we have
\begin{align}
    e_r(w^{(m)}) \geq A_r \sum_{k_1+\dots+k_m=r} \prod_{l=1}^m \binom n{k_l}.
\end{align}
By repeated application of Vandermonde's identity, the sum on the right hand side of the above inequality is $\binom{mn}{r}$. Therefore, we conclude with
\begin{align}
    e_r(w^{(m)}) \geq A_r\binom{mn}{r} = \frac{e_r(w_1,\dots,w_n)}{\binom nr}\binom{mn}{r},
\end{align}
which is exactly Eq.~\eqref{eq:esm-ineq}.
\end{proof}

Using the above results, we have the AS enhancement theorem for arbitrary finite number of entanglement distribution rounds.

\subsection{Monotonicity with the number of accumulation rounds}
Lemma~\ref{lem:esm-ineq} compares AS after any finite number of accumulation rounds with the non-AS baseline, which corresponds to $m=1$. Here we strengthen the comparison to that the relevant elementary symmetric means are non-decreasing as $m$ increases.

\begin{lemma}\label{lem:esm-adjacent}
    Let $w_1,\dots,w_n\geq 0$, let $m\geq 2$, and let $w^{(m)}$ be the accumulated visibility vector defined above. Then
    \begin{align}
        \frac{e_r(w^{(m)})}{\binom{mn}{r}} \geq \frac{e_r(w^{(m-1)})}{\binom{(m-1)n}{r}} \label{eq:esm-adjacent}
    \end{align}
    for every $0\leq r\leq(m-1)n$. In particular, Eq.~\eqref{eq:esm-adjacent} holds for $r=0,\dots,n$, which is the range relevant to an $n$-input biCEP.
\end{lemma}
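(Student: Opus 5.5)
The idea is to interpret both sides probabilistically and compare them through a sampling coupling. Write $B_r(v)=e_r(v)/\binom{|v|}{r}$ for the elementary symmetric mean of a vector $v$. This is exactly $\E[\prod_{j=1}^r Y_j]$ when $Y_1,\dots,Y_r$ are drawn uniformly without replacement from the entries of $v$, the same interpretation used to derive Eq.~\eqref{eq:f-as}. The key observation is that sampling $r$ entries without replacement from $w^{(m)}$ is equivalent to a two-stage procedure. First choose uniformly a sub-list of $(m-1)n$ positions of $w^{(m)}$. Then sample $r$ entries without replacement from that sub-list. By the tower property,
\begin{align}
B_r(w^{(m)})=\E_{V}\left[B_r(V)\right],
\end{align}
where $V$ ranges over the uniformly random sub-multisets of size $(m-1)n$ of $w^{(m)}$. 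Each such $V$ contains $m-1-\delta_i$ copies of $w_i$ for some integers $\delta_i$ with $\sum_i\delta_i=0$ and $-1\le\delta_i\le m-1$. So the task reduces to showing that the average of $B_r(V)$ over these perturbed compositions is at least $B_r(w^{(m-1)})$, which corresponds to the balanced composition $\delta=0$.

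I would not fight this directly. Instead I would follow the generating-function route of Lemma~\ref{lem:esm-ineq}. We have $e_r(w^{(m)})=[t^r]\prod_i(1+w_it)^m$. Splitting off one factor per variable gives
\begin{align}
e_r(w^{(m)})=\sum_{k=0}^{r} e_{r-k}(w^{(m-1)})\,e_k(w).
\end{align}
Maclaurin's inequality applies to both vectors. From Lemma~\ref{lem:esm-ineq}, and the fact that the $A_k$ of $w$ satisfy $A_k\ge A_r^{k/r}$, I would aim to bound each term from below by $\binom{(m-1)n}{r-k}\binom{n}{k}B_r(w^{(m-1)})$. This requires a mixed Maclaurin-type inequality,
\begin{align}
B_{r-k}(w^{(m-1)})\,B_k(w)\ \ge\ B_r(w^{(m-1)}).
\end{align}
Given it, Vandermonde's identity $\sum_k\binom{(m-1)n}{r-k}\binom nk=\binom{mn}{r}$ finishes the argument exactly as in Lemma~\ref{lem:esm-ineq}.

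To prove the mixed inequality, I would use the induction hypothesis (monotonicity up to $m-1$, with base case Lemma~\ref{lem:esm-ineq}) to get $B_k(w)=B_k(w^{(1)})\le B_k(w^{(m-1)})$. That inequality points the wrong way, so I would not use it here. Instead I would use Maclaurin on $w^{(m-1)}$ to get $B_{r-k}(w^{(m-1)})\ge B_r(w^{(m-1)})^{(r-k)/r}$. It then remains to show $B_k(w)\ge B_r(w^{(m-1)})^{k/r}$. Maclaurin on $w$ gives $B_k(w)\ge B_r(w)^{k/r}$, but the induction hypothesis gives $B_r(w)\le B_r(w^{(m-1)})$, which again goes the wrong direction.

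This is the main obstacle, and I would handle it by choosing the comparison point differently. Instead of $B_r(w^{(m-1)})$, I would normalize everything by the common quantity $G=B_r(w^{(\infty)})$-type limits. Alternatively, and this is what I would try first, I would use Newton's inequalities, $B_k^2\ge B_{k-1}B_{k+1}$, which hold for the means of any real vector and in particular for the combined vector $w^{(m)}$. I would then look for a log-concavity-in-$m$ argument. Concretely, I would regard $\phi(m)=\log B_r(w^{(m)})$ and show that, under the identity above, the hypergeometric weights $\binom{(m-1)n}{r-k}\binom nk/\binom{mn}r$ put the combination of log-concave sequences in a position where Jensen's inequality yields $\phi(m)\ge\phi(m-1)$.

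If this fails, a fallback is to prove the statement symmetric-function-wise. The difference $B_r(w^{(m)})-B_r(w^{(m-1)})$ is a symmetric polynomial in $w$. I would express it as a non-negative combination of terms of the form $(w_i-w_j)^2\times(\text{monomials})$, using the Schur-type derivative identity Eq.~\eqref{eq:derivative-identity}. This works because the coefficient of each square-free monomial $\prod_{j\in J}w_j$ decreases in $m$, while the coefficients of monomials with repeated variables increase, which gives a majorization-type comparison of the induced distributions on monomial shapes.
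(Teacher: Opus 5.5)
Your proposal has a genuine gap. The one route you commit to needs a termwise inequality that is false, and the alternatives are not worked out. The decomposition $e_r(w^{(m)})=\sum_k e_{r-k}(w^{(m-1)})e_k(w)$ is correct. But the mixed inequality $B_{r-k}(w^{(m-1)})B_k(w)\ge B_r(w^{(m-1)})$ fails already at $m=3$. Take $n=2$, $w=(1,0)$, $r=k=2$:
\begin{itemize}
\item the left side is $B_0(w^{(2)})B_2(w)=0$;
\item the right side is $B_2(w^{(2)})=e_2(1,1,0,0)/\binom42=1/6$.
\end{itemize}
The lemma still holds here, since $B_2(w^{(3)})=3/15=1/5$. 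It holds only because the $k=1$ term has a surplus ($2$ versus the required $8/6$) that pays for the deficit at $k=2$.

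So no scheme that bounds each $k$ separately and then applies Vandermonde can work. The case $m=2$ succeeds only because there $w^{(m-1)}=w$, and Maclaurin is applied to a single vector. The Newton/log-concavity-in-$m$ idea has no mechanism behind it: Newton's inequalities concern the index $k$, not the number of copies. Your fallback rests on a false premise. For $n=4$, $r=4$, the coefficient of $w_1^2w_2w_3$ in $B_4(w^{(m)})$ is $\binom m2 m^2/\binom{4m}4$. This equals $4/70\approx0.057$ at $m=2$ and $27/495\approx0.055$ at $m=3$, so it decreases. A Muirhead-type argument would also need stochastic dominance of the occupation-shape distribution in the majorization order, which is only a partial order for $r\ge 6$. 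You do not establish this.

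The missing idea is to compare the two sampling laws pointwise in threshold space, not through symmetric-function inequalities. The paper writes $\prod_j Y_j=\int_{[0,\infty)^r}\mathbf{1}\{Y_j\ge t_j\ \forall j\}\,dt$. It then suffices to show that $\Pr_q(Y_j\ge t_j\ \forall j)$ is non-decreasing in $q$ for every fixed threshold vector. By exchangeability you may sort $t_1\ge\cdots\ge t_r$, which makes the sets $B_j=\{i:w_i\ge t_j\}$ nested. A greedy count then gives $\Pr_q=\prod_{j=1}^r \max\{qs_j-(j-1),0\}/(qn-(j-1))$, where $s_j=|B_j|$. Each factor increases from $q=m-1$ to $q=m$ by the identity $\frac{ms-c}{mn-c}-\frac{(m-1)s-c}{(m-1)n-c}=\frac{c(n-s)}{(mn-c)((m-1)n-c)}\ge 0$. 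Your opening sampling interpretation was the right starting point. What closes the argument is this upper-orthant comparison, not the Maclaurin/Vandermonde scheme of Lemma~\ref{lem:esm-ineq}.
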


\begin{proof}
    For $q\geq 1$ and $0\leq r\leq qn$, consider an ordered uniform sample without replacement
    \begin{align}
        Y_1^{(q)},\dots,Y_r^{(q)}
    \end{align}
    from the length-$qn$ list $w^{(q)}$. By the same counting argument used in the derivation of Eq.~\eqref{eq:f-as},
    \begin{align}
        \E_q\left[\prod_{j=1}^rY_j^{(q)}\right] = \frac{e_r(w^{(q)})}{\binom{qn}{r}}. \label{eq:esm-adjacent-sampling}
    \end{align}
    It is easy to see that the cases $r=0$ and $r=1$ hold with equality. We may therefore assume $r\geq 2$ in the following. For every nonnegative number $x$, the layer cake representation gives
    \begin{align}
        x=\int_0^\infty \boldsymbol{1}_{\{t\leq x\}}\,dt,
    \end{align}
    where $\boldsymbol{1}_{\{t\leq x\}}$ denotes the indicator of the condition $t\leq x$, as it equals to $1$ when the condition holds and $0$ otherwise. Applying this representation to every $Y_j^{(q)}$, for each fixed realization of the ordered sample we have
    \begin{align}
        \prod_{j=1}^rY_j^{(q)}=\int_{[0,\infty)^r}\boldsymbol{1}_{\{Y_j^{(q)}\geq t_j,\forall j\}}\,dt_1\cdots dt_r.
    \end{align}
    The ordered sample has only $(qn)_r$ possible outcomes, so its expectation is a finite average. Exchanging this finite average with the integral gives
    \begin{align}\label{eq:esm-adjacent-layer cake}
        \frac{e_r(w^{(q)})}{\binom{qn}{r}}=\int_{[0,\infty)^r}\Pr_q\left(Y_j^{(q)}\geq t_j, \forall j\right)\,dt_1\cdots dt_r,
    \end{align}
    where in the above $\Pr_q$ and $\E_q$ denote probability and expectation, respectively, with respect to this $q$-dependent uniform sampling procedure.
    
    Then we analyze the integrand, and we fix the thresholds $t_1,\dots,t_r$ to be constants for now. Also note that permuting the ordered sample does not change the joint probability distribution, so permuting the threshold positions does not change the probability in Eq.~\eqref{eq:esm-adjacent-layer cake} either. Without loss of generality, we may assume
    \begin{align}
        t_1\geq t_2\geq\cdots\geq t_r.
    \end{align}
    For $j=1,\dots,r$, define
    \begin{align}
        B_j:=\{i\in[n]:w_i\geq t_j\}.
    \end{align}
    The ordering of the thresholds implies $B_1\subseteq B_2\subseteq\cdots\subseteq B_r$. Although values may repeat in $w^{(q)}$, regard its $qn$ occurrences as distinct labeled entries $(i,a)$, where $i\in[n]$ is the base index associated with the value $w_i$ and $a\in[q]$ distinguishes its $q$ occurrences. Let $N_j^{(q)}$ denote the number of ordered outcomes obtained by drawing $j$ distinct accumulated entangled pairs without replacement into sample slots $1,\dots,j$, subject to the first $j$ threshold requirements, with $N_0^{(q)}=1$. We derive a recurrence relation for $N_j^{(q)}$ in the following by extending valid selections one slot at a time.

    Fix $j\in\{1,\dots,r\}$ and consider a valid selection for the first $j-1$ slots. By definition of $B_l$, the entry in slot $l\leq j-1$ has a base index $i\in B_l$. Because $B_l\subseteq B_j$, every one of the $j-1$ previously selected entries has its base index in $B_j$. There are $s_j=|B_j|$ base indices in $B_j$, and each occurs $q$ times in $w^{(q)}$. Hence exactly $qs_j$ labeled entries satisfy the $j$-th threshold requirement. The existing valid selection has already used $j-1$ distinct entries among these $qs_j$ entries, leaving exactly $qs_j-(j-1)$ eligible entries for slot $j$. This number is the same for every valid selection of length $j-1$, and adding one of these remaining entries produces a unique valid selection of length $j$.
    
    If $N_{j-1}^{(q)}>0$, the existence of a valid selection implies $qs_j\geq j-1$, and the preceding extension count gives $N_j^{(q)}=N_{j-1}^{(q)}[qs_j-(j-1)]$. If $N_{j-1}^{(q)}=0$, then $N_j^{(q)}=0$, because deleting the last entry of any valid length-$j$ selection would produce a valid length-$(j-1)$ selection. The two cases can therefore be written together as
    \begin{align}
        N_j^{(q)}=N_{j-1}^{(q)}\max\{qs_j-(j-1),0\}. \label{eq:esm-adjacent-prefix-recurrence}
    \end{align}
    Starting from $N_0^{(q)}=1$ and applying Eq.~\eqref{eq:esm-adjacent-prefix-recurrence} successively gives
    \begin{align}\label{eq:esm-adjacent-valid-count}
        N_r^{(q)}=\prod_{j=1}^r\max\{qs_j-(j-1),0\}. 
    \end{align}
    
    Meanwhile, the total number of ordered selections of $r$ distinct entries from the $qn$ labeled entries of $w^{(q)}$ is
    \begin{align}
        (qn)_r=qn(qn-1)\cdots(qn-r+1)=\prod_{j=1}^r[qn-(j-1)]. \label{eq:esm-adjacent-total-count}
    \end{align}
    Since every ordered selection is equally likely, the desired threshold-event probability is the number $N_r^{(q)}$ of valid selections divided by the total number $(qn)_r$ of selections. Substituting Eqs.~\eqref{eq:esm-adjacent-valid-count} and~\eqref{eq:esm-adjacent-total-count} and factoring the quotient gives
    \begin{align}\label{eq:esm-adjacent-threshold-probability}
        \Pr_q\left(Y_j^{(q)}\geq t_j, \forall j\right) =\prod_{j=1}^r\frac{\max\{qs_j-(j-1),0\}}{qn-(j-1)}. 
    \end{align}
    
    We compare each factor in Eq.~\eqref{eq:esm-adjacent-threshold-probability} for $q=m$ and $q=m-1$. Fix $s\in\{0,\dots,n\}$ and $c\in\{0,\dots,r-1\}$. The assumption $r\leq(m-1)n$ guarantees that both denominators below are positive. If $(m-1)s-c\leq 0$, then the factor for $m-1$ vanishes and is no larger than the nonnegative factor for $m$. If $(m-1)s-c>0$, direct subtraction gives
    \begin{align}
        \frac{ms-c}{mn-c}-\frac{(m-1)s-c}{(m-1)n-c}=\frac{c(n-s)}{(mn-c)((m-1)n-c)}\geq 0. \label{eq:esm-adjacent-factor}
    \end{align}
    Thus every factor in the threshold-event probability is non-decreasing from $m-1$ to $m$, for every constant sequence of thresholds $t_1,\dots,t_r$
    \begin{align}
        \Pr_m\left(Y_j^{(m)}\geq t_j, \forall j\right)\geq\Pr_{m-1}\left(Y_j^{(m-1)}\geq t_j, \forall j\right).
    \end{align}
    Integrating the above inequality according to Eq.~\eqref{eq:esm-adjacent-layer cake} proves Eq.~\eqref{eq:esm-adjacent}.
\end{proof}
Then the monotonicity of $p_{\mathrm{succ}}$ or $\tilde{F}_{\mathrm{succ}}$ for biCEPs follows from Eq.~\eqref{eq:f-as}.

\section{Proofs for bilocal CNOT (DEJMPS) EPP results without memory decoherence}\label{sec:bilocal_cnot_epp_results}
In this section we prove results about the bilocal CNOT (DEJMPS) EPP, including an additional property that is not covered in the main text.

\subsection{Proof for Proposition~\ref{thm:succ_improvement}}
The Proposition's proof comes from main text Lemma~\ref{thm:eff_output} and the following Lemma about the analytical properties of the DEJMPS protocol.
\begin{lemma}\label{thm:DEJMPS_succ_properties}
    For the DEJMPS protocol with two Werner states as inputs, we have $F_{\mathrm{succ}}\left(\frac{F_1+F_2}{2},\frac{F_1+F_2}{2}\right) \geq F_{\mathrm{succ}}(F_1,F_2)$ and $p_{\mathrm{succ}}\left(\frac{F_1+F_2}{2},\frac{F_1+F_2}{2}\right) \geq p_{\mathrm{succ}}(F_1,F_2)$ for $\forall F_1,F_2\in[1/2,1]$, where $F_{\mathrm{succ}}(\cdot,\cdot)$ and $p_{\mathrm{succ}}(\cdot,\cdot)$ denote the successful output fidelity and success probability, respectively, as functions of the two input fidelities.
\end{lemma}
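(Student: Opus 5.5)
The plan is to work in the visibility parametrization, where the computations are polynomial, and then treat the two figures of merit separately. Write $w_i=(4F_i-1)/3\in[1/3,1]$, so that $F_1,F_2\in[1/2,1]$, and set $\bar w=(w_1+w_2)/2$, which corresponds to $\bar F=(F_1+F_2)/2$ because the map $F\mapsto w$ is affine. The stabilizer-framework expressions derived above give
\begin{align*}
    p_{\mathrm{succ}}(w_1,w_2)&=\frac{1+w_1w_2}{2},\\
    \tilde F_{\mathrm{succ}}(w_1,w_2)&=\frac{1+w_1+w_2+5w_1w_2}{8}.
\end{align*}

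The success-probability claim is immediate. Since $\bar w^2-w_1w_2=(w_1-w_2)^2/4$, we have
\begin{align*}
    p_{\mathrm{succ}}(\bar w,\bar w)-p_{\mathrm{succ}}(w_1,w_2)=\frac{(w_1-w_2)^2}{8}=\frac{2(F_1-F_2)^2}{9}\geq 0,
\end{align*}
which matches the formula quoted in the main text. The same identity applied to $\tilde F_{\mathrm{succ}}$ gives a gain of $5(w_1-w_2)^2/32\geq 0$, since the linear part $w_1+w_2$ is unchanged.

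The normalized fidelity $F_{\mathrm{succ}}=\tilde F_{\mathrm{succ}}/p_{\mathrm{succ}}$ is the real content. Setting $s=w_1+w_2$ (fixed) and $u=w_1w_2$, we get
\begin{align*}
    F_{\mathrm{succ}}=\frac{1+s+5u}{4(1+u)}.
\end{align*}
Replacing $(w_1,w_2)$ by $(\bar w,\bar w)$ raises $u$ to its maximum $s^2/4$ at fixed $s$. So it suffices to show that $g(u)=(1+s+5u)/(1+u)$ is non-decreasing in $u$. Its derivative has the sign of $5(1+u)-(1+s+5u)=4-s$. Because $s\leq 2<4$, $g$ is strictly increasing, and the inequality follows.

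This monotonicity also gives strictness except when $w_1=w_2$. For the extreme case $F_1=1/2$, $F_2=1$, evaluating at $w=(1/3,1)$ should reproduce the quoted $1/26$ gap as a sanity check.

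The only point needing care is confirming that the visibility formulas agree with the fidelity-form DEJMPS expressions. The paper states this is verifiable. If I want independence from the stabilizer derivation, I will substitute $F=(3w+1)/4$ directly into $p^{\mathrm{CNOT}}$ and $F_{\mathrm{succ}}^{\mathrm{CNOT}}$ and expand. That expansion is the main, though routine, obstacle. Note that the argument never uses the restriction $F_i\geq 1/2$ beyond $s<4$, so it holds on a larger domain.
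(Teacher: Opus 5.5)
Your proof is correct. The success-probability part coincides with the paper's, since both obtain the gap $2(F_1-F_2)^2/9$. For the normalized fidelity, however, you take a more structural route.

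The paper stays in fidelity coordinates. It states the closed form $F_{\mathrm{succ}}(\bar F,\bar F)-F_{\mathrm{succ}}(F_1,F_2)=3(F_1-F_2)^2(7-2F_1-2F_2)/[2D_1D_2]$, where $D_1,D_2$ are proportional to the two success probabilities, and then checks the signs of numerator and denominator on $[1/2,1]^2$. You instead observe that $F_{\mathrm{succ}}=(1+s+5u)/[4(1+u)]$ depends only on $s=w_1+w_2$ and $u=w_1w_2$. Averaging fixes $s$ and maximizes $u$, and the function increases in $u$ because its derivative has the sign of $4-s$.

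The two arguments are the same identity in disguise. Your route gives $F_{\mathrm{succ}}(\bar w,\bar w)-F_{\mathrm{succ}}(w_1,w_2)=(4-s)(\bar w^2-u)/[4(1+\bar w^2)(1+u)]$. Under $F=(3w+1)/4$ one has $4-s=\tfrac{2}{3}(7-2F_1-2F_2)$, $\bar w^2-u=\tfrac{4}{9}(F_1-F_2)^2$ and $1+w_1w_2=\tfrac{2}{9}(5-2F_1-2F_2+8F_1F_2)$, which reproduces the paper's expression exactly.

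Your version buys three things: it explains where the factor $7-2F_1-2F_2$ comes from, it gives strictness for $w_1\neq w_2$ for free, and it holds on the full Werner range $w_i\in[-1/3,1]$. For that last claim you also need $1+u'>0$ along $[u,\bar w^2]$, not only $s<4$. This is automatic, since $w_1w_2\geq -1/3$ there.

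The paper's version avoids relying on the stabilizer-derived visibility formulas, but that dependence is harmless. Substituting $F=(3w+1)/4$ into the fidelity-form DEJMPS expressions does return $(1+w_1w_2)/2$ and $(1+w_1+w_2+5w_1w_2)/8$, so the step you flagged as the main obstacle goes through.
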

\begin{proof}
    Under the assumption of Werner states $\rho_1$ and $\rho_2$, we have successful output fidelity and success probability as function of two input fidelities:
    \begin{align}
        F_{\mathrm{succ}}(F_1,F_2) =& \frac{F_1F_2 + \frac{(1-F_1)(1-F_2)}{9}}{p_{\mathrm{succ}}(F_1,F_2)},\\
        p_{\mathrm{succ}}(F_1,F_2) =& F_1F_2 + \frac{F_1(1-F_2)+(1-F_1)F_2}{3} + \frac{5(1-F_1)(1-F_2)}{9}.
    \end{align}

    We first examine successful output fidelity by taking the difference between $F_{\mathrm{succ}}\left(\frac{F_1+F_2}{2},\frac{F_1+F_2}{2}\right)$ and $F_{\mathrm{succ}}(F_1,F_2)$:
    \begin{align}
        F_{\mathrm{succ}}\left(\frac{F_1+F_2}{2},\frac{F_1+F_2}{2}\right) - F_{\mathrm{succ}}(F_1,F_2) = \frac{3(F_1-F_2)^2(7-2F_1-2F_2)}{2(5-2F_1-2F_2+2F_1^2+2F_2^2+4F_1F_2)(5-2F_1-2F_2+8F_1F_2)},
    \end{align}
    whose numerator and denominator can be straightforwardly verified to be non-negative $\forall F_1,F_2\in[1/2,1]$.

    Then for success probability, we can similarly take the difference between $p_{\mathrm{succ}}\left(\frac{F_1+F_2}{2},\frac{F_1+F_2}{2}\right)$ and $p_{\mathrm{succ}}(F_1,F_2)$:
    \begin{align}
        p_{\mathrm{succ}}\left(\frac{F_1+F_2}{2},\frac{F_1+F_2}{2}\right) - p_{\mathrm{succ}}(F_1,F_2) = \frac{2(F_1-F_2)^2}{9},
    \end{align}
    which is obviously non-negative $\forall F_1,F_2\in[1/2,1]$.
\end{proof}
The proof for monotonicity with increasing $m$ is as follows.
\begin{proof}
    From main text Lemma~\ref{thm:eff_output} the success probability is
    \begin{align}
        p_{\mathrm{succ}}(\rho_{\mathrm{eff},n=2}^{(m)}) = \frac{2m-2}{2m-1}p_{\mathrm{succ}}\left(\frac{F_1+F_2}{2},\frac{F_1+F_2}{2}\right) + \frac{1}{2m-1}p_{\mathrm{succ}}(F_1,F_2),
    \end{align}
    as a convex combination of $p_{\mathrm{succ}}\left(\frac{F_1+F_2}{2},\frac{F_1+F_2}{2}\right)$ and $p_{\mathrm{succ}}(F_1,F_2)$. Meanwhile, $\frac{2m-2}{2m-1}$ monotonically increases with increasing $m$, thus $p_{\mathrm{succ}}(\rho_{\mathrm{eff},n=2}^{(m)})$ also monotonically increases with increasing $m$.

    The successful output fidelity is 
    \begin{align}
        F_{\mathrm{succ}}(\rho_{\mathrm{eff},n=2}^{(m)}) &= \frac{\frac{2m-2}{2m-1}p_{\mathrm{succ}}\left(\frac{F_1+F_2}{2},\frac{F_1+F_2}{2}\right)F_{\mathrm{succ}}\left(\frac{F_1+F_2}{2},\frac{F_1+F_2}{2}\right) + \frac{1}{2m-1}p_{\mathrm{succ}}(F_1,F_2)F_{\mathrm{succ}}(F_1,F_2)} {p_{\mathrm{succ}}(\rho_{\mathrm{eff},n=2}^{(m)})} \notag\\
        &= \beta_m F_{\mathrm{succ}}\left(\frac{F_1+F_2}{2},\frac{F_1+F_2}{2}\right) + (1-\beta_m)F_{\mathrm{succ}}(F_1,F_2),
    \end{align}
    where
    \begin{align}
        \beta_m = \frac{(2m-2)p_{\mathrm{succ}}\left(\frac{F_1+F_2}{2},\frac{F_1+F_2}{2}\right)}{(2m-2)p_{\mathrm{succ}}\left(\frac{F_1+F_2}{2},\frac{F_1+F_2}{2}\right) + p_{\mathrm{succ}}(F_1,F_2)}.
    \end{align}
    As $p_{\mathrm{succ}}\left(\frac{F_1+F_2}{2},\frac{F_1+F_2}{2}\right)$ and $p_{\mathrm{succ}}(F_1,F_2)$ are both non-negative, we have that $\beta_m$ monotonically increases with increasing $m$, thus $F_{\mathrm{succ}}(\rho_{\mathrm{eff},n=2}^{(m)})$ also monotonically increases with increasing $m$.
\end{proof}

\subsection{Proof for Proposition~\ref{thm:ais_bias}}
The Proposition comes directly from the following Lemma using the analytical properties of the DEJMPS protocol, given main text Lemma~\ref{thm:eff_output} and the convex mixture form of $\rho_{\mathrm{eff},n=2}^{(m)}$.
\begin{lemma}
    For Bell diagonal inputs, consider AIS $\bar{\rho}=(\bar{F},a(1-\bar{F}),b(1-\bar{F}),(1-a-b)(1-\bar{F}))$ with $a\in[0,1/3]$. Then for BDS input state $\rho_1=(F_1,a(1-F_1),b_1(1-F_1),(1-a-b_1)(1-F_1))$ together with $\rho_2 = 2\bar{\rho}-\rho_1$, we have $F_{\mathrm{succ}}(\bar{\rho}\otimes\bar{\rho}) \geq F_{\mathrm{succ}}(\rho_1\otimes\rho_2) = F_{\mathrm{succ}}(\rho_2\otimes\rho_1)$ for $\forall\bar{F}\in[1/2,1]$, if and only if $a\in[2-\sqrt{3},1/3]$.
\end{lemma}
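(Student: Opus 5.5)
The plan rests on one structural fact about the DEJMPS transformation rules quoted above. The output fidelity $F_{\mathrm{succ}}=(\nu_1^{(1)}\nu_1^{(2)}+\nu_2^{(1)}\nu_2^{(2)})/p_{\mathrm{succ}}$ depends on each input only through $\nu_1$, $\nu_2$ and $\nu_3+\nu_4$. In the parametrization of the lemma these are $F_i$, $a(1-F_i)$ and $(1-a)(1-F_i)$. So the free parameters $b,b_1$ drop out entirely, and $F_{\mathrm{succ}}(\rho_1\otimes\rho_2)$ is a function of $F_1,F_2$ and the common bias $a$ only. Writing $x_i=1-F_i$ and $c=1-a$, I will use
\begin{align}
  N(F_1,F_2)&=F_1F_2+a^2x_1x_2,\\
  D(F_1,F_2)&=(1-cx_1)(1-cx_2)+c^2x_1x_2,
\end{align}
so that $F_{\mathrm{succ}}(\rho_1\otimes\rho_2)=N/D$. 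The symmetry $F_{\mathrm{succ}}(\rho_1\otimes\rho_2)=F_{\mathrm{succ}}(\rho_2\otimes\rho_1)$ is then immediate from the form of $N$ and $D$.

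Next I parametrize the heterogeneity by $F_1=\bar F+\delta$ and $F_2=\bar F-\delta$, which is forced by $\rho_2=2\bar\rho-\rho_1$. Let $\bar x=1-\bar F$, and let $\bar N=N(\bar F,\bar F)$ and $\bar D=D(\bar F,\bar F)$. Using $F_1F_2=\bar F^2-\delta^2$, $x_1x_2=\bar x^2-\delta^2$ and $(1-cx_1)(1-cx_2)=(1-c\bar x)^2-c^2\delta^2$, I expect the clean identities
\begin{align}
  N&=\bar N-(1+a^2)\delta^2,\\
  D&=\bar D-2c^2\delta^2 .
\end{align}
Both denominators are positive, being success probabilities of entangled inputs. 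Cross-multiplying, the inequality $\bar N/\bar D\ge N/D$ becomes $\delta^2\big[(1+a^2)\bar D-2(1-a)^2\bar N\big]\ge 0$. Whenever $\delta\neq0$, this is equivalent to
\begin{align}
  h_a(\bar F)\coloneqq(1+a^2)\bar D-2(1-a)^2\bar N\ge 0 ,
\end{align}
which does not depend on $\delta$, $b$ or $b_1$.

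Necessity comes from evaluating at $\bar F=1$, where $\bar N=\bar D=1$ and $h_a(1)=-a^2+4a-1$. This is nonnegative exactly for $a\in[2-\sqrt3,2+\sqrt3]$, and intersecting with $a\in[0,1/3]$ gives $a\in[2-\sqrt3,1/3]$. For $a<2-\sqrt3$, continuity then gives $h_a<0$ for $\bar F$ slightly below $1$. To make this a genuine counterexample I must exhibit a nonzero $\delta$ with both $\rho_1,\rho_2$ valid. Taking $b_1$ equal to the corresponding fraction of $\bar\rho$ and $|\delta|$ small enough keeps all Bell weights of both states nonnegative, so this is fine.

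For sufficiency I need $h_a(\bar F)\ge0$ for all $\bar F\in[1/2,1]$ and $a\in[2-\sqrt3,1/3]$. Expanding in $\bar x\in[0,1/2]$, $h_a$ is a quadratic polynomial in $\bar x$ (and in $a$). I would compute its coefficients explicitly and check nonnegativity on the rectangle $\bar x\in[0,1/2]$, $a\in[2-\sqrt3,1/3]$. The intended route is to show that $h_a$ is nondecreasing in $\bar F$ there, i.e.\ $\partial h_a/\partial\bar x\le0$, or alternatively to check the endpoints and the vertex of the parabola, so that the minimum sits at $\bar F=1$ where the bound was already verified. This explicit two-variable polynomial verification is the main obstacle. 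If monotonicity in $\bar x$ fails somewhere, I would fall back to bounding the minimum via the vertex value and checking that it stays nonnegative on the admissible $a$-interval.
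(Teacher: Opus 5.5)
Your reduction is correct and is the paper's argument in cleaner form. Your $\bar N D-N\bar D=\delta^2 h_a(\bar F)$ is exactly the paper's $\Delta_{\mathrm{num}}F_{\mathrm{succ}}|_{a_1=a}$ with $\delta^2=(\bar F-F_1)^2$, and your $h_a$ is its bracket $(1-2a+a^2+2a^3)-2(1-3a+a^2+a^3)\bar F$. Your necessity argument is more careful than the paper's: at $\bar F=1$ one is forced to $\delta=0$, so a counterexample has to be taken at $\bar F$ slightly below $1$ with an admissible $\delta\neq0$, as you do.

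The step that needs fixing is sufficiency, where your stated plan points the wrong way. A function that is nondecreasing in $\bar F$ (i.e.\ $\partial h_a/\partial\bar x\le0$) attains its minimum at $\bar F=1/2$, not at $\bar F=1$. Moreover, the claimed monotonicity is false: $h_a$ is strictly \emph{decreasing} in $\bar F$.

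If you carry out the expansion you postponed, using $\bar D=1-2c\bar x+2c^2\bar x^2$ and $\bar N=1-2\bar x+(1+a^2)\bar x^2$, the $\bar x^2$ terms cancel. This gives
\begin{align}
  h_a=(-1+4a-a^2)+2(1-a)(1-2a-a^2)\,\bar x ,
\end{align}
which is linear in $\bar x$, not quadratic. Its slope is positive for $a\in[0,1/3]$, since there $1-2a-a^2\ge 2/9$. Hence the minimum over $\bar x\in[0,1/2]$ is at $\bar x=0$, i.e.\ $\bar F=1$, where $h_a=-1+4a-a^2\ge0$ exactly on $[2-\sqrt3,1/3]$.

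Equivalently, your endpoint fallback succeeds, because $h_a=a+a^3\ge0$ at $\bar F=1/2$. There is no vertex to check, and no two-variable polynomial analysis is needed.
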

\begin{proof}
    The sign of $F_{\mathrm{succ}}(\bar{\rho}\otimes\bar{\rho})-F_{\mathrm{succ}}(\rho_1\otimes\rho_2)$ is the same as the sign of $p_{\mathrm{succ}}(\bar{\rho}\otimes\bar{\rho})p_{\mathrm{succ}}(\rho_1\otimes\rho_2)[F_{\mathrm{succ}}(\bar{\rho}\otimes\bar{\rho})-F_{\mathrm{succ}}(\rho_1\otimes\rho_2)]$. Denote the latter quantity by $\Delta_{\mathrm{num}}F_{\mathrm{succ}}$; explicitly,
    \begin{align}
        \Delta_{\mathrm{num}}F_{\mathrm{succ}} =& p_{\mathrm{succ}}(\bar{\rho}\otimes\bar{\rho})p_{\mathrm{succ}}(\rho_1\otimes\rho_2) [F_{\mathrm{succ}}(\bar{\rho}\otimes\bar{\rho})-F_{\mathrm{succ}}(\rho_1\otimes\rho_2) ]\notag\\
        =& [a^2(1-\bar{F})^2 + \bar{F}^2 ] [1-2\bar{F}+2a(1-\bar{F})(2F_1-2a_1F_1+2a_1-1) + 2(F_1-a_1F_1 + a_1)(2\bar{F}+a_1F_1-F_1-a_1) ]\notag\\
        &- [(1-a)^2(1-\bar{F})^2+(a+\bar{F}-a\bar{F})^2 ] [a_1(2a-2a\bar{F}-a_1+a_1F_1)(1-F_1) + (2\bar{F}-F_1)F_1 ].
    \end{align}

    When $a_1=a$, we can simplify $\Delta_{\mathrm{num}}F_{\mathrm{succ}}$ as:
    \begin{align}
        \left.\Delta_{\mathrm{num}}F_{\mathrm{succ}}\right\vert_{a_1=a} = [(1-2a+a^2+2a^3) - 2(1-3a+a^2+a^3)\bar{F}](\bar{F}-F_1)^2,
    \end{align}
    from which we see that we can focus on the first bracket to determine the sign of the difference.

    Now we take the partial derivative of $\left.\Delta_{\mathrm{num}}F_{\mathrm{succ}}\right\vert_{a_1=a}$ with respect to $\bar{F}$ as:
    \begin{align}
        \frac{\partial}{\partial\bar{F}}\left.\Delta_{\mathrm{num}}F_{\mathrm{succ}}\right\vert_{a_1=a} = - 2(1-3a+a^2+a^3),
    \end{align}
    which is negative for $a\in[0,1/3]$. Therefore, we only need to consider the case when $\bar{F}=1$:
    \begin{align}
        \left.\Delta_{\mathrm{num}}F_{\mathrm{succ}}\right\vert_{a_1=a,\bar{F}=1} = -1 + 4a - a^2,
    \end{align}
    which is non-negative if and only if $a\in[2-\sqrt{3},1/3]$.
\end{proof}

\subsection{Proof for Proposition~\ref{thm:arb_bds_succ_prob_norm_fid}}
The Proposition comes directly from the following Lemma using the analytical properties of the DEJMPS protocol, given main text Lemma~\ref{thm:eff_output} and the convex mixture form of $\rho_{\mathrm{eff},n=2}^{(m)}$.
\begin{lemma}\label{thm:arb_bds_fid_and_normfid}
    For arbitrary BDS AIS $\bar{\rho}$ of two BDS input states $\rho_1$ and $\rho_2$ to the DEJMPS protocol, we have $p_{\mathrm{succ}}(\bar{\rho}\otimes\bar{\rho}) \geq p_{\mathrm{succ}}(\rho_1\otimes\rho_2) = p_{\mathrm{succ}}(\rho_2\otimes\rho_1)$ and $\tilde{F}_{\mathrm{succ}}(\bar{\rho}\otimes\bar{\rho}) \geq \tilde{F}_{\mathrm{succ}}(\rho_1\otimes\rho_2) = \tilde{F}_{\mathrm{succ}}(\rho_2\otimes\rho_1)$.
\end{lemma}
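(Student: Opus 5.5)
The plan is to compute both differences directly from the DEJMPS transformation rules above. Both figures of merit are symmetric bilinear forms in the Bell-population vectors of the two inputs, so each difference should collapse to a sum of squares.

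Write $\nu^{(1)},\nu^{(2)}$ for the Bell-population vectors of $\rho_1,\rho_2$, and $\bar\nu=(\nu^{(1)}+\nu^{(2)})/2$ for the population vector of $\bar\rho$. The AIS is again Bell-diagonal, so the DEJMPS formulas apply to $\bar\rho\otimes\bar\rho$ with both inputs equal to $\bar\nu$. The equalities $p_{\mathrm{succ}}(\rho_1\otimes\rho_2)=p_{\mathrm{succ}}(\rho_2\otimes\rho_1)$ and $\tilde F_{\mathrm{succ}}(\rho_1\otimes\rho_2)=\tilde F_{\mathrm{succ}}(\rho_2\otimes\rho_1)$ are immediate, because the displayed expressions for $p_{\mathrm{succ}}$ and $\nu_{1,\mathrm{succ}}p_{\mathrm{succ}}$ are invariant under exchanging the superscripts $(1)\leftrightarrow(2)$.

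For the success probability, set $x_i=\nu_1^{(i)}+\nu_2^{(i)}$. Then $\nu_3^{(i)}+\nu_4^{(i)}=1-x_i$, and
\begin{align}
p_{\mathrm{succ}}(\rho_1\otimes\rho_2)=x_1x_2+(1-x_1)(1-x_2).
\end{align}
For $\bar\rho\otimes\bar\rho$ the same expression holds with $x_1=x_2=\bar x=(x_1+x_2)/2$. I will use the elementary identity $\left(\tfrac{a+b}{2}\right)^2-ab=\tfrac{(a-b)^2}{4}$, once for the pair $(x_1,x_2)$ and once for $(1-x_1,1-x_2)$. This gives
\begin{align}
p_{\mathrm{succ}}(\bar\rho\otimes\bar\rho)-p_{\mathrm{succ}}(\rho_1\otimes\rho_2)=\frac{(x_1-x_2)^2}{2}\geq 0 .
\end{align}

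For the success-weighted fidelity, the transformation rules give
\begin{align}
\tilde F_{\mathrm{succ}}(\rho_1\otimes\rho_2)=\nu_1^{(1)}\nu_1^{(2)}+\nu_2^{(1)}\nu_2^{(2)}.
\end{align}
Applying the same identity componentwise, with $\Delta_j=\nu_j^{(1)}-\nu_j^{(2)}$, gives
\begin{align}
\tilde F_{\mathrm{succ}}(\bar\rho\otimes\bar\rho)-\tilde F_{\mathrm{succ}}(\rho_1\otimes\rho_2)=\frac{\Delta_1^2+\Delta_2^2}{4}\geq 0 .
\end{align}
This holds for arbitrary probability vectors. No entanglement condition and no bias structure is needed, in contrast with the normalized fidelity in Proposition~\ref{thm:ais_bias}.

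There is essentially no obstacle here; the only point to check is that $\tilde F_{\mathrm{succ}}$ is exactly the numerator $\nu_1^{(1)}\nu_1^{(2)}+\nu_2^{(1)}\nu_2^{(2)}$. This is what makes it a bilinear quantity, whereas $F_{\mathrm{succ}}$ is a ratio and is not. To pass to Proposition~\ref{thm:arb_bds_succ_prob_norm_fid} for finite $m$, I will use that $\rho_{\mathrm{eff},n=2}^{(m)}$ is the convex mixture $\frac{2m-2}{2m-1}\bar\rho\otimes\bar\rho+\frac{1}{2m-1}\Sym(\rho_1\otimes\rho_2)$. By Lemma~\ref{thm:eff_output}, both $p_{\mathrm{succ}}$ and $\tilde F_{\mathrm{succ}}=p_{\mathrm{succ}}F_{\mathrm{succ}}$ are affine in the input state. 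Therefore each is a convex combination of its values at $\bar\rho\otimes\bar\rho$ and at $\rho_1\otimes\rho_2$, and the two inequalities above carry over.
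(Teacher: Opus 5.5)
Your proof is correct and is essentially the paper's argument. The paper also computes both differences directly from the DEJMPS rules and obtains sums of squares, only in the parameterization $\bar\rho=(\bar F,a(1-\bar F),\dots)$, $\rho_1=(F_1,a_1(1-F_1),\dots)$, $\rho_2=2\bar\rho-\rho_1$: its expressions $2[a_1(1-F_1)-a(1-\bar F)+F_1-\bar F]^2$ and $[a(1-\bar F)-a_1(1-F_1)]^2+(\bar F-F_1)^2$ are exactly your $(x_1-x_2)^2/2$ and $(\Delta_1^2+\Delta_2^2)/4$, and your Bell-population coordinates make the square structure more transparent.
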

\begin{proof}
     Consider $\bar{\rho}=(\bar{F},a(1-\bar{F}),b(1-\bar{F}),(1-a-b)(1-\bar{F}))$ with $a\in[0,1/3]$. We parameterize $\rho_1=(F_1,a_1(1-F_1),b_1(1-F_1),(1-a_1-b_1)(1-F_1))$, then $\rho_2=(\nu_1^{(2)},\nu_2^{(2)},\nu_3^{(2)},\nu_4^{(2)})$ is uniquely determined given a fixed $\bar{\rho}$. Note that positivity of density matrix requires that all diagonal elements be non-negative, which exerts additional constraints on the parameterization, but for now we proceed without worrying about the constraints and derive the following analytical expressions for success probability and output fidelity.
     \begin{align}
         p_{\mathrm{succ}}(\bar{\rho}\otimes\bar{\rho}) =& (1-a)^2(1-\bar{F})^2 + (a+\bar{F}-a\bar{F})^2,\\
         p_{\mathrm{succ}}(\rho_1\otimes\rho_2) =& 2a(1-\bar{F})[2F_1+2a_1(1-F_1)-1] + (1 - 2\bar{F}) \notag\\
         & + 2[a_1(1-F_1)+F_1][2\bar{F}-a_1(1-F_1)-F_1],\\
         F_{\mathrm{succ}}(\bar{\rho}\otimes\bar{\rho}) =& \frac{a^2(1-\bar{F})^2+\bar{F}^2}{p_{\mathrm{succ}}(\bar{\rho}\otimes\bar{\rho})},\\
         F_{\mathrm{succ}}(\rho_2\otimes\rho_1) =& \frac{2a a_1(1-\bar{F})(1-F_1) - a_1^2(1-F_1)^2 + (2\bar{F}-F_1)F_1}{p_{\mathrm{succ}}(\rho_1\otimes\rho_2)}.
     \end{align}
     Then straightforward evaluations result in
     \begin{align}
         p_{\mathrm{succ}}(\bar{\rho}\otimes\bar{\rho}) - p_{\mathrm{succ}}(\rho_1\otimes\rho_2) = 2[a_1(1-F_1)-a(1-\bar{F}) + F_1 - \bar{F}]^2 \geq 0,
     \end{align}
     and
     \begin{align}
         \tilde{F}_{\mathrm{succ}}(\bar{\rho}\otimes\bar{\rho}) - \tilde{F}_{\mathrm{succ}}(\rho_2\otimes\rho_1) = [a(1-\bar{F})-a_1(1-F_1)]^2 + (\bar{F}-F_1)^2 \geq 0.
     \end{align}
     The non-negativity of the differences proves the statement.
\end{proof}

\subsection{Almost concavity of DEJMPS output fidelity with Werner input states}
We have shown that $F_{\mathrm{succ}}(\frac{F_1+F_2}{2},\frac{F_1+F_2}{2}) \geq F_{\mathrm{succ}}(F_1,F_2)$. This seems to imply that $F_{\mathrm{succ}}(F_1,F_2)$ is a concave function for $(F_1,F_2)\in[1/2,1]\times[1/2,1]$. Therefore, here we study the concavity of $F_{\mathrm{succ}}(F_1,F_2)$, and show that it is \textit{almost} concave on $[1/2,1]\times[1/2,1]$, with exception only around $(1/2,1/2)$.
\begin{proposition}
    The successful output fidelity of the DEJMPS protocol given two Werner input states with fidelities $F_1$ and $F_2$, namely $F_{\mathrm{succ}}(F_1,F_2)$, is concave for $F_1,F_2\in[0.54008,1]$.
\end{proposition}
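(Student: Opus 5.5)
The plan is to pass to the visibility parameterization, where the rational function simplifies, and then verify the second-order conditions for concavity. The map $F_i\mapsto w_i=(4F_i-1)/3$ is affine, and concavity is invariant under affine changes of variables. The square $[0.54008,1]^2$ maps to $[w_*,1]^2$ with $w_*=(4\cdot 0.54008-1)/3\approx 0.3868$. From the bilocal CNOT example in Sec.~\ref{sec:bicep} we have $p_{\mathrm{succ}}=(1+w_1w_2)/2$ and $p_{\mathrm{succ}}F_{\mathrm{succ}}=(1+w_1+w_2+5w_1w_2)/8$. Writing $u=w_1$, $v=w_2$, $s=1+uv$ and $L=u+v-4$, the numerator decomposes as $1+u+v+5uv=5s+L$. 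Hence
\begin{align}
F_{\mathrm{succ}}=\frac{5}{4}+\frac{1}{4}\,g(u,v),\qquad g(u,v)=\frac{u+v-4}{1+uv},
\end{align}
so it suffices to prove that $g$ is concave on $[w_*,1]^2$.

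The diagonal second derivatives are straightforward. A direct computation gives $g_u=1/s-Lv/s^2$ and
\begin{align}
g_{uu}=\frac{-(v+v^2)s+2Lv^2}{s^3}.
\end{align}
On the square we have $s>0$, $v\ge 0$ and $L\le -2<0$, so $g_{uu}\le 0$. By the $u\leftrightarrow v$ symmetry the same argument gives $g_{vv}\le 0$. The remaining condition for negative semidefiniteness of the Hessian is
\begin{align}
\det H=g_{uu}g_{vv}-g_{uv}^2\ge 0 .
\end{align}
I will compute $g_{uv}$ explicitly. I will then clear the positive denominator by setting $P(u,v)=s^6\det H$, which is a symmetric polynomial of low degree.

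The main obstacle is proving $P\ge 0$ on $[w_*,1]^2$, and in particular identifying the constant $w_*$. First I will express $P$ in terms of the symmetric variables $t=u+v$ and $q=uv$. The region becomes $q\ge w_*(t-w_*)$, $q\le t-1$ for $t\ge 1$, and $q\le t^2/4$, with $t\in[2w_*,2]$. Next I expect $P$ to be smallest on the diagonal $u=v=w$, where $t=2w$ and $q=w^2$, since $g_{uv}^2$ is largest there relative to $g_{uu}g_{vv}$. The one-variable polynomial $P(w,w)$ should then have a unique root in $[0,1]$ near $w_*$, and this root is the source of the numerical constant $0.54008$ in $F$.

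To justify the reduction to the diagonal, I will show monotonicity of $P$ in $q$ at fixed $t$ on the feasible range. The first thing to try is to show that $\partial P/\partial q$ has a fixed sign there. If that fails, I will check the finitely many boundary pieces and the interior critical points of $P$ directly; each is a one-variable polynomial sign problem that can be settled by Sturm sequences or interval arithmetic. Finally, I will confirm that the root is not crossed on the remaining parts of the square, namely the edges $u=w_*$ and $u=1$. This gives $\det H\ge 0$, and combined with $g_{uu},g_{vv}\le 0$ it establishes concavity of $F_{\mathrm{succ}}$ on $[0.54008,1]^2$.
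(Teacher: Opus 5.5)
Your change to visibilities is a genuinely cleaner route than the paper's $F$-coordinate computation. With $s=1+uv$ one gets $g_u=(1+4v-v^2)/s^2$, hence $g_{uu}=-2v(1+4v-v^2)/s^3\le 0$. (Your numerator $-(v+v^2)s$ should read $-2vs$; the sign conclusion survives.) Also $g_{uv}=2(2-u-v-2uv)/s^3$, and with $t=u+v$, $q=uv$,
\[
s^6\det H=4(1+q)\,Q,\qquad Q=q^2+13q-4-t^2-4tq+4t .
\]
The gap is in the step meant to reduce everything to the diagonal. On the square, $\partial Q/\partial q=2q+13-4t\ge 5>0$, so $Q$ is \emph{increasing} in $q$ at fixed $t$. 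The diagonal $q=t^2/4$ is the \emph{largest} admissible $q$ for that $t$. So the fixed-sign computation you propose shows the diagonal is where $\det H$ is largest at fixed $t$, which is the opposite of what you need. Your heuristic about $g_{uv}^2$ is backwards: at $t=1$, $Q(\tfrac12,\tfrac12)\approx 1.31$ but $Q(w_*,1-w_*)\approx 1.19$. Relatedly, the constraint from $u,v\le 1$ is $(1-u)(1-v)\ge 0$, i.e. $q\ge t-1$, not $q\le t-1$.

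What monotonicity in $q$ actually gives is that, for each $t$, the minimum sits at the smallest admissible $q=\max\{w_*(t-w_*),\,t-1\}$. That is, it lies on the edges $\min(u,v)=w_*$ or $\max(u,v)=1$. So the edge check you list as a final confirmation is the real core of the argument and must be carried out in full:
\begin{itemize}
\item On $u=1$: $Q(1,v)=-4v^2+11v-1>0$ on $[w_*,1]$.
\item On $u=w_*$: $\partial_vQ>0$, so $Q(w_*,v)\ge Q(w_*,w_*)$.
\end{itemize}

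A shorter repair, which is the paper's route in $F$-coordinates, is per-coordinate monotonicity. The derivative $\partial_uQ=(2v^2-8v-2)u+(4+11v-4v^2)$ is linear in $u$ with negative slope and equals $2+3v-2v^2>0$ at $u=1$. Hence $Q$ increases in each variable on $[0,1]^2$, and its minimum on $[w_*,1]^2$ is the corner $(w_*,w_*)$. Only then does the diagonal polynomial $Q(w,w)=w^4-8w^3+9w^2+8w-4$ fix the threshold. Its unique root in $[0,1]$ is $w\approx 0.38677$, i.e. $F\approx 0.540078$. You should therefore also confirm $Q(w_*,w_*)\ge 0$ at the stated $w_*=(4\cdot 0.54008-1)/3$; it holds because the constant is rounded upward.
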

\begin{proof}
    First of all, recall that for any twice-differentiable multivariate function, it is concave on a convex set if and only if its Hessian is negative semidefinite on the same convex set. For the bivariate $F_{\mathrm{succ}}(F_1,F_2)$, we can derive its second-order partial derivatives as follows:
    \begin{align}
        (F_{\mathrm{succ}})_{F_1F_1} =& \frac{\partial^2 F_{\mathrm{succ}}}{\partial F_1^2} = \frac{12(4F_2-1)(1-14F_2+4F_2^2)}{(5-2F_1-2F_2+8F_1F_2)^3},\\
        (F_{\mathrm{succ}})_{F_1F_2} =& \frac{\partial^2 F_{\mathrm{succ}}}{\partial F_1\partial F_2} = \frac{\partial^2 F_{\mathrm{succ}}}{\partial F_2\partial F_1} = (F_{\mathrm{succ}})_{F_2F_1} = \frac{18(11-2F_1-2F_2-16F_1F_2)}{(5-2F_1-2F_2+8F_1F_2)^3},\\
        (F_{\mathrm{succ}})_{F_2F_2} =& \frac{\partial^2 F_{\mathrm{succ}}}{\partial F_2^2} = \frac{12(4F_1-1)(1-14F_1+4F_1^2)}{(5-2F_1-2F_2+8F_1F_2)^3}.
    \end{align}
    Then the Hessian is:
    \begin{equation}
        H_{F_{\mathrm{succ}}} = 
        \begin{pmatrix}
            (F_{\mathrm{succ}})_{F_1F_1} & (F_{\mathrm{succ}})_{F_1F_2} \\
            (F_{\mathrm{succ}})_{F_2F_1} & (F_{\mathrm{succ}})_{F_2F_2}
        \end{pmatrix}.
    \end{equation}
    The Hessian $H_{F_{\mathrm{succ}}}$ is symmetric, and thus it is negative semidefinite if and only if $(F_{\mathrm{succ}})_{F_1F_1}\leq 0$ and $\vert H_{F_{\mathrm{succ}}}\vert >0$.

    We first evaluate the partial derivative with respect to $F_1$ for $(F_{\mathrm{succ}})_{F_1F_1}$ as follows:
    \begin{align}
        (F_{\mathrm{succ}})_{F_1F_1F_1} = \frac{\partial}{\partial F_1}(F_{\mathrm{succ}})_{F_1F_1} = \frac{72(4F_2-1)^2(14F_2-4F_2^2-1)}{(5-2F_1-2F_2+8F_1F_2)^4}.
    \end{align}
    It is obvious that $(F_{\mathrm{succ}})_{F_1F_1F_1}\geq 0$ for $(F_1,F_2)\in[1/2,1]\times[1/2,1]$. Therefore, the maximal value can only be taken on the boundary of $F_1=1$. Then on this boundary we have:
    \begin{align}
        \left.(F_{\mathrm{succ}})_{F_1F_1}\right\vert_{F_1=1} = \frac{4(4F_2-1)(1-14F_2+4F_2^2)}{9(1+2F_2)^3},
    \end{align}
    which is obviously negative for $F_2\in[1/2,1]$. Hence, we have shown that the maximal value of $(F_{\mathrm{succ}})_{F_1F_1}$ on $[1/2,1]\times[1/2,1]$, i.e. $(F_{\mathrm{succ}})_{F_1F_1}<0$.

    Then, the determinant of the Hessian can also be straightforwardly evaluated as:
    \begin{align}
        \vert H_{F_{\mathrm{succ}}}\vert = \frac{36[128F_1^2F_2^2 + 1208F_1F_2 - 217 - 22(F_1+F_2) + 32(F_1^2+F_2^2) - 448(F_1^2F_2+F_1F_2^2)]}{(5-2F_1-2F_2+8F_1F_2)^5}.
    \end{align}
    The denominator is positive on $[1/2,1]\times[1/2,1]$, so we only need to examine the numerator $g_H$ (without the coefficient 36). Again, we take the partial derivative of the numerator with respect to $F_1$:
    \begin{align}
        \frac{\partial}{\partial F_1}g_H =& 256F_1F_2^2 - 448F_2^2 - 896F_1F_2 + 1208F_2 + 64F_1 - 22,
    \end{align}
    whose partial derivative with respect to $F_1$ is:
    \begin{align}
        \frac{\partial^2}{\partial F_1^2}g_H = 64(1 - 14F_2 + 4F_2^2),
    \end{align}
    which is negative for $F_2\in[1/2,1]$. Therefore, $\partial g_H/\partial F_1$ will take minimal value on the boundary of $F_1=1$. Then on this boundary we have:
    \begin{align}
        \left.\frac{\partial}{\partial F_1}g_H\right\vert_{F_1=1} =& 42 + 312F_2 - 192F_2^2,
    \end{align}
    which is positive for $F_2\in[1/2,1]$. Therefore, we have that $\partial g_H/\partial F_1>0$ on $[1/2,1]\times[1/2,1]$. According to the symmetry of $g_H$ as function of $F_1$ and $F_2$, we also have that $\partial g_H/\partial F_2>0$ on $[1/2,1]\times[1/2,1]$. At $(1/2,1/2)$, we can calculate that $g_H(1/2,1/2) = -25<0$, while at $(1,1)$ we have that $g_H(1,1) = 243>0$. Then according to the monotonicity, continuity and symmetry of $g_H$, we are sure that there exist $(F_{\mathrm{th}},F_{\mathrm{th}})$ s.t. $g_H(F_{\mathrm{th}},F_{\mathrm{th}})=0$ and $g_H(F_1,F_2)>0$ for $F_1,F_2\in(F_{\mathrm{th}},1]$. 
    
    Notice that with respect to $F_1$ or $F_2$, $g_H$ is simply a quadratic function. Therefore, we can find $F_{\mathrm{th}}$ by explicitly solving the equation $g_H=0$ for either $F_1$ or $F_2$. The one solution that is relevant in our case is:
    \begin{align}
        F_1 = \frac{11 - 604F_2 + 224F_2^2 + 3\sqrt{5}\sqrt{157 - 2440F_2 + 8592F_2^2 - 5632F_2^3 + 1024F_2^4}}{32(1-14F_2+4F_2^2)} = f(F_2).
    \end{align}        
    Then $F_{\mathrm{th}}$ is just the solution of $F_2 = f(F_2)$ in the interval $[1/2,1]$:
    \begin{align}
        F_{\mathrm{th}} = \frac{7}{4} - \frac{3\times 5^{1/3}}{8}\left[(1-i\sqrt{3})(2-i)^{1/3} + (1+i\sqrt{3})(2+i)^{1/3}\right] \approx 0.54008.
    \end{align}
    
    Combining the negativity of $(F_{\mathrm{succ}})_{F_1F_1}$, and positivity of $\vert H_{F_{\mathrm{succ}}}\vert$, we conclude that $F_{\mathrm{succ}}(F_1,F_2)$ is concave on $[F_{\mathrm{th}},1]\times[F_{\mathrm{th}},1]$. Finally, we comment that it can be shown $f(F_2)$ is also convex on $[1/2,1]$. Therefore, $F_{\mathrm{succ}}(F_1,F_2)$ is concave on the entire convex region within $[1/2,1]\times[1/2,1]$ where $\vert H_{F_{\mathrm{succ}}}\vert\geq 0$, which is larger than $[F_{\mathrm{th}},1]\times[F_{\mathrm{th}},1]$.
\end{proof}

However, we comment that according to the proof the output fidelity is concave for fixed $F_1$ or $F_2$.

\section{Continuous-time memory decoherence}\label{sec:decoherence_model}
In this section, we review continuous-time Pauli error channel model, and examine the impact of buffer memory decoherence on AS performance in more details.

\subsection{Continuous-time Pauli error channel}
Here we review the continuous-time Pauli error channel model~\cite{zang2025entanglement} which is used to prove the monotonicity of DEJMPS output fidelity given two arbitrary BDS initial states undergoing varying duration of memory idling decoherence with any possible pattern of Pauli errors.

We consider continuous-time Pauli channel defined by the channel over a time step $\Delta t$:
\begin{align}\label{eqn:pauli_def_discrete}
    \mathcal{E}_{\Delta t}(\rho) = (1-p)\rho + p(aX\rho X + bY\rho Y + cZ\rho Z)
\end{align}
where $p$ is the probability of having any Pauli error to $\rho$ during time step $\Delta t$, and $a+b+c=1$ correspond to relative strength of three Pauli errors. We can derive the Pauli channel for duration $t$:
\begin{align}
    \tilde{\mathcal{E}}_P(t) =& \frac{1 + e^{-2(\gamma_x+\gamma_y)t} + e^{-2(\gamma_y+\gamma_z)t} + e^{-2(\gamma_x+\gamma_z)t}}{4}\mathrm{Id} + \frac{1 - e^{-2(\gamma_x+\gamma_y)t} + e^{-2(\gamma_y+\gamma_z)t} - e^{-2(\gamma_x+\gamma_z)t}}{4}\mathrm{X}\notag\\
    &+ \frac{1 - e^{-2(\gamma_x+\gamma_y)t} - e^{-2(\gamma_y+\gamma_z)t}+ e^{-2(\gamma_x+\gamma_z)t}}{4}\mathrm{Y} + \frac{1 + e^{-2(\gamma_x+\gamma_y)t} - e^{-2(\gamma_y+\gamma_z)t} - e^{-2(\gamma_x+\gamma_z)t}}{4}\mathrm{Z}\notag\\
    =& p_I(t)\mathrm{Id} + p_X(t)\mathrm{X} + p_Y(t)\mathrm{Y} + p_Z(t)\mathrm{Z},
    \label{eqn:p_i(t)_def}
\end{align}
where capital letters X,Y,Z denotes the quantum channels defined by conjugate of Pauli operators $X,Y,Z$, respectively, and Id denotes the identity channel. In entanglement distribution scenarios, we can assume that both quantum memories on two parties undergo same duration of idling. The above equation is also an implicit definition of functions $p_I(t),p_X(t),p_Y(t),p_Z(t)$, where we have defined error rate $\gamma\Delta t=p$ and $\gamma_x=a\gamma,\gamma_y=b\gamma,\gamma_z=c\gamma$. Then, the combined error channel on an entangled state stored in the two quantum memories is a tensor product of the two local memory idling channels:
\begin{align}
    \tilde{\mathcal{E}}_{P,AB}(t) = \tilde{\mathcal{E}}_{P,A}(t)\otimes\tilde{\mathcal{E}}_{P,B}(t),
\end{align}
where $A,B$ denote the quantum memory on party A, B, respectively. We have the final state of an arbitrary BDS $\rho=\nu_1\Phi^+ + \nu_2\Phi^- + \nu_3\Psi^+ + \nu_4\Psi^-$ undergoing $\tilde{\mathcal{E}}_{P,AB}(t)$ as:
\begin{align}
    \begin{pmatrix}
        \nu_1(t)\\
        \nu_2(t)\\
        \nu_3(t)\\
        \nu_4(t)\\
    \end{pmatrix} =
    \begin{pmatrix}
        C_I(t) & C_Z(t) & C_X(t) & C_Y(t) \\
        C_Z(t) & C_I(t) & C_Y(t) & C_X(t) \\
        C_X(t) & C_Y(t) & C_I(t) & C_Z(t) \\
        C_Y(t) & C_X(t) & C_Z(t) & C_I(t) \\        
    \end{pmatrix}
    \begin{pmatrix}
        \nu_1\\
        \nu_2\\
        \nu_3\\
        \nu_4\\
    \end{pmatrix},
\end{align}
where 
\begin{align}
    C_I(t) =& p_I^{(A)}(t)p_I^{(B)}(t) + p_X^{(A)}(t)p_X^{(B)}(t) + p_Y^{(A)}(t)p_Y^{(B)}(t) + p_Z^{(A)}(t)p_Z^{(B)}(t),\\
    C_X(t) =& p_I^{(A)}(t)p_X^{(B)}(t) + p_X^{(A)}(t)p_I^{(B)}(t) + p_Y^{(A)}(t)p_Z^{(B)}(t) + p_Z^{(A)}(t)p_Y^{(B)}(t),\\
    C_Y(t) =& p_I^{(A)}(t)p_Y^{(B)}(t) + p_Y^{(A)}(t)p_I^{(B)}(t) + p_X^{(A)}(t)p_Z^{(B)}(t) + p_Z^{(A)}(t)p_X^{(B)}(t),\\
    C_Z(t) =& p_I^{(A)}(t)p_Z^{(B)}(t) + p_Z^{(A)}(t)p_I^{(B)}(t) + p_X^{(A)}(t)p_Y^{(B)}(t) + p_Y^{(A)}(t)p_X^{(B)}(t).    
\end{align}

We consider two initial BDS:
\begin{align}
    & \rho_1(0)=
    \begin{pmatrix}
        F_1\\
        a_1(1-F_1)\\
        b_1(1-F_1)\\
        (1-a_1-b_1)(1-F_1)
    \end{pmatrix},\\
    & \rho_2(0)=
    \begin{pmatrix}
        F_2\\
        a_2(1-F_2)\\
        b_2(1-F_2)\\
        (1-a_2-b_2)(1-F_2)
    \end{pmatrix},
\end{align}
where $F_1,F_2\in[1/2,1]$, $a_{1(2)}\in[0,1]$, and $b_{1(2)}\in[0,1-a_{1(2)}]$. We assume that quantum memories on party A all undergo an identical continuous-time Pauli channel characterized by $(\gamma_{x,A},\gamma_{y,A},\gamma_{z,A})$, and quantum memories on party B all undergo an identical continuous-time Pauli channel characterized by $(\gamma_{x,B},\gamma_{y,B},\gamma_{z,B})$. We further consider that $\rho_1(0)$ has undergone $t_1$ memory idling which results in $\rho_1(t_1)$, while $\rho_2(0)$ has undergone $t_2$ memory idling which results in $\rho_2(t_2)$. 

\subsection{Output with Werner input states under depolarizing channel}
We have explicit formulae for the successful output fidelity and success probability of the DEJMPS protocol, given two Werner input states $\rho_1(j)$ and $\rho_2(k)$, where $\rho_i(s), i=1,2$ denotes the final state after the initial entangled state $\rho_i(0)$ undergoes $sT$ duration of memory decoherence, recalling that $T$ is the cycle time of entanglement distribution
\begin{align}
    p_{\mathrm{succ}}\left(\rho_1(j)\otimes\rho_2(k)\right) =& \frac{9+e^{-(j+k)r}(4F_1-1)(4F_2-1)}{18},\\
    F_{\mathrm{succ}}\left(\rho_1(j)\otimes\rho_2(k)\right) =& \frac{9 + 3e^{-jr}(4F_1-1) + 3e^{-kr}(4F_2-1) + 5e^{-(j+k)r}(4F_1-1)(4F_2-1)}{72p_{\mathrm{succ}}\left(\rho_1(j)\otimes\rho_2(k)\right)},
\end{align}
where we have defined $r=2\kappa T$.

Then we can express the success probability and successful output fidelity under the assumed scenario as:
\begin{align}
    p_{\mathrm{succ}}\left(\rho_{\mathrm{eff},n=2}^{(m)}\right) =& \frac{\sum_{i=1,2}\sum_{0\leq j<k<m}p_{\mathrm{succ}}(\rho_i(j)\otimes\rho_i(k)) + \sum_{j,k=0}^{m-1}p_{\mathrm{succ}}(\rho_1(j)\otimes\rho_2(k))}{\binom{2m}{2}},\\
    F_{\mathrm{succ}}\left(\rho_{\mathrm{eff},n=2}^{(m)}\right) =& \frac{\sum_{i=1,2}\sum_{0\leq j<k<m}p_{\mathrm{succ}}(\rho_i(j)\otimes\rho_i(k))F_{\mathrm{succ}}(\rho_i(j)\otimes\rho_i(k)) + \sum_{j,k=0}^{m-1}p_{\mathrm{succ}}(\rho_1(j)\otimes\rho_2(k))F_{\mathrm{succ}}(\rho_1(j)\otimes\rho_2(k))}{\binom{2m}{2}p_{\mathrm{succ}}\left(\rho_{\mathrm{eff},n=2}^{(m)}\right)}.
\end{align}

\subsection{Accumulating and shuffling under memory decoherence}
Then we immediately have the following intuitive statement about success probability of DEJMPS protocol given $\rho_1(t_1)$ and $\rho_2(t_2)$ as two input states:
\begin{proposition}
    $p_{\mathrm{succ}}\left(\rho_1(t_1)\otimes\rho_2(t_2)\right)$ decreases monotonically as $t_1$ or $t_2$ increases, for $F_1,F_2\in[1/2,1]$.
\end{proposition}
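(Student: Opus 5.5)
The plan is to reduce the DEJMPS success probability to a single Pauli correlation per input pair, using the stabilizer form of Eq.~\eqref{eq:bicep_bds_succ_prob}, and then track how that correlation evolves under the memory Pauli channels. For the bilocal CNOT EPP we have $S_{\mathrm{CNOT}}=\{II,ZZ\}$. Hence for any two Bell-diagonal inputs $p_{\mathrm{succ}}=\frac{1}{2}\bigl(1+z_1z_2\bigr)$, where $z_i=\Tr[(Z_A\otimes Z_B)\rho_i]$. Since $Z_A\otimes Z_B$ has eigenvalue $+1$ on $\Phi^\pm$ and $-1$ on $\Psi^\pm$, this gives $z_i=\nu_1^{(i)}+\nu_2^{(i)}-\nu_3^{(i)}-\nu_4^{(i)}$. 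As a check, this reproduces the transformation rule $p_{\mathrm{succ}}=(\nu_1^{(1)}+\nu_2^{(1)})(\nu_1^{(2)}+\nu_2^{(2)})+(\nu_3^{(1)}+\nu_4^{(1)})(\nu_3^{(2)}+\nu_4^{(2)})$, because $x y+(1-x)(1-y)=\frac{1}{2}[1+(2x-1)(2y-1)]$.

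Next I will compute $z_i(t)$ under $\tilde{\mathcal E}_{P,A}(t)\otimes\tilde{\mathcal E}_{P,B}(t)$, working in the Heisenberg picture. A single-qubit Pauli channel maps $Z\mapsto\bigl(1-2p_X(t)-2p_Y(t)\bigr)Z$. From Eq.~\eqref{eqn:p_i(t)_def}, $p_X(t)+p_Y(t)=\frac{1}{2}\bigl(1-e^{-2(\gamma_x+\gamma_y)t}\bigr)$, so the factor is $e^{-2(\gamma_x+\gamma_y)t}$. Because the channels are local, the bilocal correlation factorizes:
\begin{align}
z_i(t_i)=z_i(0)\,e^{-c\,t_i},\qquad c=2(\gamma_{x,A}+\gamma_{y,A}+\gamma_{x,B}+\gamma_{y,B})\geq 0 .
\end{align}
The same conclusion can be read off from the matrix $C$: the combination $\nu_1+\nu_2-\nu_3-\nu_4$ is a left eigenvector with eigenvalue $C_I+C_Z-C_X-C_Y$. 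Substituting gives $p_{\mathrm{succ}}(\rho_1(t_1)\otimes\rho_2(t_2))=\frac{1}{2}\bigl[1+z_1(0)z_2(0)e^{-c(t_1+t_2)}\bigr]$.

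It remains to fix the sign of $z_1(0)z_2(0)$, and this is the only place the hypothesis $F_1,F_2\in[1/2,1]$ enters. Since $\nu_1^{(i)}=F_i\geq 1/2$ and $\nu_2^{(i)}\geq 0$, we get $z_i(0)\geq 2F_i-1\geq 0$. Then $\partial p_{\mathrm{succ}}/\partial t_i=-\frac{c}{2}z_1(0)z_2(0)e^{-c(t_1+t_2)}\leq 0$, so $p_{\mathrm{succ}}$ is monotonically non-increasing in each $t_i$. The decrease is strict whenever $c>0$ and $F_1,F_2>1/2$. The Werner/depolarizing formula displayed above is the special case $z_i(0)=(4F_i-1)/3$ with $c=r/T$.

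The main obstacle is bookkeeping of conventions rather than analysis. I must confirm that $C_I+C_Z-C_X-C_Y$ indeed equals the product of the two local $Z$-damping factors, which requires matching the error labels X, Y, Z to the Bell components $\nu_2,\nu_3,\nu_4$ as done in the matrix $C$. I would verify this directly by expanding $C_I+C_Z-C_X-C_Y=\prod_{P\in\{A,B\}}\bigl(p_I^{(P)}+p_Z^{(P)}-p_X^{(P)}-p_Y^{(P)}\bigr)$.
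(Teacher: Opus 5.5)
Your proposal is correct and follows the same route as the paper. The paper writes down the closed form $p_{\mathrm{succ}}=\frac{1}{2}\bigl[1+e^{-2(\gamma_{x,A}+\gamma_{y,A}+\gamma_{x,B}+\gamma_{y,B})(t_1+t_2)}(2F_1-2a_1F_1+2a_1-1)(2F_2-2a_2F_2+2a_2-1)\bigr]$ without derivation, and its factors $2F_i-2a_iF_i+2a_i-1=\nu_1^{(i)}+\nu_2^{(i)}-\nu_3^{(i)}-\nu_4^{(i)}$ are exactly your $z_i(0)$; it then reads off monotonicity from the sign, as you do. Your Heisenberg-picture derivation supplies the justification the paper omits. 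Your note that the decrease is only non-strict in general (e.g.\ when $c=0$, or when $F_i=1/2$ with $\nu_2^{(i)}=0$) is a fair sharpening of the paper's wording.
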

\begin{proof}
    We can directly write out the analytical expression for $p_{\mathrm{succ}}\left(\rho_1(t_1)\otimes\rho_2(t_2)\right)$:
    \begin{align}
        p_{\mathrm{succ}}\left(\rho_1(t_1)\otimes\rho_2(t_2)\right) = \left[1 + e^{-2(\gamma_{x,A}+\gamma_{y,A}+\gamma_{x,B}+\gamma_{y,B})(t_1+t_2)}(2F_1-2a_1F_1+2a_1-1)(2F_2-2a_2F_2+2a_2-1)\right]/2,
    \end{align}
    which obviously decreases monotonically as $t_1$ or $t_2$ increases, given positive $\gamma_{x,A},\gamma_{y,A},\gamma_{x,B},\gamma_{y,B}$, $F_1,F_2\in[1/2,1]$, and $a_1,a_2\in[0,1]$.
\end{proof}

For successful output fidelity, the analytical expression becomes much more complicated, and symbolic analysis becomes hard for its monotonicity. Therefore, we prove a more restrictive version of its monotonicity with respect to $t_1$ and $t_2$, for which we first prove the following Lemma regarding the monotonicity of DEJMPS successful output fidelity given two arbitrary BDS input states:
\begin{lemma}\label{thm:DEJMPS_BDS_fid_monotonicity}
    For two BDS $\rho_1=(F_1,a_1,b_1,c_1)$ and $\rho_2=(F_2,a_2,b_2,c_2)$ as input states to the DEJMPS protocol, the successful output fidelity $F_{\mathrm{succ}}\left(\rho_1\otimes\rho_2\right)$ monotonically decreases if $F_{1(2)}\in[2-\sqrt{2},1]$ decreases, or $a_{1(2)}, b_{1(2)}, c_{1(2)}$ increase.
\end{lemma}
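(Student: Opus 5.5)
The plan is to work directly with the closed-form DEJMPS transformation rules recalled in the bilocal CNOT example. For two Bell-diagonal inputs $\rho_1=(F_1,a_1,b_1,c_1)$ and $\rho_2=(F_2,a_2,b_2,c_2)$ we have
\begin{align}
    F_{\mathrm{succ}}(\rho_1\otimes\rho_2)=\frac{N}{D},\quad N=F_1F_2+a_1a_2,\quad D=(F_1+a_1)(F_2+a_2)+(b_1+c_1)(b_2+c_2).
\end{align}
The expression is symmetric under $1\leftrightarrow2$, so it suffices to study variations of $\rho_1$. I will compute each partial derivative as $\partial_x F_{\mathrm{succ}}=(D\,\partial_xN-N\,\partial_xD)/D^2$ and determine the sign of the numerator on the physical domain. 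That domain is: all populations non-negative, $F_i+a_i+b_i+c_i=1$, and $F_i$ in the stated range.

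The easy directions come first. For $b_1$ and $c_1$, $\partial N=0$ and $\partial D=b_2+c_2\ge0$, so the derivative has the sign of $-N(b_2+c_2)\le0$, which is immediate. For $a_1$, the numerator simplifies to
\begin{align}
    F_1(a_2^2-F_2^2)+a_2(b_1+c_1)(b_2+c_2).
\end{align}
I will bound it using $b_1+c_1\le 1-F_1\le F_1$, valid since $F_1\ge1/2$, and $b_2+c_2\le 1-F_2-a_2$. This reduces the claim to $(F_2-a_2)(F_2+a_2)\ge a_2(1-F_2-a_2)$. That inequality holds because $a_2\le 1-F_2\le F_2$ and $F_2+a_2\ge 1-F_2-a_2+a_2$. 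For $F_1$ taken as a free population, the same computation gives $a_1(F_2^2-a_2^2)+F_2(b_1+c_1)(b_2+c_2)\ge0$.

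The main obstacle is that "$F_1$ decreases" has to be read with normalization kept. The natural reading is the paper's parametrization $\rho_1=(F_1,\alpha(1-F_1),\beta(1-F_1),\gamma(1-F_1))$ with the error ratios $\alpha,\beta,\gamma$ held fixed, so that lowering $F_1$ simultaneously raises all error populations. Along this direction the derivative numerator is a quadratic polynomial in $F_1$ whose coefficients depend on $(F_2,\alpha,\beta,\gamma,\dots)$. My plan is to do two things. First, show that it is affine (hence extremal at the vertices) in the ratio variables. Second, show that it is monotone in $F_2$, so that the worst case sits at a corner such as $\alpha=1$, $F_2=1$, where the numerator is explicit. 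At that corner the sign condition should collapse to $F_1^2-4F_1+2\le0$, i.e.\ $F_1\ge2-\sqrt2$. That is exactly the threshold in the statement, and it explains why the fidelity claim needs $F_{1(2)}\in[2-\sqrt2,1]$ while the error-population claims do not. If the vertex reduction fails, I will fall back to the same tactic used in Lemma~\ref{thm:DEJMPS_succ_properties} and the concavity proof: take one more partial derivative to locate the extremum on a boundary face, then check the boundary polynomial by hand.
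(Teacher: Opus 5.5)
\paragraph{Where you agree with the paper.} Your first half is the paper's proof. You use the same closed form $N/D$, the same three partial-derivative numerators, and the same sign arguments for $b_1,c_1$ and for $F_1$; the $F_1$ case only needs $F_2\ge a_2$.

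On the $a_1$ numerator you are sharper than the paper. The paper bounds $b_2+c_2\le 1-F_2$, dropping the $-a_2$, and then uses $a_2\le 1-F_2$. This yields $F_1[2(1-F_2)^2-F_2^2]$, and that is the only place $2-\sqrt2$ enters. Your bound $b_2+c_2\le 1-F_2-a_2$ gives $F_1[a_2(1-F_2)-F_2^2]\le F_1(1-2F_2)\le 0$. So $F_1,F_2\ge 1/2$ already suffice, which covers the stated domain.

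Your one-line justification of $(F_2-a_2)(F_2+a_2)\ge a_2(1-F_2-a_2)$ is muddled. The inequality is simply $F_2^2\ge a_2(1-F_2)$, which follows from $a_2\le 1-F_2\le F_2$.

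\paragraph{The second half is wrong.} Its predicted computation is false, and it should be dropped. Along your fixed-ratio direction $\rho_1=(F_1,\alpha(1-F_1),\beta(1-F_1),\gamma(1-F_1))$, both $N$ and $D$ are affine in $F_1$. Hence $D\,\partial_{F_1}N-N\,\partial_{F_1}D$ does not depend on $F_1$ at all. It equals
\[
\alpha\,(F_2^2-a_2^2)+(1-\alpha)\,F_2\,(b_2+c_2)\;\ge\;0 .
\]
It is not a quadratic in $F_1$, and the condition $F_1^2-4F_1+2\le 0$ never appears. At your proposed corner $\alpha=1$, $F_2=1$, one simply gets $F_{\mathrm{succ}}=F_1$. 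So the threshold does not come from the fidelity direction under either reading. It is an artifact of the looser $a_1$ bound in the paper, which you have already removed.

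\paragraph{Normalization is not an obstacle.} The lemma is used in the subsequent decoherence proposition through the chain rule, along trajectories where $F$ falls and $a,b,c$ all rise. Your fixed-ratio path is one such trajectory. Therefore
\[
\frac{d F_{\mathrm{succ}}}{dF_1}=\Bigl(\partial_{F_1}-\alpha\,\partial_{a_1}-\beta\,\partial_{b_1}-\gamma\,\partial_{c_1}\Bigr)F_{\mathrm{succ}}\;\ge\;0
\]
follows directly from the partial-derivative signs you have already established. No vertex reduction or boundary analysis is needed.
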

\begin{proof}
    The analytical expression of $F_{\mathrm{succ}}\left(\rho_1\otimes\rho_2\right)$ is:
    \begin{align}
        F_{\mathrm{succ}}\left(\rho_1\otimes\rho_2\right) = \frac{a_1a_2 + F_1F_2}{(b_1+c_1)(b_2+c_2) + (a_1+F_1)(a_2+F_2)}.
    \end{align}
    By the symmetry between $\rho_1$ and $\rho_2$, we only need to examine the partial derivatives with respect to $F_1,a_1,b_1,c_1$:
    \begin{align}
        \frac{\partial}{\partial F_1}F_{\mathrm{succ}}\left(\rho_1\otimes\rho_2\right) =& \frac{(b_1+c_1)(b_2+c_2)F_2 + a_1(F_2^2-a_2^2)}{[(b_1+c_1)(b_2+c_2) + (a_1+F_1)(a_2+F_2)]^2}, \\
        \frac{\partial}{\partial a_1}F_{\mathrm{succ}}\left(\rho_1\otimes\rho_2\right) =& \frac{a_2(b_1+c_1)(b_2+c_2) - F_1(F_2^2-a_2^2)}{[(b_1+c_1)(b_2+c_2) + (a_1+F_1)(a_2+F_2)]^2}, \\
        \frac{\partial}{\partial b_1}F_{\mathrm{succ}}\left(\rho_1\otimes\rho_2\right) =& \frac{\partial}{\partial c_1}F_{\mathrm{succ}}\left(\rho_1\otimes\rho_2\right) = -\frac{(b_2+c_2)(a_1a_2 + F_1F_2)}{[(b_1+c_1)(b_2+c_2) + (a_1+F_1)(a_2+F_2)]^2},
    \end{align}
    where the first partial derivative is obviously positive and last line is obviously negative. For the partial derivative with respect to $a_1$, we analyze the numerator as:
    \begin{align}
        & a_2(b_1+c_1)(b_2+c_2) - F_1(F_2^2-a_2^2)\notag\\
        \leq& a_2(1-F_1)(1-F_2) - F_1(F_2^2-a_2^2) \notag\\
        \leq& F_1[a_2^2 + a_2(1-F_2) - F_2^2] \notag\\
        \leq& F_1[2(1-F_2)^2 - F_2^2],
    \end{align}
    which is guaranteed to be non-positive if $F_2\in[2-\sqrt{2},1]$.
\end{proof}

It is noteworthy that although the fidelity always decreases as idling time increases, this is not guaranteed for the other three error components, dependent on the form of initial BDS and memory decoherence channel. Therefore, when we add constraints on the decoherence dynamics of BDS we have the following Proposition as a Corollary of Lemma~\ref{thm:DEJMPS_BDS_fid_monotonicity}:
\begin{proposition}
    When the memory decoherence channels monotonically increases all three error components of the BDS, $F_{\mathrm{succ}}\left(\rho_1(t_1)\otimes\rho_2(t_2)\right)$ decreases monotonically as $t_1$ or $t_2$ increases, for $F_1,F_2\in[2-\sqrt{2},1]$.
\end{proposition}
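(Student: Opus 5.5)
The plan is to derive this Proposition directly from Lemma~\ref{thm:DEJMPS_BDS_fid_monotonicity} via the chain rule in the idling times. Write $\rho_i(t_i)=(F_i(t_i),a_i(t_i),b_i(t_i),c_i(t_i))$ in the Bell-diagonal parameterization used in that Lemma. Here $F_i(t)=\nu_1(t)$ and $a_i,b_i,c_i$ are the populations $\nu_2,\nu_3,\nu_4$ produced by the memory channel $\tilde{\mathcal{E}}_{P,AB}(t)$ through the matrix of $C_I,C_X,C_Y,C_Z$.

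\emph{Step 1.} Show that $F_i(t)$ is non-increasing in $t$. Because the populations sum to one, $F_i(t)=1-a_i(t)-b_i(t)-c_i(t)$. The hypothesis is that the channel makes each of $a_i(t),b_i(t),c_i(t)$ non-decreasing, so $F_i(t)$ is non-increasing; this also follows directly from the text's remark that fidelity always decreases with idling.

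\emph{Step 2.} Keep the argument inside the region where the Lemma applies. The Lemma requires $F_1,F_2\in[2-\sqrt{2},1]$, but idling lowers fidelity, so $F_i(t)$ can drop below $2-\sqrt{2}$ as $t$ grows. I will therefore read the hypothesis $F_1,F_2\in[2-\sqrt2,1]$ as a condition on the current (decohered) fidelities at the times considered, and note this explicitly as the statement's scope. On that set, Lemma~\ref{thm:DEJMPS_BDS_fid_monotonicity} gives the following signs: $\partial F_{\mathrm{succ}}/\partial F_1\ge0$, $\partial F_{\mathrm{succ}}/\partial a_1\le0$ (this is where $F_2\ge 2-\sqrt2$ is used), and $\partial F_{\mathrm{succ}}/\partial b_1=\partial F_{\mathrm{succ}}/\partial c_1\le0$. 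The same holds for index $2$ by symmetry.

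\emph{Step 3.} Apply the chain rule:
\begin{align}
\frac{d}{dt_1}F_{\mathrm{succ}}=\frac{\partial F_{\mathrm{succ}}}{\partial F_1}\dot F_1+\frac{\partial F_{\mathrm{succ}}}{\partial a_1}\dot a_1+\frac{\partial F_{\mathrm{succ}}}{\partial b_1}\dot b_1+\frac{\partial F_{\mathrm{succ}}}{\partial c_1}\dot c_1 .
\end{align}
In each term the two factors have opposite signs, so $dF_{\mathrm{succ}}/dt_1\le0$, and likewise for $t_2$. Smoothness in $t$ is immediate because $p_I(t),\dots,p_Z(t)$ are sums of exponentials.

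The main obstacle is a subtlety in using the Lemma's partial derivatives. The Lemma treats $F_1,a_1,b_1,c_1$ as independent variables, while along a trajectory they are tied by normalization. The closed form $F_{\mathrm{succ}}=(a_1a_2+F_1F_2)/[(b_1+c_1)(b_2+c_2)+(a_1+F_1)(a_2+F_2)]$ is a rational function defined on an open neighbourhood, so the chain rule is valid without imposing the constraint, and the signs of the partials hold pointwise there. The remaining care is in the $\partial/\partial a_1$ bound, which used $b_1+c_1\le 1-F_1$ and $a_2\le 1-F_2$. Both remain true along the trajectory because the populations are non-negative and sum to one.
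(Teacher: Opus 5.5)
This is essentially the paper's argument: the paper presents the proposition as a direct corollary of Lemma~\ref{thm:DEJMPS_BDS_fid_monotonicity}, i.e.\ exactly your chain-rule sign argument. The scope caveat you raise in Step 2 is genuine and the paper's one-line corollary shares it without comment, but you can remove it using $\dot F_1=-(\dot a_1+\dot b_1+\dot c_1)$: then $\partial_{a_1}F_{\mathrm{succ}}-\partial_{F_1}F_{\mathrm{succ}}=(a_2-F_2)/M$, with $M$ the denominator, and likewise $\partial_{b_1}F_{\mathrm{succ}}-\partial_{F_1}F_{\mathrm{succ}}\le 0$, so only $F_2(t_2)\ge a_2(t_2)$ is needed, which the continuous-time Pauli dynamics guarantees for all times once $F_2(0)\ge 1/2$.
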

We comment that the monotonic increase of all three error components of BDS under continuous-time Pauli channel is common, such as when two quantum memories both undergo depolarizing channels:
\begin{lemma}\label{thm:prob_fid_mono_dec_depolarizing}
    $p_{\mathrm{succ}}\left(\rho_1(j)\otimes\rho_2(k)\right)$ and $F_{\mathrm{succ}}\left(\rho_1(j)\otimes\rho_2(k)\right)$ decrease monotonically as $j$ or $k$ increases, $\forall F_1,F_2\in[1/2,1]$, where $\rho_1(j),\rho_2(k)$ are the same as used in the main text.
\end{lemma}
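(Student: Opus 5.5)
The plan is to reduce both quantities to explicit closed forms in the single decay factor and then differentiate. For Werner inputs under symmetric depolarizing memories, each stored state stays Werner, with visibility $w_1(j)=e^{-jr}w_1$ and $w_2(k)=e^{-kr}w_2$, where $w_i=(4F_i-1)/3\ge 1/3$ for $F_i\in[1/2,1]$ and $r=2\kappa T\ge 0$. I will therefore work in the visibility parameterization of the bilocal CNOT EPP already derived, namely $p_{\mathrm{succ}}=(1+w_1w_2)/2$ and $p_{\mathrm{succ}}F_{\mathrm{succ}}=(1+w_1+w_2+5w_1w_2)/8$, substituting the decayed visibilities. This is consistent with the explicit formulas stated just before the proposition, since $(4F_1-1)(4F_2-1)=9w_1w_2$.

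For the success probability the claim is immediate. The only time dependence is through the product $e^{-(j+k)r}w_1w_2$, which is non-negative and non-increasing in $j$ and in $k$. Hence $p_{\mathrm{succ}}$ decreases monotonically in either index, and strictly so when $r>0$ and $w_1w_2>0$. Alternatively, this follows directly from the preceding proposition on $p_{\mathrm{succ}}(\rho_1(t_1)\otimes\rho_2(t_2))$ with $a_i=1/3$.

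For the fidelity, by the symmetry $1\leftrightarrow2$ it suffices to treat $j$. I will view $F_{\mathrm{succ}}$ as a function of $x=w_1(j)\in[0,w_1]$ with $y=w_2(k)$ held fixed:
\begin{align}
F_{\mathrm{succ}}(x,y)=\frac{1+x+y+5xy}{4(1+xy)}.
\end{align}
Differentiating in $x$ gives the numerator
\begin{align}
(1+5y)(1+xy)-y(1+x+y+5xy)=1+4y-y^2,
\end{align}
which is strictly positive for $y\in[0,1]$. So $F_{\mathrm{succ}}$ is increasing in $x$. Since $x=e^{-jr}w_1$ decreases in $j$, the fidelity decreases in $j$, and likewise in $k$.

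The only point requiring care is domain bookkeeping. Decay can push the visibilities below $1/3$, which corresponds to fidelity below $1/2$. The argument above only needs $x,y\in[0,1]$, which always holds because $w_i\ge 0$ and $e^{-sr}\in(0,1]$. This is why the Werner and depolarizing structure gives the result for all $F_1,F_2\in[1/2,1]$, instead of the restricted range $[2-\sqrt2,1]$ in Lemma~\ref{thm:DEJMPS_BDS_fid_monotonicity}. That lemma could also be invoked, since depolarizing increases all three error components, but the direct derivative computation avoids its fidelity restriction. I do not expect a real obstacle; the main thing to verify is the sign of $1+4y-y^2$ on the relevant interval.
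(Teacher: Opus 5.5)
Your proof is correct and follows the paper's route: the paper also writes $p_{\mathrm{succ}}$ and $F_{\mathrm{succ}}$ in closed form in terms of the decayed factors $e^{-jr}(4F_1-1)$ and $e^{-kr}(4F_2-1)$, then checks the sign of the derivative in $j$. Your visibility parameterization only makes that sign check cleaner, because your factor $1+4y-y^2$ equals $-\tfrac{e^{-2kr}}{9}\bigl[(4F_2-1)^2-12e^{kr}(4F_2-1)-9e^{2kr}\bigr]$, which is exactly the bracket the paper shows to be negative.
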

\begin{proof}
    According to the symmetry of $j$ and $k$ in these formulae, we can focus on the monotonicity with respect to $j$ only. We take the partial derivative of $p_{\mathrm{succ}}\left(\rho_1(j)\otimes\rho_2(k)\right)$ with respect to $j$:
    \begin{align}
        \frac{\partial}{\partial j}p_{\mathrm{succ}}\left(\rho_1(j)\otimes\rho_2(k)\right) = -e^{-(j+k)r}(4F_1-1)(4F_2-1)r/18 <0,
    \end{align}
    for $F_1,F_2\in[1/2,1]$. Then we take the partial derivative of $F_{\mathrm{succ}}\left(\rho_1(j)\otimes\rho_2(k)\right)$ with respect to $j$:
    \begin{align}
        \frac{\partial}{\partial j}F_{\mathrm{succ}}\left(\rho_1(j)\otimes\rho_2(k)\right) = \frac{3e^{jr}(4F_1-1)\left[(4F_2-1)^2-12e^{kr}(4F_2-1)-9e^{2kr}\right]r}{4\left[9e^{(j+k)r}+(4F_1-1)(4F_2-1)\right]^2}.
    \end{align}
    It is negative for $F_1,F_2\in[1/2,1]$ because the bracket term in the numerator equals $36\left[e^{kr}-(4F_2-1)/6\right]^2-45e^{2kr}$, which is negative for $r>0$ and $F_2\in[1/2,1]$.
\end{proof}

\subsection{Discussion on the number of entanglement distribution rounds under memory decoherence}\label{sec:proof_m=2>m=3}
We have the following result.
\begin{proposition}\label{thm:m=2>m=3_with_decoherence}
    Under the above assumption of initial Werner states with arbitrary fidelities $F_1,F_2\in[1/2,1]$ and identical quantum memory depolarizing channels whose depolarizing rate is characterized by $r$, we have $p_{\mathrm{succ}}\left(\rho_{\mathrm{eff},n=2}^{(2)}\right)\geq p_{\mathrm{succ}}\left(\rho_{\mathrm{eff},n=2}^{(3)}\right)$ for $r\geq 0.036$, and $\tilde{F}_{\mathrm{succ}}\left(\rho_{\mathrm{eff},n=2}^{(2)}\right)\geq \tilde{F}_{\mathrm{succ}}\left(\rho_{\mathrm{eff},n=2}^{(3)}\right)$ for $r\geq 0.027$.
\end{proposition}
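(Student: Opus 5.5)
The plan is to reduce both comparisons to explicit polynomial inequalities in the visibilities and in $x\coloneqq e^{-r}$, and then settle them by a worst-case analysis over the Werner parameters.

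\textbf{Step 1: the stored states.} Under memory depolarization with $r=2\kappa T$, the state $\rho_i(s)$ is again a Werner state. Its visibility is $w_i x^{s}$, with $w_i=(4F_i-1)/3\in[1/3,1]$. This follows from $F(F_0,t)$ and agrees with the closed forms in the previous subsection.

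\textbf{Step 2: the figures of merit as functions of the stored visibilities.} For a pair of Werner inputs with visibilities $v,v'$, the stabilizer formulas give $p_{\mathrm{succ}}=(1+vv')/2$ and $\tilde F_{\mathrm{succ}}=(1+v+v'+5vv')/8$. The $m$-round AS package is a uniform average over the $\binom{2m}{2}$ unordered pairs drawn from the multiset
\[
V_m=\{w_1x^j,\;w_2x^j : 0\le j<m\}.
\]
Each stored state lies in exactly $2m-1$ pairs. Hence
\begin{align*}
p^{(m)}&=\frac12+\frac{e_2(V_m)}{2\binom{2m}{2}},\\
\tilde F^{(m)}&=\frac18\left[1+\frac{S_m}{m}+\frac{5\,e_2(V_m)}{\binom{2m}{2}}\right].
\end{align*}
Here $S_m=(w_1+w_2)\frac{1-x^m}{1-x}$ is the sum of the elements of $V_m$, and $e_2(V_m)=(S_m^2-Q_m)/2$ with $Q_m=(w_1^2+w_2^2)\frac{1-x^{2m}}{1-x^2}$. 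Specializing to $m=2$ and $m=3$ gives explicit expressions.

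\textbf{Step 3: the differences.} Write $u=w_1+w_2$ and $q=w_1^2+w_2^2$. Then
\begin{align*}
\Delta_p&=p^{(2)}-p^{(3)}=A(x)\,u^2-B(x)\,q,\\
\Delta_{\tilde F}&=C(x)\,u+D(x)\,u^2-E(x)\,q.
\end{align*}
The coefficients $A,\dots,E$ are explicit polynomials in $x$ over constant denominators; for example, $A=\frac{(1+x)^2}{12}-\frac{(1+x+x^2)^2}{30}$, up to the global factor $1/4$. For fixed $u$, each difference is affine in $q$. So the minimum over the admissible region $[1/3,1]^2$ is attained either at $w_1=w_2$ or on the boundary $w_i\in\{1/3,1\}$.

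This reduces the claim to finitely many one-variable families:
\begin{itemize}
\item the diagonal $w_1=w_2=w$, where $\Delta_p\propto w^2(2A-B)$;
\item the edges with one coordinate equal to $1/3$ or $1$, which are quadratics in the other coordinate.
\end{itemize}
For $\Delta_{\tilde F}$ the linear term $C(x)u$ is positive: it equals $\frac{1+x}{2}-\frac{1+x+x^2}{3}$ up to a factor, which is nonnegative. So only the quadratic part can create a threshold. On each edge I would locate the minimizing $w$, either at an endpoint or at the vertex of the parabola, and obtain a single condition $G(x)\ge 0$.

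\textbf{Step 4: the thresholds, which is the main obstacle.} I expect the hardest part to be showing that each resulting $G(x)\ge 0$ holds for \emph{all} $x\le e^{-0.036}$ (respectively $x\le e^{-0.027}$), not merely at the threshold. I would first factor out the common factor $(1-x)$, which vanishes at $r=0$ where the differences are negative or zero by the no-decoherence monotonicity. What remains is a low-degree polynomial in $x$ with at most one root in $(0,1)$. I would establish uniqueness of that root by Descartes' rule of signs or a Sturm sequence, and identify the worst case among the finitely many boundary candidates. The stated constants $0.036$ and $0.027$ are then the numerically computed largest roots, converted via $r=-\ln x$ and rounded up.
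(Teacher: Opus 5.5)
Your plan is correct and is essentially the paper's proof. Fixing $u=w_1+w_2$ and using that each difference is affine in $q$ (with negative coefficient, since $3+3x^2-2x^4>0$) is the paper's rotation to $(F_+,F_-)$ together with its concavity in $F_-$, and your edge analysis ends at the same corner $(w_1,w_2)=(1,1/3)$, i.e.\ $(F_1,F_2)=(1,1/2)$, where $540\,\Delta_p=9+48x-23x^2-32x^3-6x^4$ and $432\,\Delta_{\tilde F}=21+60x-47x^2-32x^3-6x^4$.

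Two small corrections. First, these corner polynomials equal $-4$ at $x=1$, so there is no factor $(1-x)$ to pull out; Descartes' rule applies directly to the sign pattern $+,+,-,-,-$ and gives a unique positive root. Second, those roots are at $r\approx 0.0362$ and $r\approx 0.0272$, so $0.036$ and $0.027$ are rounded \emph{down}, not up. At exactly $r=0.036$ (resp.\ $0.027$) the corner still gives a tiny violation, an imprecision already present in the paper's statement.
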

We separate the proof for success probability and success-weighted fidelity, respectively.
\begin{proof}
    \textbf{\textit{Success probability.}} We directly take the difference:
    \begin{align}
        p_{\mathrm{succ}}\left(\rho_{\mathrm{eff},n=2}^{(2)}\right) - p_{\mathrm{succ}}\left(\rho_{\mathrm{eff},n=2}^{(3)}\right) = \frac{e^{-4r}}{540}g_p(F_1,F_2,r),
    \end{align}
    where
    \begin{align}
        g_p(F_1,F_2,r) =& 3e^{4r}(4F_1-1)(4F_2-1) + 12e^{3r}(2F_1+2F_2-1)^2 \notag\\
        &- e^{2r}(5-20F_1-20F_2+32F_1^2+16F_1F_2+32F_2^2)\notag\\
        &- 8e^r(2F_1+2F_2-1)^2 - (2-8F_1-8F_2+32F_1F_2).
    \end{align}
    
    Then we perform a coordinate transformation from horizontal and vertical coordinates to diagonal and off-diagonal coordinates, defined by $F_+=(F_1+F_2)/\sqrt{2}$ and $F_-=(F_2-F_1)/\sqrt{2}$, or equivalently $F_1=(F_+-F_-)/\sqrt{2}$ and $F_2=(F_++F_-)/\sqrt{2}$. We then take the partial derivative with respect to $F_-$:
    \begin{align}
        &\frac{\partial}{\partial F_-}g_p\left(\frac{F_+-F_-}{\sqrt{2}},\frac{F_++F_-}{\sqrt{2}},r\right) = 32(1-3e^{3r}\cosh r)F_-,
    \end{align}
    where the factor multiplying $F_-$ is negative for valid $r\geq 0$. This suggests that the minimum value of $g_p(F_1,F_2,r)$ be on the boundaries of $[1/2,1]\times[1/2,1]$, which means that we can guarantee its positivity on this 2-d region if we can show that its values on the boundaries are all non-negative. Moreover, given the symmetry of $F_1$ and $F_2$ for $g_p(F_1,F_2,r)$, we only need to consider the $F_2=1/2$ and $F_1=1$ boundaries.

    For $F_2=1/2$, we can examine the partial derivative with respect to $F_1$:
    \begin{align}
        \frac{\partial}{\partial F_1}g_p(F_1,1/2,r) = 32e^{2r}F_1(\cosh r + 5\sinh r - 2) + 4e^{2r}(3 + \cosh 2r + 5\sinh 2r),
    \end{align}
    which takes zero value at $F_1 = \frac{-3 - \cosh 2r - 5\sinh 2r}{8(\cosh r + 5\sinh r - 2)}$, whose denominator monotonically increases for $r\geq 0$ and becomes zero at $r=\log[(1+\sqrt{7})/3]\approx 0.195$. Thus the zero point is negative for $r> 0.195$ and positive for $0\leq r< 0.195$. Further partial derivative gives:
    \begin{align}
        \frac{\partial^2}{\partial F_1^2}g_p(F_1,1/2,r) = 32e^{2r}(\cosh r + 5\sinh r - 2),
    \end{align}
    which means that when the zero point of first order derivative is positive, $g_p(F_1,1/2,r)$ is also concave. According to the above properties, $g_p(F_1,1/2,r)$ is either concave or monotonically convex on $F_1\in[1/2,1]$. Therefore, on this boundary the minimum value can only be obtained on either $F_1=1/2$ or $F_1=1$.

    For $F_1=1$, we can examine the partial derivative with respect to $F_2$:
    \begin{align}
        \frac{\partial}{\partial F_2}g_p(1,F_2,r) = 4(9e^{4r}+12e^{3r}+e^{2r}-8e^r-6) + 32(3e^{3r} - 2e^{2r} - 2e^r)F_2.
    \end{align}
    We can then take partial derivative with respect to $r$ of the above result and obtain:
    \begin{align}
        \frac{\partial^2}{\partial r\partial F_2}g_p(1,F_2,r) = 8(18e^{4r}+18e^{3r}+e^{2r}-4e^r) + 32(9e^{3r} - 4e^{2r} - 2e^r)F_2,
    \end{align}
    which is obviously positive for $r\geq 0$ and $F_2\in[1/2,1]$. Therefore, we could first examine $\partial g_p(1,F_2,r)/\partial F_2$ at $r=0$:
    \begin{align}
        \left.\frac{\partial}{\partial F_2}g_p(1,F_2,r)\right\vert_{r=0} = 32 - 32F_2\geq 0,
    \end{align}
    from which we know that $g_p(1,F_2,r)$ monotonically increases as $F_2\in[1/2,1]$ increases for $r\geq 0$.

    According to the above discussions of the two boundaries, we now examine $g_p(F_1,F_2,r)$ at $(1/2,1/2)$ and $(1,1/2)$:
    \begin{align}
        & g_p(1/2,1/2,r) = (e^r-1)(2+10e^r+15e^{2r}+3e^{3r}),\\
        & g_p(1,1/2,r) = 9e^{4r}+48e^{3r}-23e^{2r}-32e^r-6,
    \end{align}
    where it is obvious that $g_p(1/2,1/2,r)\geq 0$. We are thus sure that $p_{\mathrm{succ}}\left(\rho_{\mathrm{eff},n=2}^{(2)}\right)\geq p_{\mathrm{succ}}\left(\rho_{\mathrm{eff},n=2}^{(3)}\right)$ is satisfied for arbitrary $F_1,F_2\in[1/2,1]$ as long as $g_p(1,1/2,r)\geq 0$. We can see that $g_p(1,1/2,r)$ increases monotonically as $r\geq 0$ increases, and we can also obtain the numerical value of $r$ when $g_p(1,1/2,r)=0$ as $r\approx 0.036$.
\end{proof}
\begin{proof}
    \textbf{\textit{Success-weighted fidelity.}} The approach will be the same to the discussion on success probability. We directly take the difference:
    \begin{align}
        \tilde{F}_{\mathrm{succ}}\left(\rho_{\mathrm{eff},n=2}^{(2)}\right) - \tilde{F}_{\mathrm{succ}}\left(\rho_{\mathrm{eff},n=2}^{(3)}\right) = \frac{e^{-4r}}{432}g_{\tilde{F}}(F_1,F_2,r),
    \end{align}
    where 
    \begin{align}
        g_{\tilde{F}}(F_1,F_2,r) =& e^{4r}(48F_1F_2-3) + 6e^{3r}(2F_1+2F_2-1)(4F_1+4F_2-1) \notag\\
        & + e^{2r}(7-4F_1-4F_2-32F_1^2-16F_1F_2-32F_2^2) \notag\\
        & - 8e^r(2F_1+2F_2-1)^2 - (2-8F_1-8F_2+32F_1F_2).
    \end{align}

    Then we perform the same coordinate transform and take the partial derivative with respect to $F_-$:
    \begin{align}
        \frac{\partial}{\partial F_-}g_{\tilde{F}}\left(\frac{F_+-F_-}{\sqrt{2}},\frac{F_++F_-}{\sqrt{2}},r\right) = 32(1-3e^{3r}\cosh r)F_-,
    \end{align}
    which is curiously identical to the one for $g_p$. Therefore, we can again safely examine the $F_2=1/2$ and $F_1=1$ boundaries only.

    For $F_2=1/2$ we have:
    \begin{align}
        \frac{\partial}{\partial F_1}g_{\tilde{F}}(F_1,1/2,r) = 32(3e^{3r} - 2e^{2r} - 2e^r)F_1 + 4(6e^{4r}+3e^{3r}-3e^{2r}-2),
    \end{align}
    which takes zero value at $F_1 = \frac{2-6e^{4r}-3e^{3r}+3e^{2r}}{8(3e^{3r} - 2e^{2r} - 2e^r)}$, whose denominator monotonically increases for $r\geq 0$ and becomes zero again at $r=\log[(1+\sqrt{7})/3]\approx 0.195$. Thus the zero point is negative for $r> 0.195$ and positive for $0\leq r< 0.195$. Further partial derivative gives:
    \begin{align}
        \frac{\partial^2}{\partial F_1^2}g_{\tilde{F}}(F_1,1/2,r) = 32(3e^{3r} - 2e^{2r} - 2e^r),
    \end{align}
    which means that when the zero point of first order derivative is positive, $g_{\tilde{F}}(F_1,1/2,r)$ is also concave. Thus, $g_{\tilde{F}}(F_1,1/2,r)$ is either concave or monotonically convex on $F_1\in[1/2,1]$. Therefore, on this boundary the minimum value can only be obtained on either $F_1=1/2$ or $F_1=1$.

    For $F_1=1$ we have:
    \begin{align}
        \frac{\partial}{\partial F_2}g_{\tilde{F}}(1,F_2,r) = 4(12e^{4r}+15e^{3r}-5e^{2r}-8e^r-6) + 32(3e^{3r} - 2e^{2r} - 2e^r)F_2,
    \end{align}
    We can then take partial derivative with respect to $r$ of the above result and obtain:
    \begin{align}
        \frac{\partial^2}{\partial r\partial F_2}g_{\tilde{F}}(1,F_2,r) = 4(48e^{4r}+45e^{3r}-10e^{2r}-8e^r) + 32(9e^{3r} - 4e^{2r} - 2e^r)F_2,
    \end{align}
    which is obviously positive for $r\geq 0$ and $F_2\in[1/2,1]$. Therefore, we could first examine $\partial g_{\tilde{F}}(1,F_2,r)/\partial F_2$ at $r=0$:
    \begin{align}
        \left.\frac{\partial}{\partial F_2}g_{\tilde{F}}(1,F_2,r)\right\vert_{r=0} = 32 - 32F_2\geq 0,
    \end{align}
    from which we know that $g_{\tilde{F}}(1,F_2,r)$ monotonically increases as $F_2\in[1/2,1]$ increases for $r\geq 0$.

    Then same as in the first part, we only need to examine $g_{\tilde{F}}(F_1,F_2,r)$ at $(1/2,1/2)$ and $(1,1/2)$:
    \begin{align}
        & g_{\tilde{F}}(1/2,1/2,r) = (e^r-1)(2+10e^r+27e^{2r}+9e^{3r}),\\
        & g_{\tilde{F}}(1,1/2,r) = 21e^{4r}+60e^{3r}-47e^{2r}-32e^r-6,
    \end{align}
    where $g_{\tilde{F}}(1/2,1/2,r)\geq 0$. We are thus sure that $\tilde{F}_{\mathrm{succ}}\left(\rho_{\mathrm{eff},n=2}^{(2)}\right)\geq \tilde{F}_{\mathrm{succ}}\left(\rho_{\mathrm{eff},n=2}^{(3)}\right)$ is satisfied for arbitrary $F_1,F_2\in[1/2,1]$ as long as $g_{\tilde{F}}(1,1/2,r)\geq 0$. We can see that $g_{\tilde{F}}(1,1/2,r)$ increases monotonically as $r\geq 0$ increases, and we can also obtain the numerical value of $r$ when $g_{\tilde{F}}(1,1/2,r)=0$ as $r\approx 0.027$.
\end{proof}

\begin{figure}[t]
    \centering
    \includegraphics[width=0.95\linewidth]{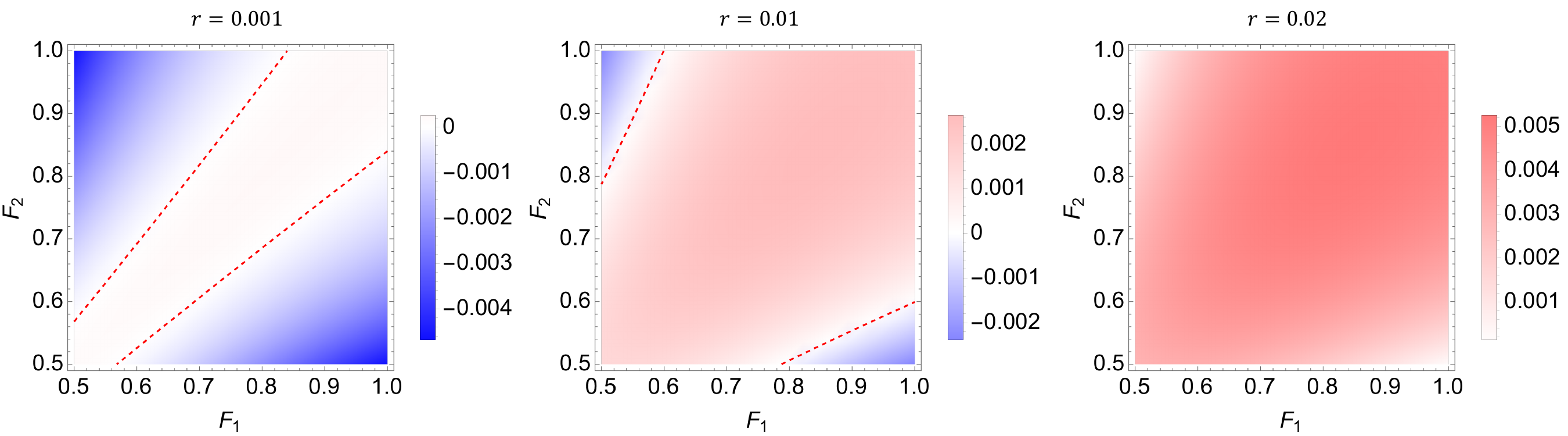}
    \caption{Difference between $F_{\mathrm{succ}}\left(\rho_{\mathrm{eff},n=2}^{(2)}\right)$ and $F_{\mathrm{succ}}\left(\rho_{\mathrm{eff},n=2}^{(3)}\right)$ for $r=0.001,0.01,0.02$. White color is fixed to represent zero value, while blue color denotes negative value and red color corresponds to positive value. The red dashed lines illustrate the boundaries where the difference is exactly zero.}
    \label{fig:m=2vsm=3_fid}
\end{figure}

Other figures of merit such as successful output fidelity also demonstrate similar behaviors under varying $r$, while having different threshold values of $r$. Explicitly, we visualize the difference $F_{\mathrm{succ}}\left(\rho_{\mathrm{eff},n=2}^{(2)}\right)-F_{\mathrm{succ}}\left(\rho_{\mathrm{eff},n=2}^{(3)}\right)$ under different values of $r$ in Fig.~\ref{fig:m=2vsm=3_fid}. We can indeed observe that when input states are similar to each other, i.e. $F_1\approx F_2$, the performance of $m=2$ AS is always better than that of $m=3$ AS even at small $r$, while when input states differ significantly, $m=3$ AS outperforms its $m=2$ counterpart when decoherence rate is small. Additionally, the advantage of $m=2$ AS is demonstrated in larger fidelity combination region while the decoherence rate increases, and finally when the decoherence rate surpasses certain threshold that depends on both figure of merit and error model (given depolarizing channel and initial Werner state in this section, the threshold for guaranteed fidelity advantage is $r\approx 0.0194$), $m=2$ AS is guaranteed to be better than $m=3$ AS for arbitrary combination of $\rho_1$ and $\rho_2$.

\subsection{Proof of main text Proposition~\ref{thm:m=2<NoAS_with_decoherence}}\label{sec:proof_m=2<NoAS}
\begin{proposition}[Restatement of main text Proposition~\ref{thm:m=2<NoAS_with_decoherence}]
    For raw Werner states with arbitrary fidelities $F_1,F_2\in[1/2,1]$ and memory depolarizing channels whose depolarizing rate is characterized by $r=2\kappa T$, we have $p_{\mathrm{succ}}\left(\rho_{\mathrm{eff},n=2}^{(2)}\right)\leq p_{\mathrm{succ}}\left(F_1,F_2\right)$ for $r\geq 0.206$, $F_{\mathrm{succ}}\left(\rho_{\mathrm{eff},n=2}^{(2)}\right)\leq F_{\mathrm{succ}}\left(F_1,F_2\right)$ for $r\geq 0.106$, and $\tilde{F}_{\mathrm{succ}}\left(\rho_{\mathrm{eff},n=2}^{(2)}\right)\leq \tilde{F}_{\mathrm{succ}}\left(F_1,F_2\right)$ for $r\geq 0.150$.
\end{proposition}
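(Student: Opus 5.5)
The plan is a direct computation using the closed forms for Werner inputs under memory depolarization from Sec.~\ref{sec:decoherence_model}. Set $u_i=4F_i-1\in[1,3]$ and $x=e^{-r}\in(0,1]$. For $m=2$ the $\binom{4}{2}=6$ equally weighted pairs are $\rho_1(0)\otimes\rho_1(1)$, $\rho_2(0)\otimes\rho_2(1)$, and $\rho_1(j)\otimes\rho_2(k)$ for $j,k\in\{0,1\}$. The baseline is the undelayed pair $\rho_1(0)\otimes\rho_2(0)$, i.e.\ $p_{\mathrm{succ}}(F_1,F_2)$. By Lemma~\ref{thm:eff_output}, $p_{\mathrm{succ}}(\rho^{(2)}_{\mathrm{eff},n=2})$ and $\tilde F_{\mathrm{succ}}(\rho^{(2)}_{\mathrm{eff},n=2})$ are the arithmetic means of the six pairwise values. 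The normalized $F_{\mathrm{succ}}$ is the ratio of these two means. Each claim therefore becomes a polynomial inequality in $(u_1,u_2,x)$ on $[1,3]^2\times(0,1]$.

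\emph{Success probability.} Using $p_{\mathrm{succ}}(\rho_1(j)\otimes\rho_2(k))=(9+x^{j+k}u_1u_2)/18$, multiplying by $18$ and cancelling the constants, the inequality $6p_{\mathrm{AS}}\le 6p_{\mathrm{noAS}}$ becomes
\begin{align}
x(u_1^2+u_2^2)+(1+x)^2u_1u_2\le 6u_1u_2 .
\end{align}
Dividing by $u_1u_2>0$ gives $x\,(u_1/u_2+u_2/u_1)\le 6-(1+x)^2$. The left side is maximized on $[1,3]^2$ at the corner $u_1/u_2=3$, where $u_1/u_2+u_2/u_1=10/3$. The condition for all fidelities is then $x^2+\tfrac{16}{3}x-5\le 0$. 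This holds iff $x\le x^*\approx0.813$, i.e.\ $r\ge -\ln x^*\approx 0.206$. Since the left side increases and the right side decreases in $x$, the inequality is monotone in $r$, which gives the ``for $r\ge$'' form.

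\emph{Success-weighted fidelity.} I use $72\,\tilde F=9+3x^ju_1+3x^ku_2+5x^{j+k}u_1u_2$ for the mixed pairs, and the analogous form with both $u$'s equal for the same-source pairs. Summing the six terms and subtracting $6\times$ the baseline gives a polynomial $G(u_1,u_2,x)$. It is quadratic in $(u_1,u_2)$, and its purely quadratic part is $5[x(u_1^2+u_2^2)+((1+x)^2-6)u_1u_2]$.

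I would mimic the supplementary proof of Proposition~\ref{thm:m=2>m=3_with_decoherence}. Rotating to $u_\pm$, the quadratic part in $u_-$ has a sign fixed by $x+3-(1+x)^2/2>0$, so $G$ is convex along anti-diagonals. Hence the maximum of $G$ over the square lies on its boundary. On each edge $G$ is a one-variable quadratic. Using the $u_1\leftrightarrow u_2$ symmetry and a second-derivative/sign-of-root argument as before, I would reduce to the corners $(1,1),(1,3),(3,3)$. Each corner gives an explicit polynomial in $x$. The binding one should be $(1,3)$, and its root should give $r\approx0.150$. I would finish by checking that each corner value is monotone in $x$.

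\emph{Normalized fidelity (main obstacle).} The inequality $F_{\mathrm{AS}}\le F_{\mathrm{noAS}}$ is equivalent, after clearing the positive denominators, to
\begin{align}
H:=\Big(\sum \tilde F\Big)p_{\mathrm{noAS}}-\tilde F_{\mathrm{noAS}}\Big(\sum p\Big)\le 0 .
\end{align}
Here $H$ is a polynomial of degree up to $3$ in $(u_1,u_2)$, so the clean convexity reduction above may fail. I would first try the same route: the $u_\pm$ change of variables and a sign analysis of $\partial_{u_-}H$, to push the extremum onto the edges. On each edge I would use $\partial^2$ in the free variable, or monotonicity in $r$ of $\partial H$ as in the $g_p$ argument. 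The goal is to locate the worst point, expected near $F_1=1/2,F_2=1$ or at a diagonal corner. There I solve $H=0$ numerically for $r\approx0.106$ and show $H$ is decreasing in $r$ beyond it.

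If edge analysis of the cubic is unwieldy, the fallback is different. Lemma~\ref{thm:prob_fid_mono_dec_depolarizing} shows each delayed pair has $F_{\mathrm{succ}}$ at most that of the undelayed pair with the same sources. So only the same-source terms $\rho_i(0)\otimes\rho_i(1)$ can exceed the baseline, and it suffices to bound their excess against the deficit of the three delayed mixed terms, weighted by success probabilities. I would verify that bound on the square for $x\le e^{-0.106}$ by the same boundary-reduction technique.
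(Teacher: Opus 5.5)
Your plan is sound. For $\tilde F_{\mathrm{succ}}$ it is the paper's argument written in the variables $u_i=4F_i-1$:
\begin{itemize}
\item your $G=9(x-1)(u_1+u_2)+5x(u_1^2+u_2^2)+5(x^2+2x-5)u_1u_2$ equals $e^{-2r}h_{\tilde F}$;
\item the anti-diagonal coefficient $5(5-x^2)>0$ is the paper's $80(5e^{2r}-1)F_-$ after rescaling;
\item the edge curvature is $10x>0$;
\item the diagonal corners give $(x-1)(5x+43)\le 0$ and $(x-1)(45x+279)\le 0$;
\item the corner $(3,1)$ gives $15x^2+116x-111$, which is $e^{-2r}h_{\tilde F}(1,1/2,r)$.
\end{itemize}

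The genuine difference is in $p_{\mathrm{succ}}$. The paper runs its rotation, edge-convexity and corner routine on $h_p$. You instead note that the difference is a quadratic form homogeneous in $(u_1,u_2)$ and divide by $u_1u_2>0$. This reduces everything to $\max_{t\in[1/3,3]}(t+1/t)=10/3$ and gives $3x^2+16x-15\le0$, which is exactly the paper's condition $3+16e^r-15e^{2r}\le0$. Your version is shorter and makes the worst case (maximally unequal fidelities) evident with no boundary analysis. However, it relies on homogeneity, which fails for $\tilde F_{\mathrm{succ}}$ (linear terms) and for $F_{\mathrm{succ}}$, so it cannot replace the paper's method there.

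For the normalized fidelity you only outline an argument. Your fallback just regroups $H\le 0$ term by term without simplifying it. It is also unnecessary, because the route you would try first is exactly the paper's, and it closes. Your worry that the convexity reduction may fail for a cubic is unfounded, since the quartic terms in $H$ cancel.

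With your normalization ($72\tilde F$, $18p$) and $u_{1,2}=s\mp d$, the argument runs as follows.
\begin{itemize}
\item One gets $H=H_0(s)+d^2L(s)$ with $L(s)=36(5-x^2)-6s(2+3x-x^2)$.
\item $L$ is affine in $s$ and positive at $s=1$ and $s=3$ for all $x\in(0,1]$, so $H$ is maximized on the edges $u_2=1$ and $u_1=3$.
\item On those edges $H$ is a cubic in the free variable $u$, with second derivatives $6(2+12x-x^2-3xu)>0$ and $18(2+4x-x^2-xu)>0$ respectively, so the maximum is at a corner.
\item The diagonal corners give $30(x-1)(x+11)\le0$ and $162(x-1)(x+11)\le0$.
\item The corner $(3,1)$ gives $24(3x^2+34x-33)$. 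This is increasing in $x$ and vanishes at $r\approx 0.106$. It is proportional to the paper's $e^{-2r}h_F(1,1/2,r)=e^{-2r}(36+408e^r-396e^{2r})$.
\end{itemize}
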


We separate the proof for success probability, fidelity, and success-weighted fidelity, respectively.
\begin{proof}
    \textbf{\textit{Success probability.}} We consider the difference:
    \begin{align}
        p_{\mathrm{succ}}\left(\rho_{\mathrm{eff},n=2}^{(2)}\right) - p_{\mathrm{succ}}\left(F_1,F_2\right) = \frac{e^{-2r}}{108}h_p(F_1,F_2,r),
    \end{align}
    where 
    \begin{align}
        h_p(F_1,F_2,r) = (4F_1-1)(4F_2-1) + 4e^r(2F_1+2F_2-1)^2 - 5e^{2r}(4F_1-1)(4F_2-1).
    \end{align}

    Then, as in the proof of Proposition~\ref{thm:m=2>m=3_with_decoherence}, we use $F_1=(F_+-F_-)/\sqrt{2}$ and $F_2=(F_++F_-)/\sqrt{2}$, and take the partial derivative with respect to $F_-$:
    \begin{align}
        \frac{\partial}{\partial F_-}h_p\left(\frac{F_+-F_-}{\sqrt{2}},\frac{F_++F_-}{\sqrt{2}},r\right) = 16(5e^{2r}-1)F_-,
    \end{align}
    which implies that the maximum value of $h_p(F_1,F_2,r)$ is on the boundary of $[1/2,1]\times[1/2,1]$. Thus we can guarantee its negativity on this 2-d region if we can show that its values on the boundaries are all non-positive. Moreover, given the symmetry of $F_1$ and $F_2$ for $h_p(F_1,F_2,r)$, we only need to consider the $F_2=1/2$ and $F_1=1$ boundaries.

    For $F_2=1/2$, we can examine the second order partial derivative with respect to $F_1$:
    \begin{align}
        \frac{\partial^2}{\partial F_1^2}h_p(F_1,1/2,r) = 32e^r > 0,
    \end{align}
    which means that $h_p(F_1,1/2,r)$ is convex for $r\geq 0$ on $F_1\in[1/2,1]$, so the maximum value must be at either $F_1=1/2$ or $F_1=1$. For $F_1=1$, we can also examine the second order partial derivative with respect to $F_2$:
    \begin{align}
        \frac{\partial^2}{\partial F_2^2}h_p(1,F_2,r) = 32e^r > 0,
    \end{align}
    which again suggests convexity. As a result, the maximal value of $h_p(F_1,F_2,r)$ on $(F_1,F_2)\in[1/2,1]\times[1/2,1]$ must be at $(1/2,1/2)$, or $(1,1)$, or $(1,1/2)$. It is obvious that the value at $(1/2,1/2)$ and $(1,1)$ must be non-positive because they correspond to cases where input state are already identical, we only need to examine:
    \begin{align}
        h_p(1,1/2,r) = 3 + 16e^r - 15e^{2r},
    \end{align}
    which monotonically decreases as $r$ increases and becomes zero at $r\approx 0.206$.
\end{proof}
\begin{proof}
    \textbf{\textit{Fidelity.}} We consider the difference:
    \begin{align}
        F_{\mathrm{succ}}\left(\rho_{\mathrm{eff},n=2}^{(2)}\right) - F_{\mathrm{succ}}\left(F_1,F_2\right) = e^{-r}\frac{h_F(F_1,F_2,r)}{d_F(F_1,F_2,r)},
    \end{align}
    where the numerator is 
    \begin{align}
        h_F(F_1,F_2,r) =& 3e^{2r}(64F_1^2F_2 + 64F_1F_2^2 - 16F_1^2 - 544F_1F_2 - 16F_2^2 + 78F_1 + 78F_2 - 5) \notag\\
        &+ 6e^r(2F_1+2F_2-1)(1 + 26F_1 + 26F_2 - 8F_1^2 + 8F_1F_2 - 8F_2^2)\notag\\
        &+ 3(4F_1-1)(4F_2-1)(7-2F_1-2F_2) ,
    \end{align}
    while the denominator is
    \begin{align}
        d_F(F_1,F_2,r) = 8(5-2F_1-2F_2+8F_1F_2)\left[2(2F_1+2F_2-1)^2 + 4(7-F_1-F_2+4F_1F_2)\cosh r + 27\sinh r\right] >0,
    \end{align}
    for $F_1,F_2\in[1/2,1]$ and $r\geq 0$. Therefore, the positivity of the difference is only determined by $h_F(F_1,F_2,r)$. We then perform the same coordinate transformation as above, and take the partial derivative with respect to $F_-$:
    \begin{align}
        \frac{\partial}{\partial F_-}h_F\left(\frac{F_+-F_-}{\sqrt{2}},\frac{F_++F_-}{\sqrt{2}},r\right) = 48\left[(32-4\sqrt{2}F_+)e^{2r} + (3-6\sqrt{2}F_+)e^r + (2\sqrt{2}F_+-7)\right]F_-,
    \end{align}
    where the bracket term can be shown to be monotonically increasing as $r\geq 0$ increases for $F_1,F_2\in[1/2,1]$. Then we examine the bracket term at $r=0$: $28-8\sqrt{2}F_+> 0$ for $F_1,F_2\in[1/2,1]$. Therefore, the bracket term is always positive in the parameter space of our interest, which means that similar to the discussion on success probability, we only need to focus on the maximum value on the $F_2=1/2$ and $F_1=1$ boundaries.

    For $F_2=1/2$, we examine the second order partial derivative with respect to $F_1$:
    \begin{align}
        \frac{\partial^2}{\partial F_1^2}h_F(F_1,1/2,r) = 48e^r\left[\cosh r + 3(5-4F_1+\sinh r)\right] > 0,
    \end{align}
    for $F_1\in[1/2,1]$ and $r\geq 0$, which suggests the convexity of $h_F(F_1,1/2,r)$ for $r\geq 0$ on $F_1\in[1/2,1]$. For $F_1=1$, we examine the second order partial derivative with respect to $F_2$:
    \begin{align}
        \frac{\partial^2}{\partial F_2^2}h_F(1,F_2,r) = 144e^r(5 - 4F_2 + \cosh r + 3\sinh r) > 0,
    \end{align}
    for $F_2\in[1/2,1]$ and $r\geq 0$, which suggests the convexity of $h_F(1,F_2,r)$ for $r\geq 0$ on $F_2\in[1/2,1]$. According to same argument as for success probability, we only need to consider:
    \begin{align}
        h_F(1,1/2,r) = 36 + 408e^r - 396e^{2r},
    \end{align}
    which monotonically decreases as $r$ increases and becomes zero at $r\approx 0.106$.
\end{proof}
\begin{proof}
    \textbf{\textit{Success-weighted fidelity.}} We consider the difference:
    \begin{align}
        \tilde{F}_{\mathrm{succ}}\left(\rho_{\mathrm{eff},n=2}^{(2)}\right) - \tilde{F}_{\mathrm{succ}}\left(F_1,F_2\right) = \frac{e^{-2r}}{432}h_{\tilde{F}}(F_1,F_2,r),
    \end{align}
    where 
    \begin{align}
        h_{\tilde{F}}(F_1,F_2,r) =& e^{2r}(64F_1 + 64F_2 - 400F_1F_2 - 7) \notag\\
        &+ 2e^r(2F_1 + 2F_2 - 1)(20F_1 + 20F_2 - 1) + 5(4F_1-1)(4F_2-1).
    \end{align}
    Then we perform the same coordinate transformation as above, and take the partial derivative with respect to $F_-$:
    \begin{align}
        \frac{\partial}{\partial F_-}h_{\tilde{F}}\left(\frac{F_+-F_-}{\sqrt{2}},\frac{F_++F_-}{\sqrt{2}},r\right) = 80(5e^{2r}-1)F_-, 
    \end{align}
    which allows us to focus on the maximum value on the $F_2=1/2$ and $F_1=1$ boundaries.

    For $F_2=1/2$, we examine the second order partial derivative with respect to $F_1$:
    \begin{align}
        \frac{\partial^2}{\partial F_1^2}h_{\tilde{F}}(F_1,1/2,r) = 160e^r > 0,
    \end{align}
    which guarantees the convexity of $h_{\tilde{F}}(F_1,1/2,r)$. For $F_1=1$, we examine the second order partial derivative with respect to $F_2$:
    \begin{align}
        \frac{\partial^2}{\partial F_2^2}h_{\tilde{F}}(1,F_2,r) = 160e^r > 0,
    \end{align}
    which guarantees the convexity of $h_{\tilde{F}}(1,F_2,r)$. Again, we only need to consider:
    \begin{align}
        h_{\tilde{F}}(1,1/2,r) = 15 + 116e^r - 111e^{2r},
    \end{align}
    which monotonically decreases as $r$ increases and becomes zero at $r\approx 0.150$.
\end{proof}

\section{Comment on sources which distribute random states} \label{sec:source_ensemble}
We previously assumed that each entanglement source will produce a fixed density matrix each time. We now allow the sources to produce random states. This models the case in which source $i$ does not produce the same density matrix in every round, but instead produces a density matrix drawn from some probability distribution.

Consider $n$ sources and $m$ rounds of entanglement distribution. Let $R_i^{(t)}$ be the random density matrix distributed by source $i\in\{1,\dots,n\}$ in round $t\in\{1,\dots,m\}$. Assume $R_i^{(t)}$ takes values in a set of density matrices $\Omega_i^{(t)}$, with probability measure $\mu_i^{(t)}$. Then we can express the expected state as
\begin{align}
    \bar{\rho}_i^{(t)} = \E\left[R_i^{(t)}\right] = \int_{\Omega_i^{(t)}} \rho\,d\mu_i^{(t)}(\rho).
\end{align}
Since the set of density matrices is convex, $\bar{\rho}_i^{(t)}$ is again a valid density matrix.

For a fixed realization $r=\{r_i^{(t)}:1\leq i\leq n,\ 1\leq t\leq m\}$, where $r_i^{(t)}$ is a realization of the random density matrix $R_i^{(t)}$, the $mn$ distributed states are randomly shuffled and divided into $m$ packages of $n$ states each. Let $\rho_{\mathrm{eff}}^{(m)}(r)$ denote the expected input state of one package, where the expectation here is only over the random shuffle, with the realized states $r_i^{(t)}$
held fixed.

We may label each distributed state by a tuple $\alpha=(i,t)\in \{1,\dots,n\}\times\{1,\dots,m\}$ so that we can write $r_\alpha=r_i^{(t)}$. A package consists of $n$ distinct labels
$\alpha_1,\dots,\alpha_n$, sampled without replacement from the $mn$
available labels. Therefore,
\begin{align}\label{eq:heterogeneous-fixed-realization}
    \rho_{\mathrm{eff}}^{(m)}(r) = \frac{1}{(mn)_n} \sum_{\substack{\alpha_1,\dots,\alpha_n\\ \text{all distinct}}} r_{\alpha_1}\otimes\dots\otimes r_{\alpha_n},
\end{align}
where $(mn)_n=mn(mn-1)\dots(mn-n+1)$ is the falling factorial. This is the direct analogue of the fixed-state formula, but it allows the $m$ outputs of a given source to be different. Note that the sum above is over ordered $(\alpha_1,\dots,\alpha_n)$. 

Now we average over the randomness of the sources. The total effective one-package state is
\begin{align}\label{eq:general-random-source-formula}
    \rho_{\mathrm{eff,tot}}^{(m)} = \E_R \left[\rho_{\mathrm{eff}}^{(m)}(R)\right] = \frac{1}{(mn)_n} \sum_{\substack{\alpha_1,\dots,\alpha_n\\ \text{all distinct}}} \E \left[ R_{\alpha_1}\otimes\dots\otimes R_{\alpha_n} \right].
\end{align}
Eq.~\eqref{eq:general-random-source-formula} is completely general. It shows that the effective state depends on joint moments of the random source outputs.

Suppose now that the random states $\{R_i^{(t)}:1\leq i\leq n,\ 1\leq t\leq m\}$ are mutually independent. Otherwise, we can also more weakly assume the tensor moment factorization $\E \left[ R_{\alpha_1}\otimes\dots\otimes R_{\alpha_n} \right] = \E[R_{\alpha_1}] \otimes\dots\otimes \E[R_{\alpha_n}]$. whenever $\alpha_1,\dots,\alpha_n$ are distinct, which is indeed what we will encounter under sampling without replacement. Then Eq.~\eqref{eq:general-random-source-formula} becomes
\begin{align}\label{eq:random-sources-replaced-by-means}
    \rho_{\mathrm{eff,tot}}^{(m)} &= \frac{1}{(mn)_n} \sum_{\substack{\alpha_1,\dots,\alpha_n\\ \text{all distinct}}} \bar{\rho}_{\alpha_1} \otimes \dots \otimes \bar{\rho}_{\alpha_n} \notag\\
    &= \rho_{\mathrm{eff}}^{(m)} \left( \{\bar{\rho}_i^{(t)}:1\leq i\leq n,\ 1\leq t\leq m\} \right).
\end{align}
Thus under independence assumption the random source outputs may be replaced by their mean density matrices before performing the shuffle average.

If the mean output of source $i$ is also the same in every round, i.e. $\bar{\rho}_i^{(1)} = \bar{\rho}_i^{(2)} = \dots = \bar{\rho}_i^{(m)} = \bar{\rho}_i$. then Eq.~\eqref{eq:random-sources-replaced-by-means} reduces exactly to the previous fixed-source formula.

\subsection{Application to a linear EPP map}
Consider the successful branch of the entanglement purification protocol as a linear, completely positive, trace-non-increasing (CPTNI) map $\mathcal{E}_{\mathrm{succ}}$. The un-normalized successful output of the EPP is linear in the input state, so we have
\begin{align}
    \rho_{\mathrm{succ}}^{\mathrm{un}} = \E \left[ \mathcal{E}_{\mathrm{succ}}(\rho_{\mathrm{package}}) \right] = \mathcal{E}_{\mathrm{succ}}(\E [\rho_{\mathrm{package}}]) = \mathcal{E}_{\mathrm{succ}}(\rho_{\mathrm{eff,tot}}^{(m)}),
\end{align}
where $\rho_{\mathrm{succ}}^{\mathrm{un}}$ denotes the unnormalized successful output state, $\rho_{\mathrm{package}}$ denotes the single-package state, and the expectation value is taken over both random sources and the random shuffle. The average success probability is
\begin{align}
    p_{\mathrm{succ}} = \Tr[\rho_{\mathrm{succ}}^{\mathrm{un}}] =  \Tr \left[\mathcal{E}_{\mathrm{succ}}(\rho_{\mathrm{eff,tot}}^{(m)}) \right].
\end{align}
Conditioned on success in the standard operational sense, the normalized average successful output state is
\begin{align}
    \rho_{\mathrm{succ}} = \frac{ \mathcal{E}_{\mathrm{succ}}(\rho_{\mathrm{eff,tot}}^{(m)}) }{ \Tr \left[ \mathcal{E}_{\mathrm{succ}}(\rho_{\mathrm{eff,tot}}^{(m)}) \right] },
\end{align}
provided that $p_{\mathrm{succ}}>0$.

\subsection{Including idling decoherence}
Idling decoherence can be incorporated in the same way. Suppose the raw state produced by source $i$ in round $t$ is $R_{i,\mathrm{raw}}^{(t)}$, and before the EPP it undergoes an idling decoherence channel $\mathcal E_i^{(t)}$. Then we have $R_i^{(t)} = \mathcal{E}_i^{(t)} \left(R_{i,\mathrm{raw}}^{(t)}\right)$.
By linearity of quantum channels we have
\begin{align}
    \bar{\rho}_i^{(t)} = \E[R_i^{(t)}] = \E\left[\mathcal{E}_i^{(t)} \left(R_{i,\mathrm{raw}}^{(t)}\right)\right] = \mathcal E_i^{(t)} \left( \E[R_{i,\mathrm{raw}}^{(t)}] \right).
\end{align}
Thus decoherence can be applied directly to the expected raw output state. For example, if the raw probability distribution is the same in every round of entanglement distribution, and the state from round $t$ undergoes $q_t$ identical idling steps where each idling step is described by $\mathcal{E}_i$, then we have
\begin{align}
    \bar{\rho}_i^{(t)} = \left(\mathcal{E}_i\right)^{q_t} \left(\bar{\rho}_i^{\mathrm{raw}}\right),
\end{align}
where $\left(\mathcal{E}_i\right)^{q_t}$ denotes the channel as a composition of $q_t$ single-step channel $\mathcal{E}_i$.

\end{document}